\newcommand{\mypar}[1]{\smallskip\noindent{\bf #1}}
\newdimen\mydisplayskip
\newenvironment{smallequation*}
{\par\nobreak\vskip\mydisplayskip\noindent\bgroup\small\csname equation*\endcsname}{\csname endequation*\endcsname\egroup}
\newenvironment{smallalign*}
{\par\nobreak\noindent\bgroup\small\csname align*\endcsname}{\csname endalign*\endcsname\egroup}
\newcommand\mydots{\hbox to 1em{.\hss.\hss.}}
\definecolor{brickred}{rgb}{0.8, 0.25, 0.33}
\newcommand{\phl}[1]{{\color{brickred}{#1}}} 
\newcommand{\thl}[1]{{{#1}}} 
\newcommand{\ghl}[1]{{\phl{#1}}} 
\newcommand{\nohl}[1]{{\color{black}{#1}}}
\newcommand{\substone}[2]{\{#1/#2\}}
\newcommand{\substtwo}[4]{\{#1/#2, #3/#4\}}
\newcommand{\m}[1]{\mathsf{#1}}
\newcommand{\Select}[1][]{\textcolor{blue}{\mathcal
    S}_{\textcolor{blue}{#1}}}
\newcommand{\Left}{\Select[l]}
\newcommand{\Right}{\Select[r]}
\newcommand{\Query}{{\color{blue}{\mathcal Q}}}
\newcommand{\Star}{{\color{blue}{\br{\ast}}}}
\newcommand{\Symbol}{{\color{blue}{\mathcal{X}}}}
\newcommand{\br}[1]{[#1]}
\newcommand{\dbr}[1]{[\![#1]\!]}
\newcommand{\til}[1]{\tilde{#1}}
\newcommand{\prf}[1]{\mathcal{#1}}
\newcommand{\DD}{\prf D}
\newcommand{\EE}{\prf E}
\newcommand{\FF}{\prf F}
\newcommand{\GG}{\prf G}
\newcommand{\bang}{\mathord{!}}
\newcommand{\query}{\mathord{?}}
\newcommand{\sendbase}[2]{#1. #2}
\newcommand{\closebase}[1]{#1}
\newcommand{\srvbase}[2]{\bang #1.#2}
\newcommand{\Casebase}[2]{{#1}.\m{case}(#2)}
\newcommand{\wait}[2]{\phl{#1().#2}}
\newcommand{\close}[1]{\phl{\closebase{{#1}[]}}}
\newcommand{\send}[4]{\phl{\sendbase{{#1}[#2\triangleright #3]}{#4}}}
\newcommand{\sendm}[5]{}
\newcommand{\recv}[3]{\phl{{#1}(#2).#3}}
\newcommand{\Case}[3]{\phl{\Casebase{#1}{#2,#3}}}
\newcommand{\inl}[2]{\phl{{#1}[\m{inl}].#2}}
\newcommand{\inlr}[2]{{#1}[\m{in}\#].#2}
\newcommand{\inr}[2]{\phl{{#1}[\m{inr}].#2}}
\newcommand{\srv}[3]{\phl{\srvbase{#1(#2)}{#3}}}
\newcommand{\client}[3]{\phl{\query #1[#2].#3}}
\newcommand{\pp}{{\ \;\boldsymbol{|}\ \;}}
\newcommand{\res}[1]{(\boldsymbol\nu #1)\,}
\newcommand{\fwd}[2]{\phl{#1\leftrightarrow #2}}
\newcommand{\pairQ}[2] { \phl{#1}:\thl{#2}}
\newcommand{\triCtx}[3] {\Query\dbr{#1}\phl{#2}:\thl{#3}}
\newcommand{\ltriCtx}[3] {\Left\dbr{#1}\phl{#2}:\thl{#3}}
\newcommand{\rtriCtx}[3] {\Right\dbr{#1}\phl{#2}:\thl{#3}}
\newcommand{\quadQ}[4] { [\phl{#1}:\thl{#2}]\phl{#3}:\thl{#4}}
\newcommand{\cllseq}[2]{\phl{#1} \vdash_{\mathsf{CLL}} {#2}}
\newcommand{\syncseqcut}[2]{\phl{#1} \vdash_{\mbox{\tiny
      cut}} {#2}}
\newcommand{\syncseq}[2]{\phl{#1} \vdash {#2} }
\newcommand{\syncseqbis}[2]{\phl{#1} \vdash_{{p_1}} {#2}}
\newcommand{\syncseqtris}[2]{\phl{#1}
  \vdash_{
    {p_2}} {#2}} \newcommand{\cohseq}[2]{\phl{#1} \vDash {#2}}
\newcommand{\seq}{\vdash} \newcommand{\gseq}{\vDash}
\newcommand{\pair}[2]{{#1\!:}{#2}} 
\newcommand{\dual}[1]{#1^{\bot}} 
\newcommand{\rep}[1]{\ulcorner {#1} \urcorner} 
\newcommand{\coepp}[1]{\nohl{\{\![ }{\phl{#1}} \nohl{]\!\}}} 
\newcommand{\coextract}[1]{\nohl{\{\!\! \{\!\!\{ }{\phl{#1}} \nohl{\}\!\!\}\!\!\}}} 
\newcommand{\tensor}{\otimes}
\newcommand{\parr}{\mathbin{\bindnasrepma}}
\newcommand{\with}{\mathbin{\binampersand}}
\newcommand{\one}{\mathbf{1}}
\newcommand{\swapG}
{\simeq^{\m g}}
\newcommand{\tcost}{\mathbf{cost}}
\newcommand{\tname}{\mathbf{name}}
\newcommand{\taddr}{\mathbf{addr}}
\newcommand{\fn}{\m{fn}}
\newcommand{\defeq}{\stackrel{\mbox{\scriptsize def}}{=}}
\newcommand{\reducesto}{\longrightarrow}
\newcommand{\Reducesto}{\Longrightarrow}
\newcommand{\ba}{\begin{array}}
\newcommand{\ea}{\end{array}}
\newenvironment{equations}{\[\ba{@{}r@{~}c@{~}l@{}}}{\ea\]\ignorespacesafterend}
\newcommand{\gto}{\to}
\newcommand{\gfromto}[2]{#1 \gto #2}
\newcommand{\parrtensorCS}[3]{{\gfromto{#1}{#2}}(#3).}
\newcommand{\parrtensor}[4]{\gfromto{#1}{#2}(#3).#4}
\newcommand{\botone}[2]{\closebase{\gfromto{#1}{#2}}}
\newcommand{\withplus}[4]{\Case{\gfromto{#1}{#2}}{#3}{#4}}
\newcommand{\bangquery}[3]{\bang \gfromto{#1}{#2}(#3)}
\newcommand{\cpaxiom}{\localaxiom}
\newcommand{\localaxiom}[4]{\nohl{{#1} \gto {#3}^{#4}}}
\newcommand{\globalaxiom}[4]{\nohl{\fwd{{#1}} {#3}}}
\newcommand{\cpres}[4]{{\res{{#1}^{}{#3}}}}
\newcommand{\Did}[1]{\textsc{#1}}
\begin{document}
\title{Synchronous Forwarders
}
%
%
\author{Marco Carbone\inst{1}
\and
Sonia Marin \inst{2}
\and
Carsten Sch\" urmann\inst{1}
}
\authorrunning{M. Carbone et al.}
%
\institute{IT-University of Copenhagen, Denmark \\
\email{\{carbonem,carsten\}@itu.dk}	
\and
University College London, UK \\
\email{s.marin@ucl.ac.uk}\\
}
\maketitle              
\begin{abstract}
  Session types are types for specifying protocols that processes must
  follow when communicating with each other. Session types are in a
  propositions-as-types correspondence with linear logic. Previous
  work has shown that a multiparty session type, a generalisation of
  session types to protocols of two or more parties, can be modelled
  as a proof of coherence, a generalisation of linear logic
  duality. And, protocols expressed as coherence can be simulated by
  arbiters, processes that act as a middleware by forwarding messages
  according to the given protocol.

  In this paper, we generalise the concept of arbiter to that of
  synchronous forwarder, that is a processes that implements the
  behaviour of an arbiter in several different ways. In a
  propositions-as-types fashion, synchronous forwarders form a logic
  equipped with cut elimination which is a special restriction of
  classical linear logic. Our main result shows that synchronous
  forwarders are a characterisation of coherence, i.e., coherence
  proofs can be transformed into synchronous forwarders and,
  viceversa, every synchronous forwarder corresponds to a coherence
  proofs.
  \keywords{Forwarders \and Session Types \and Linear Logic.}
\end{abstract}

\section{Introduction}\label{sec:intro}
A concurrent system is more than a sum of processes. It also includes
the fabric that determines how processes are tied together.
%
Session types, originally proposed by Honda et al.~\cite{HVK98}, are
type annotations that ascribe protocols to processes in a concurrent
system and, as a fabric, determine how they behave when communicating
with each other. Such type annotations are useful for various
reasons. First, they serve as communication blueprints for the entire
system and give programmers clear guidance on how to implement
communication patterns at each endpoint (process or service).  Second,
they make implementations of concurrent systems safer, since
well-typedness entails basic safety properties of programs such as
{\em lack of communication errors} (``if the protocol says I should
receive an integer, I will never receive a boolean''), {\em session
  fidelity} (``my programs follow the protocol specification
patterns''), and {\em in-session deadlock freedom} (``the system never
gets stuck by running a protocol'').
%
Intuitively, they make sure that the processes are {\em compatible}
and that they exchange messages in the prescribed way for the
concurrent system to work correctly.
For example, by preventing messages from being duplicated, as
superfluous messages would not be accounted for, and by preventing
messages from getting lost, otherwise a process might get stuck,
awaiting a message.

In the case of \emph{binary sessions types}, the version of session
types that deals only with protocols between two
parties, 
compatibility means for 
type annotations to be dual to one another: the send action of one
party must be matched by a corresponding receive action of the other
party, 
and vice versa.  Curiously, binary session types find their logical
foundations in linear logic, as identified by Caires and
Pfenning~\cite{CP10,CP16} and later also by Wadler~\cite{W12,W14}.
They have shown that session types correspond to linear logic
propositions, processes to proofs, reductions in the operational
semantics to cut reductions in linear logic proofs, and compatibility
to the logical notion of duality for linear formulas.  Duality, thus,
defines the fabric for how two processes communicate in an idealised
world, while at the same time abstracting away from all practical
details, such as message delay, message order, or message buffering.

The situation is not as direct for \emph{multiparty session
  types}~\cite{HYC08,HYC16}, 
type annotations for protocols with
two or more participants.
Carbone et al.~\cite{CMSY15,CLMSW16} extended Wadler's embedding of
binary session types into classical linear logic (CLL) to the
multiparty setting, by generalising duality to the notion of {\em
  coherence}.
They observed that the in-between fabric holds the very key to
understanding multiparty session types:
when forcing the type annotations to be {\em coherent}, one ensures
that sent messages will eventually be collected.
%
Coherence as a deductive system allows one to derive exactly these
compatible jugements, while proofs correspond precisely to multiparty
protocol specifications. 
A key result is that coherence proofs can be encoded as well-typed (as
proofs in CLL) processes, called {\em arbiters}, which means that the
fabric can actually be formally seen as a process-in-the-middle.
However, no precise logical characterisation of what constitutes
arbiters was given.
In this paper, we continue this line of research and define a
subsystem of processes, called \emph{synchronous forwarders}, that
provides one possible such characterisation that guarantees coherence.
%

As the name already suggests, a forwarder is a process that forwards
messages, choices, and services from one endpoint to another according
to the protocol specification.
Intuitively, similarly to an arbiter, a forwarder process mimics the
fabric by capturing the message flow.  However, when data-dependencies
allow, forwarders could, in theory, non-deterministically choose to
receive messages from different endpoints, and then forward such
messages at a later point. Or, they can also decide to buffer a
certain number of messages from a given receiver.  In any case,
they
retransmit messages only after receiving them,
without interpreting, modifying, or computing with them.

In this work, synchronous forwarders support buffers of size 1 -- only
one message can be stored for each endpoint at a given time. This
preserves the order of messages from the same sender, i.e., after
receiving a message from one party, the forwarder blocks the
connection to such party until the message has been delivered to its
destination.
%
Synchronous
forwarders could be used to explain communication patterns as they
occur in practice, such as message routing, proxy services, and
runtime monitors for message flows.

The meta-theoretic study of synchronous forwarders allows us to
conclude that there is a proof-as-processes correspondence between
synchronous forwarders and coherence proofs. We show that synchronous
forwarders can be safely composed through cut elimination, which
allows us to combine the fabric between two concurrent systems
(figuring arbitrary many processes). With respect to coherence, we
prove their {\em soundness}, namely that any coherence proof can be
emulated by a synchronous forwarder simulating the actions of the
fabric, and their {\em completeness}, meaning that every synchronous
forwarder guarantees coherence. In particular, arbiters are special
instances of forwarders.


\smallskip

\mypar{Outline and key contributions.}  The key contributions of this
paper include
\begin{itemize}
\item a logical characterisation of \emph{synchronous forwarders}
  (\S~\ref{sec:sync_system});
\item a reductive operational semantics based on cut-elimination
  (\S~\ref{sec:cut-elim});
\item a tight correspondence between coherence and synchronous
  forwarders (\S~\ref{sec:coherencesf}).
\end{itemize}
Additionally, \S~\ref{sec:prelim} recaps the definitions of coherence,
processes, and arbiters, while \S~\ref{sec:related} discusses related
and future work. Concluding remarks are in
\S~\ref{sec:conclusions}.
\section{Coherence, Processes, and Arbiters}\label{sec:prelim}
To prepare the ground for formally defining synchronous forwarders, we
use this section to introduce the basic ingredients: the formal notion
of {\em coherence}~\cite{CLMSW16}, the syntax of {\em
  processes}~\cite{W12,W14} used for synchronous forwarders, and the
encoding of coherence proofs into special processes known as {\em
  arbiters}~\cite{CLMSW16}.


\subsection{Coherence}
\emph{Coherence}~\cite{HYC16,CMSY15,CLMSW16}, which provides the
formal foundation for our results,
is a generalisation of the notion of {\em duality} from
CLL~\cite{G87}. 
Duality is used by binary session types when composing (two) processes
through a communication channel: the two ends of the channel are
compatible whenever their types are {\em dual} to each other, i.e.,
every output  is matched by an input and viceversa. 
%
Coherence can be understood as a generalisation of duality, that is, a
criterion that decides if two or more parties can agree on who sends
what to whom. Clearly, as there can be more than two parties, it is
impossible to base this criterion on duality alone.
Intuitively, we say that a set of processes are coherent, if each send
can be linked to an available receive from another party.


\begin{example}[The 2-Buyer Protocol]\label{ex:2buyer}
  As a running example throughout this paper, we use the classic
  \emph{2-buyer protocol}~\cite{HYC08,HYC16}, where two buyers
  intend to buy a book jointly from a seller.
  The first buyer
  sends the title of the book 
  to the seller,
  who, in turn, sends a quote to both buyers.
  Then, 
  the first buyer decides how much she wishes to contribute and
  informs the second buyer,
  who either pays the rest or cancels the transaction by informing the
  seller.

  The three participants 
  are connected through endpoints $\phl{b_1}$, $\phl{b_2}$, and
  $\phl s$ respectively. Each endpoint must be used according to its
  respective session type annotation, expressed as a CLL proposition:
  \begin{smallequation*}\label{eq:cotypes}
      \begin{array}{c}
        {\phl{b_1}} : \tname \tensor \tcost^\perp \parr \tcost \tensor
        \one
        \quad
        \phl{b_2} : \tcost^\perp \parr \tcost^\perp \parr ((\taddr \tensor \one) \oplus \one)
        \\[1mm]
        \phl{s} : \tname^\perp \parr \tcost \tensor \tcost
        \tensor ((\taddr^\perp \parr \bot) \with \bot)
      \end{array}
  \end{smallequation*}
  The typing above gives a precise description of how each endpoint
  has to act. For example,
  $\tname \tensor \tcost^\perp \parr \tcost \tensor \one$ says that
  buyer $\phl{b_1}$ must first send a value of type $\tname$ (the
  title of the book), then receive a value of type $\tcost$ (the price
  of the book), then send a value of type
  $\tcost$ (the amount of money she wishes to contribute), and
  finally terminate.

  Coherence will determine whether the three endpoints above are
  compatible by establishing for each output which endpoint should
  receive it. 
  For example, coherence will say that the first output of type
  $\tname$ at endpoint $\phl{b_1}$ must be received by the input of
  type $\tname^\perp$ at $\phl s$. Then, the output from $\phl s$
  of type $\tcost$ can be paired with the input of type $\tcost^\perp$
  at either $\phl{b_1}$ and $\phl{b_2}$. And so on, precisely describing
  what the execution of the 2-buyer protocol should be.  \qed
\end{example}

\mypar{Types}. Following the propositions-as-types approach, {\em
  types}, taken to be propositions (formulas) of CLL, are associated
to names, denoting the way an endpoint must be used at runtime. Their
formal syntax is given as:
\begin{smallequation*}
  \thl{A,B} \coloncolonequals  \quad
                            \thl{a}
                            \: \mid\: \thl{a^\perp}
                            \: \mid\: \thl{1}
                            \: \mid\: \thl{\bot}
                            \: \mid\: \thl{(A\tensor B)}
                            \: \mid\: \thl{(A\parr B)}
                            \: \mid\: \thl{(A\oplus B)}
                            \: \mid\: \thl{(A\with B)}
                            \: \mid\: \thl{!A}
                            \: \mid\: \thl{?A} 
\end{smallequation*}
\noindent We briefly comment on their interpretation. There is a predefined,
finite set of atoms $a$ and their duals $a^\perp$, e.g., $\tname$ and
$\tname^\perp$.
Types $\one$ and
$\perp$ denote an endpoint that must be closed by a last
synchronisation. A type $A \tensor
B$ is assigned to an endpoint that outputs a message of type
$A$ and then is used as $B$. Similarly, an endpoint of type $A\parr
B$, receives a message of type $A$ and then continues as
$B$. Types $A\oplus B$ and $A\with
B$ denote branching. The former is the type of an endpoint that may
select to go left or right and continues as $A$ or
$B$, respectively. The latter is the type of an endpoint that offers
two choices (left or right) and then, based on such choice, continues
as $A$ or $B$. Finally,
$!A$ types an endpoint offering a service of type $A$, while
$?A$ types an endpoint of a client invoking some service and behaving
as
$A$. Operators can be grouped in pairs of duals that reflect the
input-output duality. As a consequence, standard duality
$(\cdot)^\perp$ on types is inductively defined as:
\begin{smallequation*}
  \dual{(\dual{a})} = a
  \quad \dual\one=\perp
  \quad \dual{(A\tensor B)} = \dual A\parr\dual B
  \quad \dual {(A\oplus B)}=\dual A\with\dual B
  \quad \dual{(!A)} = ?\dual A
\end{smallequation*}
In the remainder, for any binary operator
$\oslash,\odot\in\{\tensor,\parr,\oplus,\with\}$, we interpret
$A\oslash B \odot C$ as $A \oslash (B \odot C)$. Moreover, both
$?$ and $!$ have higher priority than $\oslash$.


\mypar{Global Types}. Following the standard multiparty session types
approach~\cite{HYC08,HYC16}, we introduce the syntax of {\em global
types}~\cite{CLMSW16}, the language for expressing the protocols that
processes must follow when communicating. This is done by the
following grammar, starting with a set  $\phl{\mathcal N = \{x,y,z,\ldots\}}$ 
of names, and used for denoting communication endpoints.\footnote{
A list of endpoints $\phl{(x_1\ldots x_n)}$ can be abbreviated as $\phl{\til x}$.}
\def\bnfas{\mathrel{::=}} \def\bnfalt{\mid}
\begin{smallequation*}
  \begin{array}{rrl}
    \phl G, \phl H &\bnfas & \
                             \phl{\botone{\til x}{y}}
                             \ \bnfalt\ \phl{\parrtensor{\til x}{y}{G}{H}}
                             \ \bnfalt\ \phl{\withplus{x}{\til y}{G}{H}} 
                   \ \bnfalt  \phl{\bangquery{x}{\til y}{G}}
                                \ \bnfalt\ \phl{\globalaxiom{x}{A}{y}{\dual{A}}}
  \end{array}
\end{smallequation*}
A global type is an ``Alice-Bob''-notation for expressing the sequence
of interactions that a session (protocol) must follow. The term
$\phl{\botone{\til x}{y}}$ expresses a synchronisation between
endpoints $\phl{x_1\ldots x_n}$ and endpoint $\phl y$, which gathers
all the synchronisation messages from $\phl{\til x}$. Global type
$\phl{\parrtensor{\til x}{y}{G}{H}}$ is also a gathering operation
between $\phl{\til x}$ and $\phl y$, but the processes communicating
over endpoints $\phl{\til x}$ and $\phl{y}$ spawn together a new
session with global type $\phl G$ before continuing the current one
with protocol $\phl H$.  In the term
$\phl{\withplus{x}{\til y}{G}{H}}$, endpoint $\phl x$ broadcasts a
choice to endpoints $\phl{\til y}$: as a result, the session will
follow either protocol $\phl G$ or $\phl H$. The protocol expressed by
$\phl{\bangquery{x}{\til y}{G}}$ denotes the composition of a process
providing a service that must be invoked through $\phl{\til y}$. The
term $\phl{\globalaxiom{x}{A}{y}{\dual{A}}}$ is just a simple
forwarder; it connects two endpoints related via standard duality.

\mypar{Contexts and Judgements.} {\em Coherence}, denoted by $\gseq$,
is defined by the judgement $\phl G\gseq\Delta$.  The context $\Delta$
contains the types of the endpoints we wish to compose.  A coherence
proof is used to establish if the types in $\Delta$ are compatible,
i.e., when the corresponding processes are composed, they will
interact (according to protocol $\phl G$) without raising an error.
Formally, \emph{basic contexts} are sets $\Delta$ of propositions
labelled by endpoints:
$ \Delta \Coloneqq \cdot\ \mid\ \Delta, \pairQ xA $,
where $\pairQ xA$ states that endpoint $\phl x$ has type $A$. Each
endpoint occurs at most once.\footnote{ We use $\query\Delta$ as
  shorthand for any
	$\phl{x_1}:\thl{\query A_1}, \ldots, \phl{x_n}:\thl{\query A_n}$
	and 
	use $\Delta$ 
	for the corresponding set
	$\phl{x_1}:\thl{A_1}, \ldots, \phl{x_n}:\thl{A_n}$.  }

      Global types are the proof terms for coherence, which is defined
      in Fig.~\ref{fig:coherence}.
\begin{figure}[t]
  \begin{displaymath}\small
    \begin{array}{c}
      \infer[\textsc{Axiom}]
      {\phl{\globalaxiom{x}{A}{y}{\dual{A}}} \gseq \phl x:\thl A, \phl y:\thl{\dual{A}}}
      {}
      \qquad
    
      \infer[\one\bot]
      {\phl{\botone{\til x}{y}}
      \gseq \{\phl {x_i}:\thl{\one}\}_i, \phl{y}:\thl{\bot}}
      { }
      \\[2ex]
    
      \infer[\tensor\parr]
      {\phl{\parrtensor{\til x}{y}{G}{H}}
      \gseq \thl \Delta,  \{\phl{x_i} : \thl{A_i \tensor B_i}\}_i, \phl{y}:\thl{C \parr D}}
      { \phl {G} \gseq \thl \{\phl{x_i}:\thl{A_i}\}_i, \phl y:\thl C
      & \phl {H} \gseq \thl {\Delta}, \{\phl{x_i}:\thl{B_i}\}_i, \phl y:\thl D}
      \\[1ex]
    
      \infer[\oplus\with]
      {\phl{\withplus{x}{\til y}{G}{H}}
      \gseq \thl{\Delta}, \phl x:\thl{A \oplus B}, \{\phl{y_i}:\thl{C_i \with D_i}\}_i}
      { \phl G  \gseq  \Delta, \phl x:\thl A, \{\phl {y_i}:\thl{C_i}\}_i
      & \phl H \gseq  \Delta, \phl x:\thl B, \{\phl {y_i}:\thl{D_i}\}_i}
      \\[1ex]
    
      \infer[\query\bang]
      {\phl{\bangquery{x}{\til y}{G}} \gseq \phl x:\thl{\query A}, \{\phl{y_i}:\thl{\bang B_i}\}_i}
      {\phl G \gseq \phl x:\thl A, \{\phl{y_i}:\thl{B_i}\}_i}
    \end{array}
  \end{displaymath}
  \caption{Coherence} \label{fig:coherence}
\end{figure}
The rules reflect the concept of a generalisation of
duality~\cite{CLMSW16}: dual operators can be matched at different
endpoints, establishing how communications must be done between
them. 
The axiom rule is identical to that of CLL. Rule $\one\bot$ says that
many processes willing to close a session 
are compatible with a single process waiting for them; as a global
type we have the term $\ghl{\botone{\til x}{y}}$.
For multiplicatives, rule $\tensor\parr$ implements gathering: many
processes are outputting an endpoint of type $A_i$, and another
process is gathering such endpoints and establishing a new session;
the corresponding global type is $\ghl{\parrtensor{\til
    x}{y}{G}{H}}$. Rule $\oplus\with$ is for branching where the
global type $\ghl{\withplus{x}{\til y}{G}{H}}$ indicates that a
process with endpoint $\phl x$ decides which branch to take and
communicates that to $\phl{\tilde y}$. Finally, in the exponential rule $!?$
some client with endpoint $\phl x$ invokes many services at $\phl{\til y}$.

\begin{example} \label{ex:2bu} Returning to Example~\ref{ex:2buyer},
  we can formally attest that the three endpoints $\phl{b_1}$,
  $\phl{b_2}$, and $\phl{s}$ are coherent and can be
  composed.
  %
  Overall, a coherence proof can be summarized as a global
  type.
  In our example:
  \begin{smallequation*}
    \begin{array}{l}
      \phl{ \parrtensorCS{\phl{b_1}}{\phl{s}}{
      \fwd{n}{n'}}}\quad
     \phl{ \parrtensorCS{\phl{s}}{\phl{b_1}}{
      \fwd{x_1}{x_1'}}}\quad
     \phl{ \parrtensorCS{\phl{s}}{\phl{b_2}}{
      \fwd{x_2}{x_2'}}}\quad
      \phl{\parrtensorCS{\phl{b_1}}{\phl{b_2}}{\fwd{y}{y'}}} \\
      \qquad \phl{\gfromto{\phl{b_2}}{\phl{s}}.\phl{\m{case}}\quad
      (\ \phl{\parrtensorCS{\phl{b_2}}{\phl{s}}{\fwd{a}{a'}}
      \botone{(\phl{b_1},\phl{b_2})}{\phl{s}}},\qquad
      \botone{(\phl{b_1},\phl{b_2})}{\phl{s}})} 
    \end{array}
  \end{smallequation*}
  %
  The global type above corresponds to the following derivation for
  $\gseq$:
  \begin{displaymath}\small
    \infer[\tensor\parr]
    {
      \phl{\parrtensorCS{\phl{b_1}}{\phl{s}}{\fwd{b_1}{s}}G_1}
      \gseq
      {
        \begin{array}{l}
          \pairQ{b_1} {\tname \tensor \tcost^\perp \parr \tcost \tensor\one},\\
          \pairQ{b_2} {\tcost^\perp \parr \tcost^\perp \parr ((\taddr \tensor \one) \oplus \one)},\\
          \pairQ{s}   {\tname^\perp \parr \tcost \tensor \tcost
          \tensor ((\taddr^\perp \parr \bot) \with \bot)}
        \end{array}
      }
    }
    {
      \infer[\tensor\parr]
      {
        \phl {G_1 =
          \parrtensorCS{\phl{s}}{\phl{b_1}}{\fwd{s}{b_1}}G_2}
        \gseq
        {
          \begin{array}{l}
            \pairQ{b_1} {\tcost^\perp \parr \tcost \tensor\one},\\
            \pairQ{b_2} {\tcost^\perp \parr \tcost^\perp \parr ((\taddr \tensor \one) \oplus \one)},\\
            \pairQ{s}   {\tcost \tensor \tcost
            \tensor ((\taddr^\perp \parr \bot) \with \bot)}
          \end{array}
        }
      }
      {
        \infer[\tensor\parr]
        {
          \phl {G_2 =
            \parrtensorCS{\phl{s}}{\phl{b_2}}{\fwd{s}{b_2}}G_3}
          \gseq
          {
          \begin{array}{l}
            \pairQ{b_1\!} {\!\tcost \tensor\one},
            \pairQ{b_2\!} {\!\tcost^\perp \!\parr\! \tcost^\perp \!\parr\! ((\taddr \tensor \one) \oplus \one)},\\
            \pairQ{s\!}   {\!\tcost
            \tensor ((\taddr^\perp \parr \bot) \with \bot)}
          \end{array}
        }
      }
      {
        \infer[\tensor\parr]
        {
          \phl {G_3 =
            \parrtensorCS{\phl{b_1}}{\phl{b_2}}{\fwd{b_1}{b_2}}\ G_4}
          \gseq
          {
            \begin{array}{l}
              \pairQ{b_1} {\tcost \tensor\one},
              \pairQ{b_2} {\tcost^\perp \parr ((\taddr \tensor \one) \oplus \one)},\\
              \pairQ{s}   {((\taddr^\perp \parr \bot) \with \bot)}
            \end{array}
          }
        }
        {
          \infer[\oplus\with]
          {
            \phl {G_4\! =\!
              \gfromto{\phl{b_2}}{\phl{s}}.\phl{\m{case}} (G_5,\botone{(\phl{b_1},\phl{b_2})}{\phl{s}})}
            \gseq
            {
              \begin{array}{l}
                \pairQ{b_1\!} {\!\one},
                \pairQ{b_2\!} {\!(\taddr \tensor \one) \oplus \one},
                \pairQ{s\!}   {\!(\taddr^\perp \parr \bot) \with \bot}
              \end{array}
            }
          }
          {
            \infer[\tensor\parr]
            {
              \phl {
                G_5 = 
                \parrtensor{\phl{b_2}}{\phl{s}}{\fwd{b_2}{s}}
                G_6
              }
              \gseq
              {
                \begin{array}{l}
                  \pairQ{b_1\!} {\!\one},\\
                  \pairQ{b_2\!} {\!\taddr \tensor \one},\\
                  \pairQ{s\!}   {\!\taddr^\perp \parr \bot}
                \end{array}
              }
            }
            {
              \infer[\one\!\perp]
              {
                \phl {
                  G_6 = \botone{(\phl{b_1},\phl{b_2})}{\phl{s}}
                }
                \gseq
                {
                  \begin{array}{l}
                    \pairQ{b_1} {\one},
                    \pairQ{b_2} {\one},
                    \pairQ{s}   {\bot}
                  \end{array}
                }
              }
              {
              }
            }
            &
            \infer[\one\!\perp]
            {
              \phl {
                \botone{(\phl{b_1},\phl{b_2})}{\phl{s}}                
              }
              \gseq
              {
                \begin{array}{l}
                  \pairQ{b_1} {\one},
                  \pairQ{b_2} {\one},
                  \pairQ{s}   {\bot}
                \end{array}
              }
            }
            {
            }
          }
        }
      }
    }
  }
\end{displaymath} 
For clarity, we elided the left premisses in the applications of
$\tensor\parr$ since, in this derivation, they are axiomatic, e.g.,
$\phl {\fwd {b_1}{s}} \gseq \pairQ {b_1}{\tname}, \pairQ
{s}{\tname^\perp}$.
\qed
\end{example}

\subsection{Process Terms}\label{subsec:processes}
We use a language of {\em processes} to represent the forwarders that
decide to whom messages, choices, and services should be delivered,
expressing the communications enforced by coherence.  For that, we
introduce a standard process language which is a variant of the
$\pi$-calculus~\cite{MPW92} with specific communication primitives as
usually done for session calculi. Moreover, given that the theory of
this paper is based on the proposition-as-types correspondence with
CLL, we adopt a syntax akin to that of Wadler~\cite{W12,W14}: 
%
\begin{displaymath}
  \small
  \begin{array}{l@{\quad}l@{\quad}l@{\qquad}l@{\quad}l@{\quad}l@{\qquad}l}
    \phl P,\phl Q ::= & \fwd xy          & \text{(link)}
    &
      \phl{\res {xy} (P\!\!\pp\!\! Q)} & \text{(parallel)}
    \\
                      & \wait x P & \text{(wait)}
    &
      \close x & \text{(close)}
    \\
                      & \recv xyP        & \text{(input)}
                
    &
      \send xyPQ       & \text{(output)}
    \\
            & \Case xPQ        & \text{(choice)}
    & \inl xP          & \text{(left select)} \\
& & & \inr xP & \text{(right select)}
    \\
            & \client xyP & \text{(client request)}
    &
      \srv xyP & \text{(server accept)}
  
  \end{array}
\end{displaymath}
%
We briefly comment on the various production of the grammar above. A link
$\fwd xy$ is a binary forwarder, i.e., a process that forwards any
communication between endpoint $\phl x$ and endpoint $\phl y$. This
yields a sort of equality relation on names: it says that endpoints
$\phl x$ and $\phl y$ are equivalent, and communicating something over
$\phl x$ is like communicating it over $\phl y$.
The term $\phl{\res {xy}(P\pp Q)}$ is used for composing processes: it
is the parallel composition of $\phl P$ and $\phl Q$ that share a
private connection through endpoints $\phl x$ and $\phl y$.
%
Parallel composition formally define compositionality of processes and
will play a key role in deriving the semantics of synchronous
forwarders through logic (cf. \S~\ref{sec:cut-elim}).
Note that we use endpoints instead of channels~\cite{V12}. The
difference is subtle: the restriction $\phl{\res {xy}}$ connects the
two endpoints $\phl x$ and $\phl y$, instead of referring to the
channel between them.
The terms $\wait xP$ and $\close x$ handle synchronisation (no message
passing); $\wait xP$ can be seen as an empty input on $\phl x$, while
$\close x$ terminates the execution of the process.
The term $\send xyPQ$ denotes a process that creates a fresh name
$\phl y$, spawns a new process $\phl P$, and then continues as
$\phl Q$. The intuition behind this communication operation is that
$\phl P$ uses $\phl y$ as an interface for dealing with the
continuation of the dual primitive (denoted by term $\recv xyR$, for
some $\phl R$).
We observe that Wadler~\cite{W12,W14} uses the syntax $x[y].(P\pp Q)$,
but we believe that our version is more intuitive and gives a better
explanation of why we require two different processes to follow after
an output. However, our format is partially more restrictive, since
$\phl y$ is forced to be bound in $\phl P$ (which Wadler enforces with
typing). Also, note that output messages are always fresh, as for the
internal $\pi$-calculus~\cite{S96}, hence the output term $\send xyPQ$
is a compact version of the $\pi$-calculus term
$(\nu y)\, \overline x y.(P\pp Q)$.
%
Branching computations are handled by $\Case xPQ$, $\inl xP$ and
$\inr xP$. The former denotes a process offering two options (external
choice) from which some other process can make a selection with
$\inl xP$ or $\inr xP$ (internal choice).
Finally, the term $\srv xyP$ denotes a persistently available service
that can be invoked by $\client xzQ$ at $\phl x$ which will spawn a
new session to be handled by a copy of process $\phl P$.

As shown by Wadler~\cite{W12,W14}, among all of the many process
expressions one can write, CLL characterises the subset that is
well-behaved, i.e.\ they satisfy deadlock freedom and session
fidelity, and are therefore interesting for our work.

\subsection{Mapping Coherence into Arbiters} Carbone et
al.~\cite{CLMSW16} show that any coherence proof $\phl G\gseq\Delta$
can be transformed into a corresponding CLL proof of
$\cllseq P\Delta^\perp$, where $\phl P$ is called the {\em arbiter}.
Whenever there is an output on some endpoint $\phl x$, then the
arbiter can input such message and then forward it to the receiver
specified by coherence. For example, given a global type
$\phl{\parrtensor{x}{y}{G}{H}}$, we can build the arbiter
$\recv{x'}{u}{} \send{y'}{v}{P}{Q}$ (for some fresh $\phl u$ and
$\phl v$) which inputs from $\phl{x'}$ (binary endpoint connected to
$\phl x$) and outputs on $\phl{y'}$ (binary endpoint connected to
$\phl y$), where inductively $\phl{P}$ and $\phl{Q}$ are the arbiters
corresponding resp.~to $\phl{G}$ and $\phl{H}$.
The translation of coherence proofs (hereby expressed as global types)
into processes is reported in Fig.~\ref{fig:coepp}.
%
\begin{figure}[t]
	\begin{equations}
		\coepp{\globalaxiom{x}{A}{y}{\dual A}}
		&\ \defeq\ &
		\phl{\globalaxiom{x'}{\dual A}{y'}{A}}
		\\
		\coepp{\botone{\til x}{y}}
		&\ \defeq\ &
		\phl{\wait{x'_1}{\cdots \wait{x'_n}{\close{y'}}}}
		\\
		\coepp{\parrtensor{\til x}{y}{G}{H}}
		&\ \defeq\ &
		\recv{x'_1}{u_1}{\cdots
			\recv{x'_n}{u_n}{\send{y'}{v}{\coepp{G}\substtwo{\til u}{\til
						x'}{v}{y'}}{\coepp{H}}}}
		%
		\\
		\coepp{\withplus{x}{\til y}{G}{H}}
		&\ \defeq\ &
		\phl{\Case{x}{\inl{y_1}{\cdots \inl{y_n}{\coepp{G}}}}
			{\inr{y_1}{\cdots \inr{y_n}{\coepp{H}}}}}
		\\
		\coepp{\bangquery{x}{\til y}{G}}
		&\ \defeq\ &
		\phl{\srv{x}{u}{\client{y_1}{v_1}{
					\cdots \client{y_n}{v_n}{\coepp{G}\substtwo{u}{x}{\til
							v}{\til y}}}}}
		\\ [1ex]
		&&\hfill\text{where } \phl{u}, \phl{v}, \phl{\til u}, \phl{\til v}, \phl{x'}, \phl{y'}, \phl{\til x'} \text{and }\phl{\til y'}  \text{ are fresh} 
	\end{equations}
	\caption{Translation of Global Types into
          Processes~\cite{CLMSW16}}
	\label{fig:coepp}
\end{figure} 
\begin{example}[2-Buyer Protocol]\label{ex:2buf}
  Recall the 2-buyer protocol from Example~\ref{ex:2buyer}, where two
  buyers try to make a joint decision whether to buy a book from a
  seller.
  The interactions enforced by the global type/coherence proof seen in
  Example~\ref{ex:2bu} can be expressed by the arbiter process
  $\phl {P_1}$ below:
  \begin{smallequation*}
    \begin{array}{lllll}
      \phl{P_1} &=& \begin{array}[t]{@{}l}
                      \recv {b_1'}{n} {\send{s'}{n'}{\fwd{n}{n'}}{}}\  
                      \recv {s'}{x_1} {\send{b_1'}{x'_1}{\fwd{x_1}{x'_1}}{}}\
                      \recv {s'}{x_2} {\send{b_2'}{x'_2}{\fwd{x_2}{x'_2}}{}} \\
                      \quad\recv {b_1'}{y} {\send{b_2'}{y'}{\fwd{y}{y'}}{}} \
                      \Case {b_2'}{\ \inl{s'}{\recv {b_2'}{a}
                      {\send{s'}{a'}{\fwd{a}{a'}}Q_1\ }}}{\
                      \inr{s'}Q_1\ } 
                    \end{array} 
    \end{array}
  \end{smallequation*}
  Above, $\phl{b_1'}$, $\phl{b_2'}$, and $\phl {s'}$ are the endpoints
  connecting the forwarder $\phl{P_1}$ to the endpoints of the two
  buyers and the seller (resp. $\phl{b_1}$, $\phl{b_2}$, and
  $\phl s$). The process
  $\phl{Q_1} = \wait{b_1'}{\wait{b_2'}{\close{s'}}}$ closes all
  endpoints. \qed
\end{example}

We stress that an arbiter is nothing but a {\em forwarder}, transmitting
messages between the composed peers.  However, there are processes
that are not exactly arbiters (not in the image of the translation in
Fig.~\ref{fig:coepp}), but still forwarders and typable in CLL. For example,
consider the global type
$\phl{\parrtensor{x}{y}{G}{\parrtensor zy{G'}H}}$, the
corresponding arbiter has the form
$\recv{x'}{u}{} \send{y'}{v}{P}{} \recv {z'}{s}{}
\send{y'}{t}{Q}{R}$. However, another well-typed  process enforcing
the protocol could be the process
$\recv{x'}{u}{} \recv {z'}{s}{} \send{y'}{v}{P}{} \send{y'}{t}{Q}{R}$
or the process
$\recv {z'}{s}{} \recv{x'}{u}{} \send{y'}{v}{P}{}
\send{y'}{t}{Q}{R}$. In the next section, we define the class of
synchronous forwarders which give a much larger class of processes
that still correspond to coherence.

\section{Synchronous Forwarders}\label{sec:sync_system}
The goal of this section is to present a type system that captures precisely a
set of forwarder processes generalising arbiters. 
Following a proposition-as-types approach,
we aim at a restriction of CLL whose type contexts are provable by coherence.  
Intuitively, we want such restriction to guarantee that
messages cannot be opened/used, a received message is always
forwarded, forwarded messages have always been previously received,
and, the order of messages is preserved between any two endpoints.
The latter is a key for abiding to coherence which precisely enforces
in which order messages should be exchanged.
%

In this paper, we restrict our focus to a class of processes that is
also synchronous, i.e., a forwarder is ready to receive a message on
some endpoint $\phl x$ only after any previous message from $\phl x$
has been forwarded.  This corresponds to thinking of a synchronous
forwarder as a network queue of size one.
The technical device used to enforce this behaviour is a one-size buffer 
for each endpoint --- while a buffer is full, the forwarder is blocked on that particular
endpoint, and can only be unblocked by forwarding the message.
%



\mypar{Contexts.} To capture this one-size buffer mechanism, we need
to introduce new notation.
We continue to write $\pairQ{x}{A}$ for the typing of an unblocked
endpoint, which we also call an \emph{active} endpoint, but, when
typing a one-size buffer, we write $\quadQ{y}{B}{x}{A}$.
Here $\pairQ{x}{A}$ refers to the blocked
endpoint and $\pairQ{y}{B}$ to the message yet to be forwarded. 
The notion is also adapted to branching and exponentials.
	%
%
In summary, in this logic, contexts are formed as follows:
\[
  \begin{array}{rrllllllll}
    {\Gamma} & ::=   & \Delta \ & \quad\mid\quad \Gamma, \Star 
    & \quad\mid\quad   \ \Gamma, \quadQ{y}{B}{x}{A}
    & \quad\mid\quad  \Gamma, \ltriCtx{\Delta}{x}{A}\\
             && & & \quad\mid\quad \Gamma, \rtriCtx{\Delta}{x}{A}
    & \quad\mid\quad \Gamma, \triCtx{\Delta}{x}{A}
  \end{array}
\]
$\Delta$ is the context we defined in the previous section for
coherence, i.e., $\Delta ::= \cdot\ \mid\ \Delta, \phl x:\thl{A}$.
Intuitively, the extra ingredients in a $\Gamma$ context are used for
bookkeeping messages in-transit. And, we do that by {\em boxing}. For
example, the element $\quadQ yBxA$ expresses that some name $\phl y$
of type $B$ has been received from endpoint $\phl x$ and must be later
forwarded. Until that is done, endpoint $\phl x$ (that has type $A$)
is frozen.  Similarly, $\Star$, $\ltriCtx{\Delta}{x}{A}$ and
$\rtriCtx{\Delta}{x}{A}$, and $\triCtx{\Delta}{x}{A}$ indicate that a
request for closing a session, a branching request, or server
invocation, respectively, has been received and must be forwarded. In
the case of branching (additives) and servers (exponentials), the
context $\Delta$ contains the endpoints we must forward to.
%

In synchronous forwarders, we can have several occurrences of each
$\Left/\Right/\Query$ in a context, since boxes can occur multiple
times, containing potentially different typed endpoints.
However, this is not the case for $\Star$: it only acts as a flag,
i.e., multiple occurrences are automatically contracted to a single
one.

\mypar{Notation.} In the sequel, we silently use equivalences
$\Symbol\dbr{\cdot} \equiv \cdot$ for any
$\Symbol \in \{\Left, \Right, \Query\}$.
Also, we write $\Select[\#]$ for an unspecified $\Left$ or $\Right$.
Out of convenience, we write $\br{\Delta_1}\Delta_2$ for
$\Delta_1 =\{\pairQ{y_i}{B_i}\}$ and $\Delta_2=\{\pairQ{x_i}{A_i}\}$
as a slight abuse of notation for the set
$\{\br{\pairQ{y_i}{B_i}}\pairQ{x_i}{A_i}\}_i$ assuming it can be
inferred from the context how the $\phl {y_i}$'s and the
$\phl {x_i}$'s are paired. Moreover, we write $\oplus\Delta$ as a
shorthand for any set $\{\pairQ{x_i}{A_i\oplus B_i}\}$ and similarily
$\query\Delta$ for $\{\pairQ{x_i}{\query A_i}\}$.

\mypar{Judgement and Rules.} A judgement, denoted by $\syncseq P \Gamma$,
captures precisely those forwarding processes $\phl P$ that connect
the endpoints represented in $\Gamma$, buffer at most one message at
each endpoint, and preserve order.

In~Fig.~\ref{fig:sync}, we report the rules for typing processes: they
correspond to the CLL sequent calculus enhanced with process terms
(using endpoints~\cite{CLMSW16}), but with some extra restrictions for
characterising forwarders.
\begin{figure}[t]
\begin{displaymath}
  \begin{array}{c}
    \infer[\textsc{Ax}]{
    \syncseq 
    { \fwd xy}
    { \pairQ{x}{\dual{A}}, \pairQ{y}{A}}
    }{ }
     \qquad
     \infer[\bot]
     {\syncseq {\wait xP} {\Gamma, \pairQ{x}{\bot}}}
     {\syncseq {P} {\Gamma, \Star}}
	\qquad	
	\infer[\one]{
		\syncseq{\close x} {\pairQ{x}{\one}, \Star}}{ }
	\\[2ex]
    \infer[\tensor]{
    \syncseq{\send xyPQ} 
    {\Gamma, [\Delta_1]\Delta_2, \pairQ x{A \tensor B}}
    }
    { \syncseq{P}{\Delta_1, \pairQ yA}
    & \syncseq{Q}{\Gamma, \Delta_2, \pairQ xB}
     }
      \qquad
      \infer[\parr]{
      \syncseq{\recv xyP} {\Gamma, \pairQ x {A \parr B}}
      }{
      \syncseq P {\Gamma, \quadQ yAxB}
      }     
   	\\[2ex]
    \infer[\with]
    {\syncseq{\Case xPQ} {\Gamma,\oplus\Delta,\pairQ x{A \with B}}
    }{
    	\syncseq P {\Gamma, \ltriCtx {\oplus\Delta}x{A}}
    	&
    	\syncseq Q {\Gamma, \rtriCtx {\oplus\Delta}x{B}}
    }
     \\[2ex]
	 \infer[\oplus_l]
	{\syncseq {\inl xP} {\Gamma,  \ltriCtx{\Delta, \phl{x}:\thl{A\oplus B}}zC }}
       {\syncseq P {\Gamma,  \ltriCtx{\Delta}zC, \pairQ{x}{A}}}
    \qquad
    	 \infer[\oplus_r]
	{\syncseq {\inr xP} {\Gamma,  \rtriCtx {\Delta, \phl{x}:\thl{A\oplus B}}zC }}
       {\syncseq P {\Gamma,  \rtriCtx{\Delta}zC, \pairQ{x}{B}}}
    \\[2ex]
    \infer[!]{
    	\syncseq {\srv xyP}{\query \Delta, \phl x:\thl{\bang A} } 
    	}{
    	\syncseq {P}{\Query\dbr{\query\Delta}\phl y:\thl A}
    	}
   	\qquad
    \infer[?]{
    	\syncseq {\client xyP }{\Gamma, \Query\dbr{\Delta, \phl x:\thl{\query A}}\phl z:\thl{C}} 
    }{
    	\syncseq {P}{\Gamma, \Query\dbr{\Delta}\phl z:\thl{C}, \phl y:\thl A}
    }

  \end{array}
\end{displaymath}
\caption{Synchronous forwarder logic}
\label{fig:sync}
\end{figure}
%
The defining characteristic of $\syncseq{}{}$ is that it uses $\perp$,
$\parr$, $\with$, $\bang$ as a buffering mechanism for respectively
endpoint messages (units and multiplicatives), choices (additives),
and requests (exponentials) in order to render them temporarily
inaccessible to any other rule.
The only way to awaken them (render them accessible again) is to
forward 
to another endpoint, external to the forwarder using $\one$,
$\tensor$, $\oplus_\#$, and $\query$.

Rule \Did{Ax} is the axiom of CLL: it denotes a process that
interfaces two endpoints of dual type.
%
Rules $\one$ and $\perp$ type forwarding of empty messages. In
$\phl{\wait xP}$, the typing makes sure that after an empty message is
received (of type $\perp$), it is forwarded by adding to the typing
context for $\phl P$ the object $\Star$ which will make sure that
eventually rule $\one$ is used. In fact, rule $\one$ is applicable
only if there is $\Star$ in the context, i.e., at least one $\perp$
rule has been encountered before. Note that this corresponds to the
gathering operation for units enforced by coherence.
%
%
Rule $\parr$ types reception of $\phl y$ over $\phl x$ with type
$A\parr B$. The received name of type $A$ cannot be used but must be
forwarded, therefore it is wrapped as $\quadQ yAxB$ for the typing of
$\phl P$. Endpoint $\phl x$ is temporarily blocked, until $\phl y$ is
finally forwarded. This is done by rule $\tensor$, which collects the
received $\br{\Delta_1}$ and spawns a new forwarder $\phl P$ of type
$\Delta_1,\pairQ yA$, freeing $\pairQ xB$. As for units,
multiplicatives implement gathering, i.e., we forward many sends to a
single receiver. Note that, since contexts are sets, we explicitly
require $\fn(\Delta)$ and $\fn(\Gamma)$ to be different.

Rules $\oplus_\#$ (where $\#\in\{l,r\}$) and $\with$ type branching
processes. Unlike the case of units and multiplicatives which use a
gathering communication mechanism (many-to-one), additives (and later
exponentials) use broadcasting (one-to-many). This is a choice that
follows directly from the logic principles and the way each operator
are interpreted. Rule $\with$ types the process $\phl{\Case xPQ}$: in
branches $\phl P$ and $\phl Q$ the selected choice, left and right
respectively, must be forwarded to some other endpoint. This is
enforced by $\ltriCtx{\Delta}{x}{A}$ and
$\rtriCtx{\Delta}{x}{A}$. Besides containing the information on which
branch must be forwarded ($\Left$ or $\Right$), they also block
$\pairQ{x}{A}$ until the information signaling to pick the left (or
right) branch has been forwarded to all active endpoints in
$\Delta =\{\phl {x_i} : A_i \oplus B_i\}$.
Similarly, for exponentials, endpoint $\phl y$ is blocked until all
other endpoints in $\Delta = \{\phl {x_i} : \query B\}$ agree (in any
order) that $A$ may proceed.


We conclude this subsection with the straightforward result, that
embeds $\syncseq{}{}$ into $\cllseq{}{}$, where $\cllseq{}{}$ is the
typing sequent based on CLL~\cite{CLMSW16}.

\begin{proposition}\label{prop:embed}
  If $\syncseq P\Gamma$, then  $\cllseq{P}{\rep{\Gamma}}$;
  embedding $\rep{\cdot}$ being defined as:
  \begin{displaymath}
  	\begin{array}{l@{\qquad}l}
  		\rep{\Delta} = \Delta &
  		\rep{\Gamma, \Star}  = \rep{\Gamma} \\[1ex]
  		\rep{\Gamma, \quadQ{y\!}{\!B}{x\!}{\!A}}  = \rep{\Gamma},\pairQ{y\!}{\!B}, \pairQ{x\!}{\!A} &
  		\rep{\Gamma, \Symbol\dbr{\Delta}\pairQ{x\!}{\!A}}  = \rep{\Gamma},\Delta,  \pairQ{x\!}{\!A}
  	\end{array}
  \end{displaymath}
\end{proposition}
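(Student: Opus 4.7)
The plan is to proceed by straightforward structural induction on the derivation of $\syncseq{P}{\Gamma}$, verifying for each rule of the synchronous forwarder logic in Fig.~\ref{fig:sync} that, after applying $\rep{\cdot}$, the conclusion is derivable in CLL from the translated premises.

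The first thing I would do is record a few easy observations about $\rep{\cdot}$: it is invariant under adding or removing $\Star$ markers (since $\rep{\Gamma,\Star} = \rep{\Gamma}$); it flattens every box of the form $\quadQ{y}{B}{x}{A}$ into the two active endpoints $\pairQ{y}{B},\pairQ{x}{A}$; and it flattens $\Select_\#\dbr{\Delta}\pairQ{x}{A}$ and $\Query\dbr{\Delta}\pairQ{x}{A}$ into $\Delta,\pairQ{x}{A}$. In particular, for any base context $\Delta$ we have $\rep{\Delta}=\Delta$. These facts let me discharge all the bookkeeping connected to the boxing discipline, which is the only thing that distinguishes $\syncseq{}{}$ from the ordinary CLL rules on the typing side.

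Next, for each rule of Fig.~\ref{fig:sync}, I would check that the translation yields an instance of the corresponding CLL rule. The cases \Did{Ax}, $\one$, $\bot$, $\parr$, $\tensor$ are essentially immediate once the $\Star$ and $\quadQ{\cdot}{\cdot}{\cdot}{\cdot}$ markers are unfolded: e.g.\ for $\parr$, the premise $\syncseq{P}{\Gamma,\quadQ{y}{A}{x}{B}}$ translates to $\cllseq{P}{\rep{\Gamma},\pairQ{y}{A},\pairQ{x}{B}}$, on which the CLL $\parr$ rule yields the translated conclusion. The additive and exponential cases ($\with$, $\oplus_l$, $\oplus_r$, $!$, $?$) work the same way using the corresponding CLL rules, once I observe that $\rep{\oplus\Delta}=\oplus\Delta$ and $\rep{\query\Delta}=\query\Delta$, so the contextual side-conditions (promotion requires a context of $\query$-formulas) translate without fuss. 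The $\tensor$ case additionally splits the context exactly as the CLL multiplicative rule demands, with $\br{\Delta_1}\Delta_2$ providing the unique partition.

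I do not anticipate any real obstacle: there is no induction-hypothesis mismatch to repair and no need to do structural manipulation (weakening or contraction) on the CLL side, because $\rep{\cdot}$ preserves multiplicities of the active endpoints and contracts $\Star$ uniformly. The only point that requires a brief sentence is that in the $!$ rule the premise context $\Query\dbr{\query\Delta}\pairQ{y}{A}$ translates to $\query\Delta,\pairQ{y}{A}$, so the CLL promotion rule is directly applicable. The proof is therefore a short, rule-by-rule verification, and I would present it as a single induction with one line per rule.
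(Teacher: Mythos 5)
Your proof is correct and is exactly the straightforward rule-by-rule induction the paper has in mind (the paper states Proposition~\ref{prop:embed} without proof, calling it ``straightforward''). Each case checks out as you describe: unfolding $\rep{\cdot}$ turns every rule of Fig.~\ref{fig:sync} into an instance of the corresponding CLL rule, with the $!$ case correctly noting that $\rep{\Query\dbr{\query\Delta}\pairQ{y}{A}}=\query\Delta,\pairQ{y}{A}$ satisfies the promotion side-condition.
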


\section{Semantics via Cut Elimination}\label{sec:cut-elim}
We now turn to the formal semantics of forwarders. Our goal is
two-fold: provide a semantics for forwarders and have a way of
composing them safely. As a consequence, we obtain a methodology for
connecting two forwarders together, define how they can internally
communicate, and be sure that after composition the obtained process
is still a forwarder. In order to illustrate how this can be used, we
begin with of an extension of the 2-buyer protocol example.
\begin{example} \label{ex:2buff} We extend the 2-buyer example with a
  second concurrent system. Assume that the second buyer wishes to
  delegate the decision to buy to two colleagues and only if they both
  agree, the book will be bought.
  Here again, we can use a forwarder $\phl{P_2}$ to orchestrate the
  communication between the two colleagues and the second buyer:
  it forwards the price to the first colleague, and the first buyer's
  contribution (obtained through the second buyer) to the second
  colleague.
  Then, it relays decisions from the two colleagues to the second
  buyer. 

  We write this forwarder as process $\phl{P_2}$ below, where
  $\phl{b''_2}$, $\phl{c_1}$, and $\phl{c_2}$ are the endpoints
  connecting $\phl{P_2}$ to the second buyer and the two colleagues.
  Note that $\phl{P_1}$ is connected to the second buyer through
  endpoint $\phl{b_2'}$ instead (see Example~\ref{ex:2buf}).
  %
  \begin{eqnarray*}
    \phl{P_2} &\!=\!&
                      \begin{array}[t]{@{}l}
                        \recv {b_2''}{y_1} {\send{c_1}{y_1'}{\fwd{y_1}{y_1'}}{}}\
                        \recv {b_2''}{y_2} {\send{c_2}{y_2'}{\fwd{y_2}{y_2'}}{}} \\
                        \quad
                        \phl{{c_1}.\m{case}}
                        \phl{\bigg (}
                        \begin{array}{llll}
                          {\ \; \inl{c_2}{}{}
                          \Case{c_2}{\ \inl{b_2''}Q_2}{\
                          \inr{b_2''}Q_2\ }}{}\phl ,\\
                          {\ \;
                          \inr{c_2}{}{}\Case{c_2}{\ \inl{b_2''}Q_2}{\
                          \inr{b_2''}Q_2\ }\ \; }
                        \end{array}
                        \phl {\bigg )}
                      \end{array}
  \end{eqnarray*}
  Process $\phl{Q_2} =\wait{c_1}{\wait{c_2}{\close{b_2''}}}$ closes all
  endpoints.
  Intuitively, we wish to combine $\phl {P_1}$ and $\phl {P_2}$ into a
  new forwarder $\phl P$ that orchestrates the communications between
  the seller, the first buyer, and the colleagues, bypassing the
  second buyer. \qed
\end{example}

\mypar{Is a standard cut rule enough?} In a propositions-as-types
approach, the semantics of processes is obtained from the reductions
of proofs (which correspond to processes) given by the {\em cut elimination} process.
%
In CLL, a cut rule allows to compose two compatible proofs,
establishing their compatibility based on duality. Cut elimination is
then a procedure for eliminating any occurrence of the cut rule in a
proof (also known as normalisation) and it is usually done in small
steps called cut reductions. In our process terms, a cut rule
corresponds to connecting two endpoints from two parallel
processes; and, cut reductions correspond to reductions of processes,
thus yielding their semantics.

Ideally, since synchronous forwarders are embedded into CLL, we could
use the CLL cut rule to compose them. However, synchronous forwarder
have an extended typing context containing extra information on what
has to be forwarded. Therefore, the CLL cut rule is not sufficient. In
order to understand why, we start from defining the core rule
\textsc{Cut} for synchronous forwarders which, as in CLL, connects two
endpoints with dual types:
\begin{smallequation*}
  \infer[\textsc{Cut}]{
    \syncseq {\cpres{x}A{y}{A^\perp}(P \pp Q)} {\Gamma_1,\Gamma_2}
  }{
    \syncseq{P}{\Gamma_1, \pairQ{x}{A}}
    &
    \syncseq {Q} {\Gamma_2, \phl {y}: \thl{A^\perp}}
  }    
\end{smallequation*}
We can read the rule above as follows: we can compose the two
processes $\phl{P}$ and $\phl{Q}$ willing to communicate on endpoints
$\phl x$ and $\phl{y}$ because their types are dual. And, their
parallel composition yields a new process where both $\phl x$ and
$\phl y$ have disappeared, since now they have formed an internal
channel. Both processes implement input and output operations
over the two connected endpoints, but when the two processes
start communicating, their state will change into something that may
contain messages in transit awaiting to be forwarded. This means that
the cut rule may have to involve some of the special elements in a
context $\Gamma$.  In order to deal with such extra constraints on
contexts, we require special cut rules that deal with intermediary
computation (referred to as {\em runtime} rules) and associated to new
syntactic terms (runtime syntax), depending on the proposition
that is being cut (and the corresponding communication operation).
Such rules deal with blocked endpoints such as
the boxed judgments $\quadQ xAyB$, $\ltriCtx{\Delta}xA$,
$\rtriCtx{\Delta}xA$, and $\triCtx{\Delta}{x}A$.
%
%
As a consequence, we define six variations of the cut rule for
synchronous forwarders, which are reported in
Fig.~\ref{fig:sync_cutrules}.
\begin{figure}[t]
	\begin{displaymath}
		\begin{array}{c}
			\infer[\textsc{Cut}]{
				\syncseq {\cpres{x}A{y}{A^\perp}(P \pp Q)} {\Gamma_1,\Gamma_2}
			}{
				\syncseq{P}{\Gamma_1, \pairQ{x}{A}}
				& 
				\syncseq{Q}{\Gamma_2, \pairQ{y}{\dual{A}}}
			}
			\\[2ex]
			\infer[\textsc{Cut}_{\tensor\parr}]{
				\syncseq 
				{\cpres xB{\br{}y}{B^\perp}\big(Q\pp \cpres u-{\br{v}}{} (P\pp R)\big)}
				{[\Delta_1]\Delta_2,\Gamma_1, \Gamma_2}
			}{
				\syncseq {P} {\Delta_1,\pairQ uA}
				&
				\syncseq {Q} {\Delta_2, \Gamma_1, \pairQ xB}
				&
				\syncseq {R} {\Gamma_2,\quadQ {v}{A^\perp}{y}{B^\perp}}
			}
			\\[2ex]
			\infer[\textsc{Cut}_{\oplus\with_{1\#}}]{
				\syncseq {\cpres{x}{}{y}{}({P} \pp \Case{y}{Q}{R})} {\Gamma_1,  \Gamma_2, \Select[\#]\dbr{\Delta_1, \Delta_2}\phl z: \thl {C}}
			}{
				\syncseq {P} 
				{\Gamma_1, \Select[\#]\dbr{\Delta_1, \phl x: \thl{A \oplus B}}\phl z: \thl C}
				&
				\syncseq{Q} {\Gamma_2, \Left\dbr{\Delta_2}\pairQ {y}{A^\perp}}
				&
				\syncseq{R} {\Gamma_2, \Right\dbr{\Delta_2}\pairQ {y}{B^\perp}}
			}
			\\[2ex]
			\infer[\textsc{Cut}_{\oplus\with_{2\#}}]{
				\syncseq {\cpres{x}{}{y}{}({P} \pp Q)} {\Gamma_1,  \Gamma_2, \Select[\#]\dbr{\Delta_1, \Delta_2}\phl z: \thl {C}}
			}{
				\syncseq {P} {\Gamma_1, \Select[\#]\dbr{\Delta_1}\phl z: \thl C, \phl x: \thl{A}}
				&
				\syncseq{Q} {\Gamma_2, \Select[\#]\dbr{\Delta_2}\pairQ{y}{A^\perp}}
			}	
			\\[2ex]			
			\infer[\textsc{Cut}_{!?_1}]{
				\syncseq {\cpres  xAy{A^\perp}(P\pp \srv yvQ)} {\Gamma, \Query\dbr {\Delta_1,\query\Delta_2}\phl{z}:\thl{C}}
			}{
				\syncseq {P} {\Gamma, \Query\dbr{\Delta_1,\phl x:\thl{\query A}}\phl{z}:\thl{C}}
				&
				\syncseq {Q} {\Query\dbr{\Delta_2}\phl v: \thl{A^\perp}}
			}
			\\[2ex]
			\infer[\textsc{Cut}_{!?_2}]
			{
				\syncseq {\cpres  xAy{A^\perp}(P\pp Q)} {\Gamma_1,\Gamma_2, \Query\dbr{\Delta_1,\Delta_2}\phl{z}:\thl{C}}
			}
			{
				\syncseq {P} {\Gamma_1, \Query\dbr{\Delta_1}\phl{z}:\thl{C}, \phl x: \thl A}
				&
				\syncseq {Q} {\Gamma_2, \Query\dbr{\Delta_2}\phl{y}:\thl{A^\perp}}
			}
		\end{array} 
	\end{displaymath}
	\caption{Cut rules for synchronous forwarders}
	\label{fig:sync_cutrules} 
\end{figure}

We discuss these rules by considering the four possible cases associated
to each pair of dual modalities (units, multiplicatives, additives,
and exponentials). 
Later, we will show that the rule \Did{Cut} and all runtime cut rules
are admissible for synchronous forwarders, i.e., we can always create
a forwarder that has no parallel composition yet behaves as the
composition of the two original forwarders.
We use $\reducesto_\beta$ to denote a rewriting of a proof (and the associated process) that reduces the complexity of the type $A$ in $\Did{Cut}$.
\mypar{Units.} This case does not require any additional rule. We can
always remove a cut between an application of rule $\one$ and an
application of rule $\perp$:
\begin{smallequation*}
  \infer[\textsc{Cut}]{
    \syncseq {\cpres{x}A{y}{A^\perp}(\close x \pp \wait {y}{P})} {\Star,\Gamma}
  }{
    \infer[\one]{ 
      \syncseq{\close x} {\pairQ{x}{\one}, \Star}
    }{}
    &
    \infer[\bot]{
      \syncseq {\wait {y}{P}} {\pairQ{y}{\bot}, \Gamma}}{
      \syncseq {P} {\Star, \Gamma}}
  }
  \quad\reducesto_\beta\quad
  \deduce[]{
    \syncseq {P} {\Star,\Gamma}
  }{}
\end{smallequation*}
For processes, we obtain the reduction
$\phl{\res {xy}(\close x \pp \wait {y}{P})\reducesto_\beta P}$ which
shows how a final synchronisation closes the connection between
endpoints $\phl x$ and $\phl y$.


\mypar{Multiplicatives.} In the case of $\tensor$ and $\parr$, we get
the following principal case:
\begin{smallequation*}
  \infer[\textsc{Cut}]{
    \syncseq {\cpres{x}-{y}{-}(\send xuPQ \pp \recv {y}{v}{R})}
    {\br{\Delta_1}\Delta_2, \Gamma_1,  \Gamma_2}
  }{
    \infer[\tensor]{
      \syncseq{\send xuPQ} 
      {\br{\Delta_1}\Delta_2, \Gamma_1,  \phl x:\thl{A \tensor B}} 
    }{
      \syncseq P {\Delta_1, \pairQ uA} 
      & 
      \syncseq Q {\Delta_2, \Gamma_1, \pairQ xB }
    }
    &
    \infer[\parr]{
      \syncseq{\recv {y}{v}{R}} {\Gamma_2, \phl {y}: \thl{A^\perp \parr B^\perp}}
    }{
      \syncseq {R} {\Gamma_2, [\phl {v}: \thl {A^\perp}] \phl {y}: \thl{B^\perp}}
    }    
  }
\end{smallequation*}
What we would usually do in the proof of cut-elimination for CLL is 
to replace $\Did{Cut}$ by two cuts on $A$ and $B$, respectively.
This, however, is not possible in this case, as
$\phl {v}: \thl {A^\perp}$ is pushed into a buffer that is linked to
its sending endpoint $\phl {y}$, in the configuration
$[\phl {v}: \thl {A^\perp}] \phl {y}: \thl{B^\perp}$.  The two need to
remain linked until the message $\phl v$ is forwarded and channel
$\phl y$ becomes active again.

This requires us to introduce a new {runtime} cut rule which
handles $A$ and $B$ \emph{at the same time}, despite them splitting
into two distinct communications. Yet, they are intertwined and the
cut rule needs to capture the \emph{three premisses} together. 
This is achieved by rule $\textsc{Cut}_{\tensor\parr}$ in
Fig.~\ref{fig:sync_cutrules}.
Note that we have also adopted a runtime process syntax for the
new rules, yielding
$\phl{\cpres xB{\br{}y}{B^\perp}\big(Q\pp \cpres u-{\br{v}}{} (P\pp
  R)\big)}$ as a new term: the box $\phl {\br{}}$ in front of $\phl y$
signals that endpoint $\phl y$ is blocked, while $\phl{[v]}$ means
that $\phl v$ is a message in transit that must be forwarded.

The location of the corresponding sub-derivation ending in $A^\perp$
may be 
deep in the derivation of $\phl R$ and can only be retrieved by
applying several {\em commuting conversions}. Commuting conversions,
denoted by $\reducesto_\kappa$, are proof transformations that do not
change the size of a proof nor the size of a proposition (type), but
only perform rule permutations.
We report a full list of commuting conversions (as processes) at the
end of this section, in Fig.~\ref{fig:syncsemantics2}.

Returning to our case, once $\textsc{Cut}_{\tensor\parr}$ meets the send operation
($\tensor$) that frees $\phl y$, by forwarding
$\phl{v}:\thl{A^\perp}$ and spawning a new process $\phl S$, the
communication can continue with two basic $\textsc{Cut}$ rules:
\begin{smallequation*}
  \infer[\textsc{Cut}_{\tensor\parr}]
  {
    \syncseq 
    {\cpres xB{[]y}{B^\perp}(Q\pp \cpres uA{[v]}{A^\perp} (P \!\pp\! \send wz{S}{T}))}
    {[\Delta_1]\Delta_2,\Gamma_1, \Gamma_2, [\Delta_3]\Delta_4, \pairQ
      {z\!}{\!C\!\tensor\! D}}
  }{
    \syncseq {P} {\Delta_1,\pairQ uA}
    &\quad
    \syncseq {Q} {
    	\begin{array}{lll}
    		\Delta_2, \Gamma_1,\\
    		\pairQ xB
    	\end{array}
    }
    &\quad
    \infer[\tensor]{
      \syncseq {\send wz{S}{T}}
      {
        \begin{array}{lll}
          \Gamma_2, \quadQ {v\!}{\!A^\perp}{y}{B^\perp}, \\{}
          [\Delta_3]\Delta_4,  \pairQ {z\!}{\!C\!\tensor\! D}
        \end{array}
      }
    }{\syncseq{S}{
      	\begin{array}{lll}
        \Delta_3, \phl{v}:\thl{A^\perp}, \\
        \phl{w}:\thl{C}
	  	\end{array}
		}
        &
        \syncseq{T}{
        \begin{array}{lll}
          \Gamma_2, \phl{y}:\thl{B^\perp},\\
          \Delta_4, \phl z:\thl{D}
        \end{array}
    	}
    }
  }
\end{smallequation*}
\begin{smallequation*}\label{page:tensorcurcommute}
  \begin{array}{lll}
    \reducesto_\kappa\\
    \infer[\tensor]{
    \syncseq 
    {\send zw{\cpres uAv{A^\perp} (P \pp S)}{\cpres xBy{B^\perp}(Q \pp T)}}
    {
    \begin{array}{lll}
      [\Delta_1]\Delta_2,\Gamma_1, \Gamma_2, [\Delta_3]\Delta_4, 
      \phl z:\thl{C\tensor D}
    \end{array}
    }
    }{
    \infer[\textsc{Cut}]{
    \syncseq {\cpres uAv{A^\perp} (P\! \!\pp\!\! S)}
    {\Delta_1,\Delta_3, \phl w:\thl{C}}
    }{
    \syncseq {P} {\Delta_1,\pairQ uA}
    &
      \syncseq{S}{
      \begin{array}{lll}
          \Delta_3, \phl{v}:\thl{A^\perp}, \\
        \phl{w}:\thl{C}
      \end{array}
    }
    }
    &
      \infer[\textsc{Cut}]{
      \syncseq {\cpres xBy{B^\perp}(Q\!\! \pp\!\! T)}
      {\Delta_2,\Gamma_1, \Gamma_2, \Delta_4, \phl z:\thl{D}}
      }{
      \syncseq {Q} {
      \begin{array}{lll}
        \Delta_2, \Gamma_1,\\
        \pairQ xB
      \end{array}
    }
    &
      \syncseq{T}{
        \begin{array}{lll}
          \Gamma_2, \phl{y}:\thl{B^\perp},\\
          \Delta_4, \phl z:\thl{D}
        \end{array}
    }
    }
    }
  \end{array}
\end{smallequation*}

\mypar{Additives.}  What would a key reduction look like for the
additive connectives in this system?
That is, we need to consider the reductions available to a cut with
premisses $\syncseq {P}{\Gamma_1, \pairQ{x}{A \oplus B}}$ and
$\syncseq {Q}{\Gamma_2, \pairQ {y}{A^\perp \with B^\perp}}$.
%
By using commuting conversions, it is always possible to reach the
point where the right premiss (that of $\phl{Q}$) ends with the rule
for which the cut formula is {\em principal}, that is
$\phl{Q =} \Case{y}{Q_1}{Q_2}$,
$\Gamma_2 = \Gamma'_2, \oplus\Delta_2$, and the right branch ends as:
\begin{smallequation*}
  \infer[\with]{
    \syncseq{\Case{y}{Q_1}{Q_2}} {\Gamma'_2, \oplus\Delta_2, \pairQ {y}{A^\perp \with B^\perp}}
  }{
    \syncseq{Q_1} {\Gamma'_2, \Left\dbr{\oplus\Delta_2}\pairQ {y}{A^\perp}}
    &
    \syncseq{Q_2} {\Gamma'_2, \Right\dbr{\oplus\Delta_2}\pairQ {y}{B^\perp}}
  } 
\end{smallequation*}

On the left branch however, reaching the rule for which the
cut-formula is principal must be done in two steps as $A\oplus B$ can
only be principal after having been chosen for selection.
This means that we first need to consider the point at which
$\phl{P =} \Case{z}{P_1}{P_2}$,
$\Gamma_1 = \Gamma'_1, \oplus\Delta_1, \phl z: \thl {C\with D}$ and
the left branch of the cut is of the shape:
\begin{smallequation*}
  \infer[\with]
  {\syncseq {\Case{z}{P_1}{P_2}} {\Gamma'_1, \oplus\Delta_1, \pairQ{x}{A \oplus B}, \phl z: \thl {C\with D}}}
  {
    \syncseq {P_1} {\Gamma'_1, \Left[\![\oplus\Delta_1, \pairQ{x}{A \oplus B}]\!]\phl z: \thl C}
    &
    \syncseq {P_2} {\Gamma'_1, \Right[\![\oplus\Delta_1, \pairQ{x}{A \oplus B}]\!]\phl z: \thl D}
  }
\end{smallequation*}
That places $A\oplus B$ at least in a position to become, higher in
the proof, principal.
We are therefore in a special configuration of the cut between
additives:
\begin{smallequation*}
  \infer[
  ]{
    \syncseq {\cpres{x}A{y}{A^\perp}(\Case{z}{P_1}{P_2}\pp\Case{y}{Q_1}{Q_2})}
    {\Gamma'_1, \oplus\Delta_1, \phl z: \thl {C\with D},  \Gamma'_2, \oplus\Delta_2}
  }{
    \deduce[]{
      \syncseq {\Case{z}{P_1}{P_2}} {\Gamma'_1, \oplus\Delta_1,
        \pairQ{x\!}{\!A\! \oplus\! B}, \pairQ {z\!}{\!C\!\with\! D}}
    }{}
    \quad
    \deduce[]{
      \syncseq{\Case{y}{Q_1}{Q_2}} {\Gamma'_2, \oplus\Delta_2, \pairQ {y\!}{\!A^\perp \!\with\! B^\perp}}
    }{}    
  }
\end{smallequation*}

We introduce the \emph{runtime} cut $\textsc{Cut}_{\oplus\with_{1l}}$
that puts together the premisses as:
\begin{smallequation*}
  \infer[
  ]{
    \syncseq {R_1 = \cpres{x}{}{y}{}(P_1 \pp \Case{y}{Q_1}{Q_2})}
    {\Gamma'_1, \Left\dbr{\oplus\Delta_1, \oplus\Delta_2}\phl z: \thl {C},  \Gamma'_2}
  }{
    \syncseq {P_1} {\Gamma'_1, \Left[\![\oplus\Delta_1, \pairQ {x\!}{\!A\! \oplus\! B}]\!]\phl z: \thl C}
    &
    \syncseq{Q_1} {\Gamma'_2, \Left\dbr{\oplus\Delta_2}\pairQ {y\!}{\!A^\perp}}
    &
    \syncseq{Q_2} {\Gamma'_2, \Right\dbr{\oplus\Delta_2}\pairQ {y\!}{\!B^\perp}}
  }
\end{smallequation*}
Similarly, we introduce $\textsc{Cut}_{\oplus\with_{1r}}$ to get $\phl{R_2 = \cpres{x}{}{y}{}(P_2 \pp \Case{y}{Q_1}{Q_2})}$ on the right.
The two cuts can then be re-combined as follows:
\begin{smallequation*}
\infer[\with]{
	\syncseq {\Case{z}{R_1}{R_2}} {\Gamma'_1, \oplus\Delta_1, \phl z: \thl {C\with D}, \oplus\Delta_2,  \Gamma'_2}
}{
	\deduce[]{
		\syncseq {R_1} 
		{\Gamma'_1, \Left\dbr{\oplus\Delta_1, \oplus\Delta_2}\phl z: \thl {C},  \Gamma'_2}
	}{}
	&
	\deduce[]{
		\syncseq {R_2} 
		{\Gamma'_1,  \Right\dbr{\oplus\Delta_1, \oplus\Delta_2}\phl z: \thl {D},  \Gamma'_2}
	}{}
}
\end{smallequation*}
Indeed, this is a special case of a general equivalence (as in CLL)
obtained by commuting the rule applied to the side formula $C\with D$
below the cut:
\begin{smallequation*}
  \phl{
    \cpres{x}A{y}{A^\perp}(\Case{z}{P_1}{P_2} \pp Q)
    \quad\reducesto_\kappa\quad }
  \Case{z}{\cpres{x}{}{y}{}(P_1 \pp Q)}{\cpres{x}{}{y}{}(P_2 \pp Q)}
\end{smallequation*}
but in the case where 
$\phl{Q =} \Case{y}{Q_1}{Q_2}$.

Then, these $\textsc{Cut}_{\oplus\with_{1\#}}$ will be able to reduce
further when meeting the corresponding $\oplus_\#$-rule making the
cut-formula principal (inside its box), as for instance when $\phl{P_1 = }\inl{x}{P'}$:
\begin{smallequation*}
  \infer[\textsc{Cut}_{\oplus\with_{1l}}]{
    \syncseq {\cpres{x}{}{y}{}(\inl{x}{P'} \pp \Case{y}{Q_1}{Q_2})}
    {\Gamma'_1, \Left\dbr{\oplus\Delta_1, \oplus\Delta_2}\phl z: \thl {C},  \Gamma'_2}
  }{
    \infer[\oplus_1]{
      \syncseq {\inl{x}{P'}} {\Gamma'_1, \Left[\![\oplus\Delta_1, \phl x: \thl{A \oplus B}]\!]\phl z: \thl C}
    }{
      \syncseq {P'} {\Gamma'_1, \Left[\![\oplus\Delta_1]\!]\phl z: \thl C, \phl x: \thl{A}}
    }
&
\begin{array}{lll}
	\syncseq{Q_1} {\Gamma'_2, \Left\dbr{\oplus\Delta_2}\pairQ {y}{A^\perp}}
	\\[2mm]
	\syncseq{Q_2} {\Gamma'_2, \Right\dbr{\oplus\Delta_2}\pairQ {y}{B^\perp}}\\{}
\end{array}
    }
\end{smallequation*}
\begin{smallequation*}
\qquad\reducesto_\beta\qquad
\infer[\textsc{Cut}_{\oplus\with_{2l}}]{
	\syncseq {\cpres{x}{}{y}{}({P'} \pp {Q_1})} {\Gamma'_1, \Left\dbr{\oplus\Delta_1, \oplus\Delta_2}\phl z: \thl {C},  \Gamma'_2}
}{
	\syncseq {P'} {\Gamma'_1, \Left[\![\oplus\Delta_1]\!]\phl z: \thl C, \phl x: \thl{A}}
	&
	\syncseq{Q_1} {\Gamma'_2, \Left\dbr{\oplus\Delta_2}\pairQ {y}{A^\perp}}
}	
\end{smallequation*}
In terms of processes, the above models the reduction:
\begin{smallequation*}
  \phl{ \cpres{x}{}{y}{}(\inl{x}{P'} \pp \Case{y}{Q_1}{Q_2})
    \reducesto_\beta \cpres{x}{}{y}{}({P'} \pp {Q_1}) }
\end{smallequation*}
However, this reduction introduces again a new 
rule $\textsc{Cut}_{\oplus\with_{2\#}}$ since the cut formula on the
right branch is blocked until the selection phase is fully over, i.e.,
until $\oplus\Delta_2$ becomes empty and $\phl{y}$ becomes active
again.
This would happen in the following situation, where $\phl{Q_1=}\inl{w}{Q'}$, and the communication
then returns to a general $\textsc{Cut}$:

\begin{smallequation*}
  \infer[\textsc{Cut}_{\oplus\with_{2l}}]{
    \syncseq {\cpres{x}{}{y}{}({P'} \pp \inl{w}{Q'})} {\Gamma'_1, \Left\dbr{\oplus\Delta_1, \phl w: \thl{B \oplus D}}\phl z: \thl {C},  \Gamma'_2}
  }{
    \syncseq {P'} {\Gamma'_1, \Left[\![\oplus\Delta_1]\!]\phl z: \thl C, \phl x: \thl{A}}
    &
    \infer[\oplus_1]{
      \syncseq{\inl{w}{Q'}} {\Gamma'_2, \Left\dbr{\phl w: \thl{B \oplus D}}\pairQ {y}{A^\perp}}
    }{
      \syncseq{Q'} {\Gamma'_2, \pairQ {y}{A^\perp}, \phl w: \thl{B}}
    }
  }	
\end{smallequation*}
\begin{smallequation*}
  \quad\reducesto_\kappa\quad
  \infer[\oplus_1]{
    \syncseq {\inl{w}{\cpres{x}{}{y}{}({P'} \pp {Q'})}} {\Gamma'_1,
      \Left\dbr{\oplus\Delta_1, \phl w: \thl{B \oplus D}}\phl z: \thl
      {C},  \Gamma'_2}
  }{
    \infer[\textsc{Cut}]{
      \syncseq {\cpres{x}{}{y}{}({P'} \pp {Q'})} {\Gamma'_1,
        \Left\dbr{\oplus\Delta_1}\phl z: \thl {C},  \Gamma'_2, \phl w:
        \thl{B}}
    }{
      \syncseq {P'} {\Gamma'_1, \Left[\![\oplus\Delta_1]\!]\phl z: \thl C, \phl x: \thl{A}}
      &
      \syncseq{Q'} {\Gamma'_2, \pairQ {y}{A^\perp}, \phl w: \thl{B}}
    }	
  }
\end{smallequation*}
Note that here as well this equivalence is not specific to this case;
it happens whevener an $\oplus$-rule is permuted with a cut, but in
this particular configuration it allows
$\textsc{Cut}_{\oplus\with_{2l}}$ to turn into a $\textsc{Cut}$ again.

\mypar{Exponentials.} In the case of exponentials, we start from an
application of rule \Did{Cut} whose premisses are
$\syncseq {P} {\Gamma_1, \pairQ{x}{\query A}}$ and
$\syncseq {Q} {\Gamma_2, \pairQ {y}{\bang A} }$.
%
As for the additives, it is possible to reach a point where the right
branch ends with the rule for which $\bang A$ is principal. And also
in this case, we need to use a two-step strategy in the left branch.
First, we observe that a normal cut between $?A$ and $!A$ implies that the proof for $?A$ must contain a $\bang$-rule somewhere (which will box it and only then can a $\query$-rule be applied). 
Because of that, we must have a case for $\Did{Cut}$ which will
transform it into the runtime $\Did{Cut}_{!?_1}$:
\begin{smallequation*}
  \begin{array}{l}
    \infer[\textsc{Cut}]{
    \syncseq {\cpres{x}-{y}{-}(\srv rs{P_1}\pp \srv yz{Q_1})}
    {\query\Delta_1,  \query\Delta_2, \pairQ r{!B}}
    }{
	    \infer[!]
	    	{\syncseq {\srv rs{P_1}} {\query\Delta_1,\pairQ x{?A}, \pairQ r{!B}}  }
	    	{\syncseq {P_1} {\triCtx{\query\Delta_1,\pairQ x{?A}} s B}}
	    &
	     \infer[!]{
	     	\syncseq {\srv yz{Q_1}} {\query\Delta_2,\pairQ y{!A^\perp}}
	     }{
	     	\syncseq {Q_1} {\triCtx {\query\Delta_2} z{A^\perp}}
	     }
              }
              \\[2ex]
   \reducesto_\kappa\quad
    \infer[!]{
    \syncseq {\srv rs{\cpres x-y- (P_1\pp \srv yz{Q_1})} }
    {\query\Delta_1,\query\Delta_2,\pairQ r{!B}} 
    }{
    \infer[\textsc{Cut}_{!?_1}]
    {
    \syncseq {\cpres x-y- (P_1\pp \srv yz{Q_1})}
    {\triCtx{\query\Delta_1,\query\Delta_2} s B}
    }
    {
    {\syncseq {P_1} {\triCtx{\query\Delta_1,\pairQ x{?A}} s B}}
    &
      \syncseq {Q_1} {\triCtx {\query\Delta_2} z{A^\perp}}
      }
      }
  \end{array}
\end{smallequation*}
We can now observe a key reduction for $!$ and $?$ when the
corresponding $?$-rule applies to the left premiss of
$\Did{Cut}_{?!_1}$:
\begin{smallequation*}
  \begin{array}{l}
    \infer[\textsc{Cut}_{!?_1}]
    {
    \syncseq {\cpres{x}-{y}{-}(\client xw{P_2}\pp \srv yz{Q_1})}
    {\Gamma_1,\triCtx{\query\Delta_1, \query\Delta_2} s B}
    }
    {
    \infer[?]
    {\syncseq {\client xw{P_2}} {\Gamma_1,\triCtx{\query\Delta_1,\pairQ x{?A}} s B}  }
    {\syncseq {P_2} {\Gamma_1, \triCtx{\query\Delta_1} s B,\pairQ w{A}}}
    &
		\syncseq {Q_1} {\triCtx {\query\Delta_2} z{A^\perp}}
      }
    \\[2ex]
    \reducesto_\beta\quad
    \infer[\textsc{Cut}_{!?_2}]
    {
    \syncseq {\cpres{w}-{z}{-}({P_2}\pp {Q_1})}
    {\Gamma_1,\triCtx{\query\Delta_1, \query\Delta_2} s B}
    }
    {
    {\syncseq {P_2} {\Gamma_1, \triCtx{\query\Delta_1} s B,\pairQ w{A}}}
    &
      {
      \syncseq {Q_1} {\triCtx {\query\Delta_2} z{A^\perp}}
      }
      }
  \end{array}
\end{smallequation*}
This requires to introduce $\Did{Cut}_{?!_2}$ as the formula $A^\perp$ 
on the right premiss is blocked by the boxed $\Delta_2$.
Similarly to additives, we
can now push the cut up on the right premise until we can empty the
box in front of $A^\perp$. This is done by:
\begin{smallequation*}
  \infer[\textsc{Cut}_{?!_{2}}]{
    \syncseq {\cpres{w}{}{z}{}({P_2} \pp\client{u}{v}{Q_2})} {\Gamma_1,
      \Gamma_2, \Query\dbr{\query\Delta_1, \phl u: \thl{?C}}\phl s: \thl
      {B}}
  }{
    \syncseq {P_2} {\Gamma_1, \Query\dbr{\query\Delta_1}\phl s: \thl B, \phl w: \thl{A}}
    &
    \infer[\query]{
      \syncseq{\client{u}{v}{Q_2}} {\Gamma_2, \Query\dbr{\phl u:
          \thl{?C}}\pairQ {z}{A^\perp}}
    }{
      \syncseq{Q_2} {\Gamma_2, \pairQ {z}{A^\perp}, \phl v: \thl{C}}
    }
  }
\end{smallequation*}
\begin{smallequation*}
  \reducesto_\kappa\quad
  \infer[\query]{
    \syncseq {\client{u}{v}{\cpres{w}{}{z}{}({P_2} \pp {Q_2})}} {\Gamma_1,
      \Gamma_2, \Query\dbr{\query\Delta_1, \phl u: \thl{?C}}\phl z: \thl
      {B}}
  }{
    \infer[\textsc{Cut}]{
      \syncseq {\cpres{w}{}{z}{}({P_2} \pp {Q_2})} {\Gamma_1,  \Gamma_2,
        \Query\dbr{\query\Delta_1}\phl s: \thl {B}, \phl v: \thl{C}}
    }{
      \syncseq {P_2} {\Gamma_1, \Query\dbr{\query\Delta_1}\phl s: \thl B, \phl w: \thl{A}}
      &
      \syncseq{Q_2} {\Gamma_2, \pairQ {z}{A^\perp}, \phl v: \thl{C}}
    }	
  }
\end{smallequation*}

\mypar{Results and Semantics.}  
In Fig.~\ref{fig:syncsemantics1} and Fig.~\ref{fig:syncsemantics2}, we
report 
the reductions and structural transformation derived for synchronous
forwarders.
%
\begin{figure}[t]
  \begin{smallequation*}
    \begin{array}{l@{\ }c@{\ }l@{}l}
      \mbox{Structural equivalence} 
      \\ [1mm]
      \phl{\fwd xy}
      &\equiv&
               \phl{\fwd yx}
      \\
      \phl{\cpres y{\dual{A}}xA(Q \pp P)}
      &\equiv&
               \phl{\cpres xAy{\dual A} (P \pp Q)} 
      \\ 
      \phl{\cpres wBz{\dual B} (P \pp \cpres xAy{\dual A} (Q \pp R))}
      &\equiv&
               \phl{\cpres xAy{\dual A}(\cpres wBz{\dual B} (P \pp Q) \pp R)} 
      &\mbox{} \phl x,\phl z \in\fn(\phl Q)
      \\  
      \multicolumn{3}{l}{
      \phl{
      \cpres{y}{B}{\br{}z}{}(Q \!\!\pp\!\! \cpres xA{\br{w}}{} (\cpres{u}{C}{v}{}(P_1 \!\!\pp\!\! P_2)\!\!\pp\!\! R))	
      }}
  	\\
	\multicolumn{3}{l}{	  	
      \hspace*{2cm}\equiv\quad
      \phl{
      \cpres{u}{C}{v}{}(P_1 \!\!\pp\!\! \cpres{y}{B}{\br{}z}{}(Q  \!\!\pp\!\! \cpres xA{\br{w}}{} (P_2\!\!\pp\!\! R)))	
      }}
  	&
  	  \mbox{} \phl x,\phl v \in\fn(\phl{P_2})
      \\ 
      \multicolumn{3}{l}{
      \phl{
      \cpres{y}{B}{\br{}z}{}(\cpres{u}{C}{v}{}(Q_1 \!\!\pp\!\! Q_2)  \!\pp\! \cpres xA{\br{w}}{} (P\!\!\pp\!\! R))	
      }}
      \\
      \multicolumn{3}{l}{
      \hspace*{2cm}\equiv\quad
      \phl{
      \cpres{u}{C}{v}{}(Q_1 \!\!\pp\!\! \cpres{y}{B}{\br{}z}{}(Q_2  \!\!\pp\!\! \cpres xA{\br{w}}{} (P\!\!\pp\!\! R)))	
      }}
      &\mbox{}\phl y,\phl v \in\fn(\phl{Q_2})
      \\ 
      \multicolumn{3}{l}{
      \phl{
      \cpres{y}{B}{\br{}z}{}(Q  \!\!\pp\!\! \cpres xA{\br{w}}{} (P\!\!\pp\!\! \cpres{u}{C}{v}{}(R_1 \!\!\pp\!\! R_2)))	
      }}
      \\
      \multicolumn{3}{l}{
      \hspace*{2cm}\equiv\quad
      \phl{
      \cpres{u}{C}{v}{}(R_1 \!\!\pp\!\! \cpres{y}{B}{\br{}z}{}(Q  \!\!\pp\!\! \cpres xA{\br{w}}{} (P\!\!\pp\!\! R_2)))	
      }}
     &\mbox{}\phl z,\phl v \in\fn(\phl{R_2})
     \\ [2mm]
      \mbox{Key Reductions ($\beta$)}
      \\ [1mm]
       \phl{\cpres x{X}y{\dual X} ( \fwd xw \pp Q)}
       &\reducesto_\beta&
       \phl{Q \substone wy}
       \\
       \phl{\cpres x{\one}y\perp(\close x\pp \wait yP)}\quad
       &\reducesto_\beta&
       \phl{P}
       \\
      \phl{\cpres{x}{A\tensor B}{y}{\dual A\parr \dual B}(\send xuPQ  \pp \recv yvR)}
      &\reducesto_\beta&
      \phl{\cpres{x}{A}{\br{}y}{\dual A}(Q  \pp \cpres uA{\br{v}}{\dual A} (P\pp R))}
       \\
       \phl{\cpres x{A\oplus B}y{\dual A\with \dual B} (\inl xP\pp \Case yQR)}
       &\reducesto_\beta&
       \phl{\cpres xAy{\dual A} (P\pp Q)}
       \\
       \phl{\cpres x{A\oplus B}y{\dual A\with \dual B} (\inr xP\pp \Case yQR)}
       &\reducesto_\beta&
       \phl{\cpres x{B}y{\dual B} (P\pp R)}
       \\
       \phl{\cpres x{\query A}y{\bang A}(\client xuQ \pp \srv yvP)}
       &\reducesto_\beta&
       \phl{\cpres u{A}v{\dual A}(P \pp Q)}
    \end{array}
  \end{smallequation*}
  \caption{Semantics of Synchronous Forwarders: Equivalences and Reductions}
  \label{fig:syncsemantics1}
\end{figure}
The collection of rules defining essential reductions and commuting
conversions are sound and complete, which is summarized by the
following cut admissibility theorem.

\begin{figure}[t]
  \begin{smallequation*}
    \begin{array}{l@{\ }c@{\ }l@{}l}
      \\ [1mm] 
      \phl{\cpres{x}{}{y}{}(\wait{u}{P}\pp Q)}
      &\reducesto_\kappa&
                          \wait{u}{\cpres{x}{}{y}{}(P\pp Q)}
      \\
      \phl{\cpres{x}{}{y}{}(\send uv{P}{Q} \pp R)}
      &\reducesto_\kappa&
                          \send uv{P}{\cpres{x}{}{y}{}(Q \pp R)}
      \\  
       \phl{\cpres{x}{}{y}{}(\recv uv{P}\pp Q)}
       &\reducesto_\kappa&
       \recv uv{\cpres{x}{}{y}{}(P \pp Q)}
       \\
       \phl{\cpres{x}{}{y}{}(\inlr{u}{P}\pp Q)}
       &\reducesto_\kappa&
       \phl{\inlr{u}{\cpres{x}{}{y}{}(P\pp Q)}}
      \\
      \phl{\cpres{x}{}{y}{}(\Case{u}{P}{Q} \pp R)}
      & \reducesto_\kappa &
                            \Case{u}{\cpres{x}{}{y}{}(P\pp R)}{\cpres{x}{}{y}{}(Q\pp R)}
      \\
       \phl{\cpres{x}A{y}{A^\perp}(\srv xzP \pp \srv {u}{v}{Q})}
       &\reducesto_\kappa&
       \srv {u}{v}{\cpres{x}A{y}{A^\perp}(\srv xzP \pp {Q})}
       \\
       \phl{\cpres{x}A{y}{A^\perp}(\client uvP \pp Q)}
       &\reducesto_\kappa&
       \client {u}{v}{\cpres{x}A{y}{A^\perp}(P \pp {Q})}
      \\[1mm] 
      \phl{\cpres{y}{A}{\br{}z}{\dual A}(Q  \!\pp\! \cpres xA{\br{w}}{\dual A} (P\!\pp\! \wait{u}R))}
      &\reducesto_\kappa&
                          \multicolumn{2}{l}{
                          \phl{\wait{u}{\cpres{y}{A}{\br{}z}{\dual A}(Q  \!\pp\! \cpres xA{\br{w}}{\dual A} (P\!\pp\! R))}}
                          }
      \\
      \phl{
      \cpres yB{[]z}{B^\perp}(Q\!\pp\! \cpres xA{[w]}{A^\perp} (P \!\pp\! \send uv{R}{T}))
      }
      & \reducesto_\kappa&
      \send uv{\cpres xAw{A^\perp} (P \!\pp\! R)}{\cpres yBz{B^\perp}(Q \!\pp\! T)}
      &\mbox{}\phl w \in\fn(\phl R)
      \\ 
      \phl{
      \cpres yB{[]z}{B^\perp}(Q\!\!\pp\!\! \cpres xA{[w]}{A^\perp} (P \!\!\pp\!\! \send uv{R}{T}))
      }
      &\reducesto_\kappa&
                          \send uv{R}{\cpres yB{[]z}{B^\perp}(Q\!\!\pp\!\! \cpres xA{[w]}{A^\perp} (P \!\!\pp\!\! {T}))}
      &\mbox{}\phl w \in\fn(\phl T) 
      \\ 
      \phl{\cpres{y}{A}{\br{}z}{\dual A}(Q  \pp \cpres xA{\br{w}}{\dual A} (P\pp \recv uvR))}
      &\reducesto_\kappa&
      \multicolumn{2}{l}{
      \phl{\recv uv{\cpres{y}{A}{\br{}z}{\dual A}(Q  \pp \cpres xA{\br{w}}{\dual A} (P\pp R))}}
    }
      \\
      \phl{
	  \cpres{y}{A}{\br{}z}{\dual A}(Q  \!\pp\! \cpres xA{\br{w}}{\dual A} (P\!\pp\! \inlr{u}R))    
  		}
      &\reducesto_\kappa&
      \multicolumn{2}{l}{
      \phl{
     \inlr{u}{\cpres{y}{A}{\br{}z}{\dual A}(Q  \!\pp\! \cpres xA{\br{w}}{\dual A} (P\!\pp\! R))}    
    	}}
      \\
      \multicolumn{3}{l}{
      \phl{
      \cpres{y}{A}{\br{}z}{\dual A}(Q  \!\pp\! \cpres xA{\br{w}}{\dual A} (P\!\pp\! \Case{u}{R}{T}))
      } 
  	}
      \\
      \multicolumn{4}{l}{
      \hspace*{1cm}\reducesto_\kappa\quad
      \Case{u}{\cpres{y}{A}{\br{}z}{\dual A}(Q  \!\pp\! \cpres xA{\br{w}}{\dual A} (P\!\pp\! R))}
      {\cpres{y}{A}{\br{}z}{\dual A}(Q  \!\pp\! \cpres xA{\br{w}}{\dual A} (P\!\pp\! T))}
    	}
      \\
      \phl{\cpres{y}{A}{\br{}z}{\dual A}(Q  \pp \cpres xA{\br{w}}{\dual A} (P\pp \client {u}{v}R))}
      &\reducesto_\kappa&
      \multicolumn{2}{l}{
      \phl{\client {u}{v}{\cpres{y}{A}{\br{}z}{\dual A}(Q  \pp \cpres xA{\br{w}}{\dual A} (P\pp R))}}
    }
    \end{array}
\end{smallequation*}
   
  \caption{Semantics of Synchronous Forwarders: Commuting Conversions ($\kappa$)}
  \label{fig:syncsemantics2}
\end{figure}

\begin{theorem}[Cut Admissibility]
  The cut-rules $\Did{Cut}$, $\Did{Cut}_{\tensor\parr}$,
  $\Did{Cut}_{\with\oplus_{1\#}}$, $\Did{Cut}_{\with\oplus_{2\#}}$,
  $\Did{Cut}_{\bang\query_1}$, $\Did{Cut}_{\bang\query_2}$ are
  admissible rules in synchronous forwarder logic.
\end{theorem}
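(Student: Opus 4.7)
The plan is to prove cut-admissibility by a well-founded lexicographic induction on a triple $(|A|, w(\Gamma), h_1 + h_2)$, where $|A|$ is the logical complexity of the principal cut-formula, $w(\Gamma)$ is a weight measuring the total complexity of the boxed pending contexts (the $\br{\Delta_1}$, $\Select[\#]\dbr{\oplus\Delta}$, and $\Query\dbr{\query\Delta}$ annotations that appear in the runtime cut rules), and $h_1,h_2$ are the heights of the derivations of the two premisses. The weights must be arranged so that, for example, replacing a $\Did{Cut}$ on $A \tensor B / A^\perp \parr B^\perp$ by a $\Did{Cut}_{\tensor\parr}$ strictly decreases $|A|$, the subsequent propagation of $\Did{Cut}_{\tensor\parr}$ through inner rules strictly decreases $h_1 + h_2$, and the eventual key reduction of $\Did{Cut}_{\tensor\parr}$ against a matching $\tensor$-rule produces two plain $\Did{Cut}$s on strictly simpler formulas, both of which fall under the induction hypothesis.

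The core of the argument is a case analysis, for each of the six cut rules, on the last rule applied in each premiss. If the cut-formula is principal on both sides, I apply the corresponding key reduction from Fig.~\ref{fig:syncsemantics1}; this either eliminates the cut outright (as in the axiom and unit cases) or replaces it by (runtime) cuts of strictly smaller measure. If the cut-formula is not principal on at least one side, I apply the appropriate commuting conversion from Fig.~\ref{fig:syncsemantics2} to push the cut upward through that premiss; this leaves $|A|$ and $w(\Gamma)$ unchanged but strictly reduces $h_1 + h_2$. The structural equivalences at the top of Fig.~\ref{fig:syncsemantics1} are used to rearrange nested cuts so that the relevant premiss can be attacked, and a routine inspection of the typing rules in Fig.~\ref{fig:sync} and Fig.~\ref{fig:sync_cutrules} suffices to show that all of these transformations preserve typability.

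The main obstacle, as hinted by the derivation sketches in Section~\ref{sec:cut-elim}, will be the staged reductions for additives and exponentials, where a single communication is not resolved by one key reduction but cascades through several runtime cuts: for additives the sequence $\Did{Cut} \reducesto \Did{Cut}_{\oplus\with_{1\#}} \reducesto \Did{Cut}_{\oplus\with_{2\#}} \reducesto \Did{Cut}$, and analogously for exponentials with $\Did{Cut}_{!?_1}$ and $\Did{Cut}_{!?_2}$. For these cascades to fit a single well-founded order, the weight $w(\Gamma)$ must decrease exactly when a box $\oplus\Delta$ (resp.\ $\query\Delta$) shrinks by removing an endpoint whose selection (resp.\ invocation) has just been forwarded. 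Verifying this bookkeeping carefully, and checking that the box annotations and the $\Star$ marker are propagated correctly by every commuting conversion, is the technical heart of the proof. Once these invariants are in place, the global argument follows the same schema as cut-elimination for \CP.
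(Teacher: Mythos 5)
Your overall architecture is the same as the paper's: the appendix proves the theorem as eight mutually recursive admissibility statements (one per cut rule, with left/right additive variants split), by induction "over the cut formula and the left and right derivation", with key cases applying the principal reductions and commutative cases pushing the cut through the last rule of the appropriate premiss. Your description of the additive/exponential cascades ($\Did{Cut} \reducesto \Did{Cut}_{\oplus\with_{1\#}} \reducesto \Did{Cut}_{\oplus\with_{2\#}} \reducesto \Did{Cut}$) and of the three-premiss treatment of $\Did{Cut}_{\tensor\parr}$ matches what the paper does. The one substantive difference is that the paper works directly on derivations via mutual recursion and never needs your weight $w(\Gamma)$, nor the structural equivalences (those belong to the process-level cut-\emph{elimination} theorem, not to admissibility).

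The genuine problem is your induction measure $(|A|, w(\Gamma), h_1+h_2)$. The transitions that \emph{introduce} a runtime cut --- $\Did{Cut}$ on $A\oplus B$ becoming $\Did{Cut}_{\oplus\with_{1\#}}$, $\Did{Cut}$ on $\query A$ becoming $\Did{Cut}_{!?_1}$, and $\Did{Cut}$ on $A\tensor B$ becoming $\Did{Cut}_{\tensor\parr}$ --- strictly \emph{increase} the amount of boxed pending material: e.g.\ the unboxed $\oplus\Delta_1,\oplus\Delta_2$ of the two premisses reappear inside $\Select[\#]\dbr{\oplus\Delta_1,\oplus\Delta_2}\phl z:\thl C$ in the conclusion of $\Did{Cut}_{\oplus\with_{1\#}}$. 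At these steps $|A|$ is unchanged (the boxed cut formula is still $A\oplus B$), $w(\Gamma)$ grows, and only $h_1+h_2$ shrinks; since $w(\Gamma)$ sits above the heights in your lexicographic order, the triple does not decrease and the induction is not well-founded as stated. The fix is to drop $w(\Gamma)$ (or place it last, where it becomes redundant): the pair consisting of the cut formula's complexity (taking $|A|+|B|$ for $\Did{Cut}_{\tensor\parr}$, and the boxed $|A\oplus B|$ resp.\ the selected $|A_\#|$ for the two additive runtime cuts) followed by the premiss derivation heights already decreases at every step --- box-shrinking key cases consume an $\oplus_\#$ or $\query$ rule from one premiss, so the height component covers them, and the $\oplus_\#$ step that makes the boxed formula principal shrinks $A\oplus B$ to $A_\#$. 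With that correction your plan coincides with the paper's proof.
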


From the theorem above we can extend synchronous forwarders with the
six admissible cut rules, for which we write $\syncseqcut{P}{\Gamma}$.
By induction on the number of cut rules in the derivation of a
synchronous forwarder, we obtain immediately the main theorem of this
section. In the sequel, $\Longrightarrow$ corresponds to
$\reducesto_\kappa \circ \reducesto \circ \reducesto_\kappa$ and the
$*$ denotes its reflexive and transitive closure.
\begin{theorem}[Cut elimination]
  If
  $\syncseqcut{P}{\Gamma}$ 
  then there exists a cut-free $\phl Q$ such that
  $\phl P \Reducesto^* \phl Q$ and $\syncseq{Q}{\Gamma}$.
\end{theorem}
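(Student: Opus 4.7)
The plan is to prove the theorem by induction on the number $n$ of occurrences of cut rules in the derivation $\mathcal{D}$ of $\syncseqcut{P}{\Gamma}$. For the base case $n=0$, the derivation is already cut-free, so take $\phl Q := \phl P$ and the zero-length reduction $\phl P \Reducesto^* \phl Q$ trivially suffices.

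For the inductive step, I would select a \emph{topmost} cut in $\mathcal{D}$, that is, an application of one of the six cut rules $\Did{Cut}, \Did{Cut}_{\tensor\parr}, \Did{Cut}_{\with\oplus_{1\#}}, \Did{Cut}_{\with\oplus_{2\#}}, \Did{Cut}_{\bang\query_1}, \Did{Cut}_{\bang\query_2}$ whose two premises are derived without any further occurrence of a cut. By Cut Admissibility, this cut is redundant: there exists a cut-free sub-derivation proving the same endsequent, obtained by the concrete transformations laid out in Section~\ref{sec:cut-elim}. Tracing through the admissibility argument at the level of process terms, the original process $\phl P$ first rewrites via a (possibly empty) sequence of commuting conversions $\reducesto_\kappa$ that push the cut upward until the cut formula becomes principal on both sides; it then takes a key reduction step $\reducesto_\beta$; and it finally rewrites again via $\reducesto_\kappa$ to propagate any resulting runtime cut down through side rules. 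Each such cycle is exactly one application of $\Reducesto = \reducesto_\kappa \circ \reducesto_\beta \circ \reducesto_\kappa$. Replacing the topmost cut in this way yields $\phl P \Reducesto^* \phl{P'}$, where $\phl{P'}$ has a derivation $\mathcal{D}'$ with strictly fewer cut occurrences than $\mathcal{D}$. The induction hypothesis applied to $\mathcal{D}'$ produces a cut-free $\phl Q$ with $\phl{P'} \Reducesto^* \phl Q$ and $\syncseq{Q}{\Gamma}$, and concatenating the two reduction sequences gives $\phl P \Reducesto^* \phl Q$ as required.

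The only subtlety, which is already absorbed into the statement of Cut Admissibility, is that in the multiplicative, additive and exponential cases a single $\beta$-step does not necessarily yield cuts on strictly smaller formulas: it may instead produce a \emph{runtime} cut ($\Did{Cut}_{\tensor\parr}$, $\Did{Cut}_{\with\oplus_{i\#}}$ or $\Did{Cut}_{\bang\query_i}$) whose cut-formula is blocked by a one-size buffer and must be discharged by further commuting conversions before it can be turned into simpler cuts. This is the main technical obstacle in the admissibility proof, and it is handled there by equipping the induction with a lexicographic measure on (size of the cut formula, stage of the cut rule, sum of heights of the two premises), so that each $\Reducesto$ cycle decreases this measure. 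Once Cut Admissibility is in hand, as the statement permits us to assume, the outer induction on the number of cuts is wholly routine and the theorem follows.
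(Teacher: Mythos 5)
Your proposal is correct and follows essentially the same route as the paper, which proves this theorem by induction on the number of cut occurrences, invoking Cut Admissibility to discharge each one while reading off the associated $\Reducesto$ steps. Your additional remarks on topmost cuts, the runtime cut rules, and the lexicographic measure simply make explicit what the paper delegates to the admissibility proof in the appendix.
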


As mentioned in the introduction, we can deduce the following
corollaries
\begin{corollary}[Subject reduction]
  If $\syncseq P \Gamma$ and $\phl P \Reducesto \phl Q$, then
  $\syncseq Q  \Gamma$.
\end{corollary}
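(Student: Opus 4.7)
The plan is to prove subject reduction by case analysis on the single reduction step $\phl P \Reducesto \phl Q$. Since $\Reducesto = \reducesto_\kappa \circ \reducesto \circ \reducesto_\kappa$, this amounts to checking that both the commuting conversions of Fig.~\ref{fig:syncsemantics2} and the key $\beta$-reductions of Fig.~\ref{fig:syncsemantics1} preserve the typing context $\Gamma$. The main observation is that each rewrite listed in those figures is one of the type-preserving transformations that already appear in the proof of Cut Admissibility; the present corollary is therefore a specialisation of that proof to a single reduction step rather than to a full normalisation. Note that $\syncseq{P}{\Gamma}$ trivially entails $\syncseqcut{P}{\Gamma}$, so we may freely reason inside the cut-enhanced system and then appeal to Cut Elimination to return to a cut-free derivation witnessing $\syncseq{Q}{\Gamma}$.

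For the structural equivalences and the $\kappa$-conversions, I would exhibit a derivation of the right-hand side with context $\Gamma$ in $\syncseqcut{}{}$ starting from one of the left-hand side, by direct inspection of the lists in Fig.~\ref{fig:syncsemantics1}--Fig.~\ref{fig:syncsemantics2}. These are all rule-permutation arguments: a $\Did{Cut}$ or runtime cut is pushed past the logical rule applied to a side formula, and the resulting derivation has the same conclusion. The structural equivalences for $\fwd xy$ and for reassociating nested parallel compositions are symmetric, hence trivially type-preserving.

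For the $\beta$-reductions I would proceed case by case. In each case, starting from a derivation whose root is a $\Did{Cut}$ or runtime cut whose process term matches the left-hand side, I would construct a derivation of $\syncseqcut{Q}{\Gamma}$ of the right-hand side, possibly mediated by one of the six runtime cuts of Fig.~\ref{fig:sync_cutrules}; the link and $\one\bot$ cases require no cut on the reduct at all. By Cut Admissibility (hence Cut Elimination) this yields $\syncseq{Q}{\Gamma}$, as required.

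The main obstacle is the bookkeeping in the additive and exponential cases, where a single $\beta$-step spawns intertwined runtime cuts such as $\Did{Cut}_{\oplus\with_{1\#}}$, $\Did{Cut}_{\oplus\with_{2\#}}$, $\Did{Cut}_{!?_1}$, or $\Did{Cut}_{!?_2}$, and the premisses of the new derivation carry box annotations like $\Left\dbr{\oplus\Delta_1, \oplus\Delta_2}\phl z\!:\!\thl C$ or $\Query\dbr{\query\Delta_1,\query\Delta_2}\phl{z}\!:\!\thl{C}$. Verifying that the contexts produced by the reduction match the exact shape demanded by the runtime rule requires a careful inventory of which endpoints each branch of $\phl P$ contributes to which box, together with the fact that the $\Star$, $\Left$, $\Right$, and $\Query$ markers are preserved along the permutation of cuts with logical rules. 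This bookkeeping was already carried out in the informal walk-through of Section~\ref{sec:cut-elim} when introducing each runtime cut, so the formal proof of subject reduction essentially restates that analysis one reduction at a time.
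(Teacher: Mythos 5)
Your proposal matches the paper's own treatment: the paper gives no separate argument for subject reduction but derives it directly from the case analysis underlying Cut Admissibility/Cut Elimination, exactly as you do by checking that each $\kappa$-conversion and $\beta$-reduction is one of the type-preserving proof transformations of \S~\ref{sec:cut-elim}, possibly passing through the runtime cuts of Fig.~\ref{fig:sync_cutrules}. The only caveat is that after a single step the reduct may still contain restrictions, so strictly the conclusion lives in $\syncseqcut{}{}$ rather than the cut-free judgement, but this imprecision is already present in the paper's own statement.
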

\begin{corollary}[Deadlock Freedom]
  If $\syncseq P \Gamma$, then there exists a restriction-free
  $\phl Q$ such that $\phl P \Reducesto^* \phl Q$ and
  $\syncseq Q \Gamma$.
\end{corollary}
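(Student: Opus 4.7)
The plan is to derive deadlock freedom as a direct consequence of the preceding Cut Elimination theorem, using the fact that the $\res$ (restriction) constructor in the process syntax corresponds exactly to the cut rules in the typing system.

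First, I would invoke the Cut Elimination theorem on the given derivation of $\syncseq P \Gamma$ (interpreted as the extended system $\syncseqcut{P}{\Gamma}$ that admits the six cut rules). This delivers a cut-free process $\phl Q$ together with a reduction sequence $\phl P \Reducesto^* \phl Q$ and a derivation $\syncseq Q \Gamma$ in the base synchronous forwarder logic of Fig.~\ref{fig:sync}.

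Next, I would establish that cut-free derivations yield restriction-free processes. This is by straightforward inspection of the rules in Fig.~\ref{fig:sync}: examining each of $\Did{Ax}$, $\bot$, $\one$, $\tensor$, $\parr$, $\with$, $\oplus_l$, $\oplus_r$, $!$, $?$, one sees that none of them mention the $\res$ constructor in their conclusion or introduce any new binder of the form $\phl{\res{xy}}$. The only rules that decorate a process with $\res$ are the six cut rules $\Did{Cut}$, $\Did{Cut}_{\tensor\parr}$, $\Did{Cut}_{\with\oplus_{1\#}}$, $\Did{Cut}_{\with\oplus_{2\#}}$, $\Did{Cut}_{!?_1}$, $\Did{Cut}_{!?_2}$ displayed in Fig.~\ref{fig:sync_cutrules}. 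Formally, this can be shown by a trivial induction on the cut-free derivation of $\syncseq Q \Gamma$: in each case, the constructed process term is built from the subterms produced by the premisses using only prefix operators ($\wait{}{}$, $\close{}$, $\send{}{}{}{}$, $\recv{}{}{}$, $\Case{}{}{}$, $\inl{}{}$, $\inr{}{}$, $\srv{}{}{}$, $\client{}{}{}$) or the axiom $\fwd{}{}$, none of which introduce a $\res$.

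Combining these two facts yields the claim: the cut-free $\phl Q$ obtained from cut-elimination is restriction-free, $\syncseq Q \Gamma$ holds, and $\phl P \Reducesto^* \phl Q$, as required. There is no real obstacle here beyond the bookkeeping verification just sketched; the substance of deadlock freedom is entirely absorbed into the Cut Elimination theorem, which does the heavy lifting of showing that any composition of synchronous forwarders can be normalised away by repeated application of the key reductions and commuting conversions from Figs.~\ref{fig:syncsemantics1} and~\ref{fig:syncsemantics2}.
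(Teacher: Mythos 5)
Your proposal is correct and matches the paper's intended argument: the corollary is stated as a direct consequence of the Cut Elimination theorem, with the only additional observation being that cut-free derivations in Fig.~\ref{fig:sync} never introduce the restriction constructor, so the cut-free $\phl Q$ produced by cut elimination is restriction-free. Your reading of the hypothesis as $\syncseqcut{P}{\Gamma}$ is also the right one, since otherwise the statement is trivial with $\phl Q = \phl P$.
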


\begin{example} \label{ex:comp} We revisit the two buyers example and
  its extension (Examples~\ref{ex:2buf} and~\ref{ex:2buff}). Using
  compositionality, we can combine forwarders $\phl {P_1}$ and
  $\phl{P_2}$ into $\phl P = \phl{\res{b_2'b_2''} (P_1 | P_2)}$.  By
  cut reductions, we can transform $\phl{P}$ into a new process
  $\phl{P'}$ that does not appeal to endpoints $\phl{b_2'}$ and
  $\phl{b_2''}$, such that:
  \begin{smallequation*}
    \begin{array}[t]{l}
      \quad \phl {P'}\syncseq {}{}\qquad
      \pairQ{b_1}{\tname^\perp \parr \tcost \tensor \tcost^\perp \parr\perp},
      \pairQ{c_1}{\tcost \parr ((\taddr^\perp \parr \perp) \with \perp)}, \\
      \pairQ{c_2}{\tcost \parr ((\taddr^\perp \parr \perp) \with \perp)},
      \pairQ{s}{\tname \tensor \tcost^\perp \parr \tcost^\perp \parr
      ((\taddr \tensor \one) \oplus \one)}
    \end{array}
  \end{smallequation*}
  And, by cut elimination, we know that the resulting process is a
  forwarder.  \qed
\end{example}



\section{From coherence to synchronous forwarders (and
  back)}\label{sec:coherencesf}
In this section, we establish 
a strong
connection between coherence and synchronous forwarders. First, we
show that any coherence proof can be transformed into a synchronous
forwarder (soundness). Then, we show the opposite, i.e., every
synchronous forwarder corresponds to a coherence proof (completeness).

\subsection{Soundness}
We begin with the soundness result, which builds on the already
mentioned result from Carbone et al.~\cite{CLMSW16}, where a
translation to arbiters in CLL is provided. The following theorem
shows a stronger result since coherence can be translated into
synchronous forwarders, which is a proper fragment CLL.

\begin{theorem}[Soundness]\label{thm:soundness} 
  If $\cohseq G \Delta$ then there exists a $\phl P$, such that
  $\syncseq{P}{\Delta^\perp}$
\end{theorem}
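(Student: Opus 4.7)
The plan is to proceed by induction on the derivation of $\cohseq G \Delta$, following closely the arbiter translation $\coepp\cdot$ of Fig.~\ref{fig:coepp}, but this time verifying that each construction is typeable in synchronous forwarder logic $\vdash$ rather than only in CLL. Concretely, for each coherence rule I would show how the corresponding process can be derived using the synchronous forwarder rules on the dualised context. Since the arbiter translation already gives a candidate $\phl P$ for each coherence proof, the real content of the theorem is that this $\phl P$ lies in the restricted subsystem of Fig.~\ref{fig:sync}, i.e.\ that all reads are properly stored in buffers (notated with $[-]$, $\Star$, $\Left$, $\Right$, $\Query$) and then discharged by a matching forward step.

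In the base case \textsc{Axiom}, I would set $\phl P = \fwd{x}{y}$ and apply rule \textsc{Ax}. For $\one\perp$ with premiss $\phl{\botone{\til x}{y}} \gseq \{\pairQ{x_i}{\one}\}_i, \pairQ{y}{\perp}$, I would build $\phl{\wait{x_1}{\cdots\wait{x_n}{\close{y}}}}$; applying rule $\perp$ repeatedly pushes $n$ copies of $\Star$ into the context, which are contracted to one and then consumed by rule $\one$. The multiplicative case $\tensor\parr$ is the most instructive: by IH on the two premisses I get $\phl{P_G}$ and $\phl{P_H}$ typing the dualised premiss contexts, and I then build
\[
\phl{\recv{x_1}{u_1}{\cdots \recv{x_n}{u_n}{\send{y}{v}{P_G\substtwo{\til u}{\til x}{v}{y}}{P_H}}}}
\]
The rule $\parr$ applied on each $\phl {x_i}$ pushes a buffer $\quadQ{u_i}{A_i^\perp}{x_i}{B_i^\perp}$ into the context, freezing $\phl{x_i}$; the subsequent rule $\tensor$ on $\phl y$ splits these buffers exactly into the two required subcontexts, one for each IH. The cases $\oplus\with$ and $\query\bang$ are handled analogously using iterated $\oplus_l$ (or $\oplus_r$) steps below a $\with$, and iterated $?$ steps below a $!$, respectively; in each case the box annotations $\Left\dbr{\cdot}$, $\Right\dbr{\cdot}$, $\Query\dbr{\cdot}$ track precisely the endpoints that still need to receive the broadcast, and become empty exactly when the corresponding IH-subprocess is reached.

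The main obstacle will be the bookkeeping of the buffered contexts through the iterated applications of receive-style rules. In particular, I need to check that at each intermediate step the context really has the form required by the relevant rule of Fig.~\ref{fig:sync} — for instance, that after processing $\phl{x_1},\dots,\phl{x_n}$ by $\parr$ the subsequent $\tensor$ can indeed split the buffered pairs $[\Delta_1]\Delta_2$ consistently with the context partition dictated by the IH. Similarly, for the $\oplus\with$ case I must verify that $\oplus_\#$ is applicable to a context that still contains the boxed annotation $\Select[\#]\dbr{\Delta, \pairQ{x}{A\oplus B}}$, and that after the last iteration the annotation becomes empty so that the IH on the appropriate branch applies. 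A convenient way to keep this manageable is to state the inductive hypothesis with explicit renamings $\substtwo{\til u}{\til x}{v}{y}$, as already anticipated by the translation in Fig.~\ref{fig:coepp}, and to perform the verification schematically for generic $n$ rather than unfolding each case separately.
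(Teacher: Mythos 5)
Your proposal is correct and follows essentially the same route as the paper: induction on the coherence derivation, reusing the arbiter translation of Fig.~\ref{fig:coepp} and checking that each clause is derivable in $\syncseq{}{}$, with the $\tensor\parr$ case handled by iterated $\parr$ applications that load the buffers $\quadQ{u_i}{A_i^\perp}{x_i}{B_i^\perp}$ followed by a single $\tensor$ that discharges them, using exactly the renaming $\substtwo{\til u}{\til x}{v}{y}$ the paper applies to the inductive hypothesis. The paper likewise presents only this case in detail and declares the others similar, so your treatment matches it in both structure and level of detail.
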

\begin{proof}
  By induction on the derivation of $\ghl G$, we show how to construct
  $\syncseq P\Delta^\perp$. The construction, denoted by
  $\coepp{\cdot}$, is identical to that of~\cite{CLMSW16} and is
  reported in
  Figure~\ref{fig:coepp}. 
  Below, we report the case of $\tensor\parr$. The other cases are similar.
  
  Suppose the derivation of $\phl G$ ends with 
  \begin{smallequation*}
    \infer[\tensor\parr]{
      \phl{G=} \ghl{\parrtensor{x_1 \dots x_n}{y}{G_1}{G_2}} \gseq
      \thl \Gamma,  \{\phl{x_i} : \thl{A_i \tensor B_i}\}_i,
      \phl{x}:\thl{C \parr D}
    }{
      \ghl {G_1} \gseq \thl \{\phl{x_i}:\thl{A_i}\}_i, \phl x:\thl C
      &
      \ghl {G_2} \gseq \thl {\Gamma}, \{\phl{x_i}:\thl{B_i}\}_i, \phl x:\thl D
    }
  \end{smallequation*}
  
  Then by i.h.\ on $\phl{G_1[y_i/x_i,y/x]}$, there exists
  $\syncseq{P_1}{\thl \{\phl{y_i}:\thl{A_i^\perp}\}_i, \phl y:\thl
    C^\perp}$; and by i.h. on $\phl{G_2}$ there exists
  $\syncseq{P_2}{\thl {\Gamma^\perp}, \{\phl{x_i}:\thl{B_i^\perp}\}_i,
    \phl x:\thl D^\perp}$.
  Therefore, we can construct:
  \begin{smallequation*}
    \infer[\parr]{
      \syncseq{\recv{x_1}{y_1}\ \dots \recv{x_n}{y_n}\ \send xy{P_1}{P_2}} 
      {\Gamma^\perp, \{\pairQ{x_i}{A_i^\perp \parr B_i^\perp}\}_i, \pairQ x{C^\perp \tensor D^\perp}}
    }{
      \deduce[\vdots]{}{
        \infer[\parr]{}{
          \infer[\tensor]{
            \syncseq{\send xy{P_1}{P_2}} {\Gamma^\perp, \{\quadQ
              {y_i}{A_i^\perp}{x_i}{B_i^\perp}\}_i, \pairQ x{C^\perp
                \tensor D^\perp}}
          }{
            \syncseq{P_1}{\thl \{\phl{y_i}:\thl{A_i^\perp}\}_i, \phl y:\thl C^\perp}
            &
            \syncseq{P_2}{\thl {\Gamma^\perp}, \{\phl{x_i}:\thl{B_i^\perp}\}_i, \phl x:\thl D^\perp}
          }
	}}}
  \end{smallequation*}
  which is indeed a synchronous forwarder derivation for 
  \begin{smallequation*}
    \coepp G\syncseq{}{(\Gamma, \{\pairQ{x_i}{A_i\tensor B_i}\}_i, \pairQ x{C \parr D})^\perp}
  \end{smallequation*}
as requested.\qed
\end{proof}
The proof of soundness shows not only that any coherence judgement
$\phl G\gseq\Delta$ can be proven as
$\coepp G\syncseq{}{\Delta^\bot}$, but also provides a constructive
algorithm for it.


%
%

\subsection{Completeness}\label{subsec:completeness}
We now move to proving the opposite of Theorem~\ref{thm:soundness},
i.e., given a synchronous forwarder $\syncseq P\Delta$, we can always
build a coherence proof $\ghl{G}\gseq\Delta^\perp$.
Intuitively, the procedure for transforming a synchronous forwarder
into a coherence proof consists of permuting rules within the derivation of
$\syncseq{P}{\Delta}$ in order to synthetise rules into coherence-like blocks.
Additives and exponentials are treated by permuting
$\oplus$s and $\query$s \emph{down} to the matching $\with$ and $\bang$ respectively; while, in contrast, multiplicatives and units get regrouped by pushing $\parr$s and
$\perp$s \emph{up} to their corresponding $\tensor$ and $\one$ respectively,
without changing the structure of the proof (apart from the rules
moved down/up). 
This is because of the different communication patterns modelled by
coherence: additives and exponentials use \emph{broadcasting} (one
endpoint broadcasts a message to many endpoints); on the other hand,
for multiplicatives and units, communication consists of
\emph{gathering} (many endpoints send a message to a single
collector).
%
%

\mypar{Additives and Exponentials.}  Our first step shows that any
application of $\oplus_\#$ (for $\#\in\{l,r\}$) and any application of
$?$ can be permuted down to the bottom of a proof if the proposition
it introduces is still in the context.
\begin{lemma}[$\oplus_\#$/$\query$ Invertibility]
	\label{lem:opluspermute}\label{lem:querypermute}
	\begin{enumerate}
		\item 	Let
                  $\syncseq P{\Gamma,
                    \Select[\#]\dbr{\Delta, \phl{x}:\thl{A_l\oplus
                        A_r}} \phl{z}:\thl{C}}$.
                  Then, there exists $\phl {P'}$ such that
		\begin{smallequation*}
			\infer[\oplus_\#]
			{\syncseq {\inl x{P'}} {\Gamma,  \Select[\#]\dbr{\Delta, \phl{x}:\thl{A_l\oplus
                        A_r}} \phl{z}:\thl{C} }}
			{\syncseq {P'} {\Gamma,  \Select[\#]\dbr{\Delta} \phl{z}:\thl{C}, \phl x: \thl{A_\#}}}
		\end{smallequation*}
		
		\item
			Let $\syncseq P{\Gamma, \triCtx{\Delta, \pairQ x{?A}}{z}{C}}$. Then,
			there exists $\phl {P'}$ such that
			\begin{smallequation*}
				\infer[?]{
					\syncseq {\client xy{P'} }{\Gamma, \Query[\![\Delta, \phl x:\thl{\query A}]\!]\phl z:\thl{C}} 
				}{
					\syncseq {P'}{\Gamma, \Query[\![\Delta]\!]\phl z:\thl{C}, \phl y:\thl A}
				}
			\end{smallequation*}
	\end{enumerate}
\end{lemma}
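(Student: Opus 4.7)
My plan is to prove both parts simultaneously by induction on the height of the given derivation, permuting the target rule ($\oplus_\#$ for part~(1), $\query$ for part~(2)) downward until it becomes the last inference. The two parts are structurally symmetric, so I focus on (1) and only note differences for (2) at the end.

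The key observation is that inside a $\Select$-box the formula $\pairQ{x}{A_l\oplus A_r}$ is opaque to every rule in Fig.~\ref{fig:sync} except $\oplus_\#$ applied to $x$, and symmetrically inside a $\Query$-box the formula $\pairQ{x}{\query A}$ can only be removed by $\query$ on $x$. Every other rule either operates on an active endpoint, creates new boxes (as $\with$ does on its premises), or acts on a different box or a different element of the same box; none of these interfere with the target formula.

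I case-split on the last rule of the derivation of $\syncseq P{\Gamma, \Select[\#]\dbr{\Delta, \phl{x}:\thl{A_l\oplus A_r}}\phl{z}:\thl{C}}$. The base case is when the last rule is already $\oplus_\#$ on $x$, in which case $P'$ is just the premise-process. The rules \textsc{Ax}, $\one$, and $!$ are impossible as their conclusion shapes admit no $\Gamma$-slot carrying the required $\Select$-box. For every remaining last rule $R$, the target box is carried unchanged into every premise of $R$, so I invoke the induction hypothesis on each premise to obtain a derivation ending in $\oplus_\#$ on $x$. For a single-premise $R$ (such as $\bot$, $\parr$, another $\oplus_\#$ on a different variable or box, or $\query$), a straight swap that places $R$ first and $\oplus_\#$ at the bottom suffices. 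For a two-premise $R$ (the rules $\with$ and $\tensor$), I reassemble the two rewritten premise-processes with $R$ on top and apply $\oplus_\#$ on $x$ at the very bottom.

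The main obstacle is the verification, case by case, that each such permutation respects the side-conditions of $R$, namely freshness of bound names, uniqueness of $\Star$, and correct splitting of the context into the shape expected by $R$. This matters especially for the two-premise rules, where the two sub-derivations returned by the induction hypothesis must be recombined into a well-typed process. Since the $\Select[\#]$-box around $x$ is a local, disjoint component of the context that $R$ does not analyse, the side-conditions all go through routinely. Part~(2) follows the very same template with $\query$ replacing $\oplus_\#$ and $\Query$-boxes replacing $\Select$-boxes; the impossibility of the rules \textsc{Ax}, $\one$, $!$ as last rule is justified in exactly the same way.
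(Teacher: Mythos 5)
Your proposal is correct and follows essentially the same route as the paper's proof: induction on the derivation with a case analysis on the last rule, a base case when that rule is already $\oplus_\#$ (resp.\ $\query$) on $\phl x$, impossibility of \textsc{Ax}, $\one$, $\bang$, and a rule permutation everywhere else. One small imprecision: for $\tensor$ the target box is \emph{not} carried into every premise --- the left premise has the restricted shape $\syncseq{P}{\Delta_1,\pairQ yA}$ with a basic context and cannot contain a $\Select$- or $\Query$-box, so the induction hypothesis is applied only to the right premise, exactly as the paper does.
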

\begin{proof}
  We show the result for $\oplus_l$; the ones for $\oplus_r$ and
  $\query$ follow the same methodology,
  by induction on the size of the proof, 
  and a case analysis on the last rule applied in the typing derivation of
  $\phl{P}$.
  Note that besides permuting down the rule introducing $A_l\oplus
  A_r$, we do not change the proof structure. 
	
	If the last applied rule is $\oplus_l$ and it is on endpoint
        $\phl x$ then we are done (base case).
	Otherwise, rule $\oplus_l$ may concern either formulas in $\Delta$ or in some
        other box. In both cases, we proceed by a simple
        induction. Below, we look at the case where the formula is in
        another box.
	\begin{smallequation*}
	\infer[\oplus_l]
	{
		\syncseq{P = \inl wP_1} {\Gamma, \ltriCtx{\Delta,
				\phl{x}:\thl{A\oplus B}}{z}{C},
			\ltriCtx{\Delta', \pairQ w{D\oplus E}}yF
		}
	}
	{
		\syncseq{P_1}{\Gamma, \ltriCtx{\Delta,
				\phl{x}:\thl{A\oplus B}}{z}{C},
			\ltriCtx{\Delta'}yF, \pairQ w{D}
		}
	}
	\end{smallequation*}
	By induction hypothesis, there exists $\phl{P'_1}$ such that
	\begin{smallequation*}
		\infer[\oplus_l]
		{
			\syncseq{\inl x{P'_1}}{\Gamma, \ltriCtx{\Delta,
					\phl{x}:\thl{A\oplus B}}{z}{C},
				\ltriCtx{\Delta'}yF, \pairQ w{D}
			}
		}
		{
			\syncseq{P'_1}{\Gamma, \ltriCtx{\Delta}{z}{C},
				\pairQ xA,
				\ltriCtx{\Delta'}yF, \pairQ w{D}
			}
		}
	\end{smallequation*}
	We can then reorganise the derivation to obtain $\phl{P' = }{\inl w{P'_1}}$ with:
	\begin{smallequation*}
		\infer[\oplus_l]
		{
			\syncseq{\inl x{\inl w{P'_1}}} {\Gamma, \ltriCtx{\Delta,
					\phl{x}:\thl{A\oplus B}}{z}{C},
				\ltriCtx{\Delta', \pairQ w{D\oplus E}}yF
			}
		}
		{
			\infer[\oplus_l]
			{
				\syncseq{\inl w{P_1'}}{\Gamma, \ltriCtx{\Delta}{z}{C},
					\pairQ xA,
					\ltriCtx{\Delta',  \pairQ w{D\oplus E}}yF
				}
			}
			{
				\syncseq{P_1'}{\Gamma, \ltriCtx{\Delta}{z}{C},
					\pairQ xA,
					\ltriCtx{\Delta'}yF, \pairQ w{D}
				}
			}
		}
              \end{smallequation*}
              The rest of the case analysis is similar.\qed
\end{proof}

The first step for showing how to transform a synchronous forwarder
into a coherence proof is to use Lemma~\ref{lem:opluspermute} to build
a forwarder that has immediately above every $\with$ all the
associated $\oplus$s, and immediately above every $!$ all the
associated $?$s. For example, for the additives, it gives the
following structure:
\begin{smallequation*}
  \infer[]{ 
    \syncseq{\Case x{P_0}{Q_0}} {\Gamma,\{\pairQ {y_i}{C_i\oplus D_i}\}_{1 \le i\le n},\pairQ x{A \with B}}
  }{
    \infer[]{ 
      \syncseq {P_0 } {\Gamma, \ltriCtx {\{\pairQ {y_i}{C_i\oplus D_i}\}_{i}}x{A}} 
    }{
      \infer[]{ 
        \syncseq {P_1} {\Gamma, \ltriCtx {\{\pairQ {y_i}{C_i\oplus D_i}\}_{2\le i}}x{A}, \pairQ{y_1}{C_1}}
      }{
        \deduce[\vdots]{}{
          \infer[]{}{ 
            \syncseq {P_n} {\Gamma, \pairQ {x}{A}, \{\pairQ {y_i}{C_i}\}_{1 \le i\le n} }
          }
        }
      }
    }
    &
    \infer[]{ 
      \syncseq {Q_0 } {\Gamma, \rtriCtx {\{\pairQ {y_i}{C_i\oplus D_i}\}_i}x{B}} 
    }{
      \infer[]{ 
        \syncseq {Q_1} {\Gamma, \ltriCtx {\{\pairQ {y_i}{C_i\oplus D_i}\}_{2\le i}}x{B}, \pairQ{y_1}{D_1}}
      }{
        \deduce[\vdots]{}{
          \infer[]{}{ 
            \syncseq {Q_n} {\Gamma, \pairQ {x}{A}, \{\pairQ {y_i}{D_i}\}_{1 \le i\le n} }
          }
        }
      }
    }
  }
\end{smallequation*} 
To make this more precise, we extend the relation $\syncseq{}{}$ to a
new relation, dubbed $\syncseqbis{}{}$, with the following two
derivable rules (where index $p$ stands for partial):
\begin{smallequation*}
		\infer[\oplus\with^\perp_p]{
			\syncseqbis
			{\Case x{\inl {y_1}{}\mydots\inl {y_n}{P}}{\inr {y_1}{}\mydots\inr {y_n}{Q}}}
			{\Gamma, \oplus\Delta_1, \Delta_2\oplus\Delta_3, \pairQ x{A \with B}}
		}{ 
			\syncseqbis P
			{\Gamma, \ltriCtx{\oplus\Delta_1}xA,  \Delta_2}
			& 
			\syncseqbis Q
			{\Gamma, \rtriCtx{\oplus\Delta_1}xB, \Delta_3}
		}
\end{smallequation*}
where $\Delta_2 = \{\pairQ {y_i}{C_i}\}_i$, $\Delta_3 = \{\pairQ {y_i}{D_i}\}_i$ 
			and $\Delta_2\oplus\Delta_3 = \{\phl{y_i}:\thl{C_i\oplus D_i}\}_i$;
\begin{smallequation*}
		\infer[\query\bang^\perp_p]{
			\syncseqbis
			{\srv xy{\client {x_1}{y_1}{}\mydots\client{x_n}{y_n}{P}}}
			{\query\Delta_1, \query\Delta_2, \pairQ x{\bang A}}
		}{
			\syncseqbis P
			{\triCtx{\query\Delta_1}yA, \Delta_2}
		}
\end{smallequation*}
where  $\query\Delta_2 = \{\pairQ{x_i}{\query B_i}\}_i$ and $\Delta_2 = \{\pairQ{y_i}{B_i}\}_i$.

\smallskip

Rules $\oplus\with^\perp_p$ and $\query\bang^\perp_p$ simulate the
back-and-forth interaction of rules $\oplus/\with$ and $\query/\bang$,
respectively. In that respect, note that the process terms are
obtained exactly in the form of arbiters.
The following Lemma shows that, given a forwarder with no additive or
exponential message in transit,
additive and exponential rules become admissible in the presence of
the new rules above.
We say that $\with\oplus_{p}^\perp$ and $\query\bang^\perp_p$ are {\em
	full}, written $\with\oplus^\perp$ and $\query\bang^\perp$, whenever
$\oplus\Delta_1$, or respectively $\query\Delta_1$, is empty.
%
We write
$\DD::\syncseqbis P{\Gamma}$ when $\mathcal D$ is
a derivation of the judgement $\syncseqbis P{\Gamma}$.
\begin{lemma}[$\with/\bang$ Elimination]\label{lem:broadcastingadmissible}
	Let $\DD::\syncseqbis P\Gamma$ such that $\Gamma$ has no
	$\Left/\Right/\Query$ box. There exists $\phl {P'}$ with
	$\DD'::\syncseqbis {P'}\Gamma$ and $\DD'$
	is free from $\bang$, $\query$, $\with$, $\oplus_l$ and $\oplus_r$.
\end{lemma}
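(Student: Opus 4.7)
The plan is to induct on the size of $\DD$ with a case analysis on the last rule. The \emph{unit} and \emph{multiplicative} cases ($\textsc{Ax}$, $\one$, $\bot$, $\tensor$, $\parr$) are routine: each premise has a context whose only auxiliary decorations are the multiplicative buffers $\quadQ{y}{B}{x}{A}$ (which do not count as $\Left/\Right/\Query$ boxes), so the induction hypothesis applies directly to the premises and the same rule can be reapplied at the root. The rules $\oplus_l$, $\oplus_r$, and $\query$ cannot be the last rule: their conclusions would contain a $\Left$, $\Right$, or $\Query$ box, contradicting the assumption on $\Gamma$.

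The heart of the proof lies in the $\with$ and $\bang$ cases. Consider a derivation $\DD$ ending with
\begin{smallequation*}
    \infer[\with]{\syncseqbis{\Case xPQ}{\Gamma, \oplus\Delta, \pairQ x{A\with B}}}{\syncseqbis{P}{\Gamma, \ltriCtx{\oplus\Delta}{x}{A}} \qquad \syncseqbis{Q}{\Gamma, \rtriCtx{\oplus\Delta}{x}{B}}}
\end{smallequation*}
with $\oplus\Delta = \{\pairQ{y_i}{C_i\oplus D_i}\}_{1 \le i \le n}$. I would invoke the $\oplus_l$ invertibility lemma $n$ successive times on the derivation of $\phl P$, once for each $y_i$, peeling off an $\oplus_l$ for every $y_i$; this produces a derivation of $\phl{\inl{y_1}{\cdots\inl{y_n}{P^*}}}$ whose top subderivation $\syncseqbis{P^*}{\Gamma, \pairQ{x}{A}, \{\pairQ{y_i}{C_i}\}_i}$ is over a context free from $\Left/\Right/\Query$ boxes. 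Dually, applying $\oplus_r$ invertibility $n$ times to $\phl Q$ gives an analogous subderivation for $\phl{Q^*}$. The induction hypothesis then yields $\phl{P^{**}}$ and $\phl{Q^{**}}$ whose derivations avoid $\bang,\query,\with,\oplus_l,\oplus_r$. Reassembling with a single application of the macro rule $\with\oplus^\perp$ (the empty-inner-box specialisation of $\with\oplus^\perp_p$) yields the desired $\DD'$. The $\bang$ case is analogous, using $\query$ invertibility on the unique premise of $\bang$ to extract a $\query$-prefix before invoking the IH and wrapping the result with $\query\bang^\perp$.

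The main obstacle is ensuring well-foundedness of the induction: invertibility permutes rules without necessarily shrinking a naive size measure. I would therefore induct on a refined measure such as the lexicographic pair (number of raw $\with$/$\bang$ applications in $\DD$, height of $\DD$), observing that $\phl{P^*}$ and $\phl{Q^*}$ contain no more raw $\with$/$\bang$ applications than $\phl P$ and $\phl Q$ (invertibility never introduces such rules), while the premises themselves have strictly fewer such rules than $\DD$ (the outermost $\with$ has been consumed). A final subtlety is to accommodate derivations $\DD$ whose last rule is already a macro $\with\oplus^\perp_p$ or $\query\bang^\perp_p$ with a non-empty inner $\oplus\Delta_1$ (resp.\ $\query\Delta_1$): here the premises carry boxes, so the IH does not apply directly, but the macro instance can be expanded back into its raw $\with$/$\oplus_\#$ (resp.\ $\bang$/$\query$) implementation, reducing to the cases already treated.
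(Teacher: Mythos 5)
Your proposal is correct and follows essentially the same route as the paper: induct on the derivation, use the $\oplus_\#/\query$ invertibility lemma (Lemma~\ref{lem:opluspermute}) to pull the selections down to sit immediately above the corresponding $\with$ (resp.\ $\bang$), and reassemble with the full macro rules $\with\oplus^\perp$ and $\query\bang^\perp$ before invoking the induction hypothesis on the now box-free premisses. The only differences are presentational — you peel all $n$ occurrences of $\oplus_\#$ at once and guard termination with a lexicographic measure, whereas the paper first rewrites every raw $\with$/$\bang$ into the partial macro form and then moves one formula out of the box per step under a plain height induction (sound because invertibility preserves the proof structure).
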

\begin{proof}
	The first step of the proof is to replace all the $\with$- and the $\bang$-rules, 
	with $\with\oplus_p^\perp$ and $\bang\query_p^\perp$ respectively, where $\Delta_2$ and $\Delta_3$ are empty.
	The proof then proceeds by induction on the height of $\DD$,
        by doing a case analysis of the last applied rule in $\DD$. We
        consider only the two cases where the last rule is
        $\with\oplus_p^\perp$ or $\bang\query_p^\perp$, the other
        cases are obtained straightforwardly by permutation.

	If $\DD$ ends with $\with\oplus_p^\perp$, 
	$\phl{P =}\Case z{\inl {y_1}{}\ldots\inl {y_n}{P_1}}{\inr {y_1}{}\ldots\inr {y_n}{P_2}}$ with:
		\begin{smallequation*}
			\infer[\with\oplus_p^\perp]
			{\syncseqbis{P} 
				{\Gamma, \oplus\Delta_1, \pairQ x{A\oplus B}, \Delta_2\oplus\Delta_3, \pairQ z{C\with D}} 
			}{
				\syncseqbis {P_1} {\Gamma, \ltriCtx{\oplus\Delta_1, \pairQ x{A\oplus B}}{z}{D},  \Delta_2} 
				&
				\syncseqbis {P_2} {\Gamma, \rtriCtx{\oplus\Delta_1, \pairQ x{A\oplus B}}{z}{C}, \Delta_3} 
			}
		\end{smallequation*}
		where $\Gamma$ is free of $\Left/\Right/\Query$ boxes.
		By Lemma~\ref{lem:opluspermute} (applicable here since a proof in
		$\syncseqbis{}{}$ can always be expanded into a proof in
		$\syncseq{}{}$), we know that there exist $\phl{P_1'}$ and
		$\phl{P_2'}$ such that:
                \begin{smallequation*}
                    \begin{array}{lll}      
                      \infer[\oplus_l]{
                      \syncseq
                      {\inl x{P_1'}}
                      {\Gamma, \ltriCtx{\oplus\Delta_1, \pairQ x{A\oplus B}}{z}{C}, \Delta_2}
                      }{
                      \syncseq {P_1'}
                      {\Gamma, \ltriCtx{\oplus\Delta_1}{z}{C}, \pairQ x{A}, \Delta_2}
                      }
                      \\[2mm]
                          \infer[\oplus_r]{
                          \syncseq {\inr x{P_2'}} 
                          {\Gamma, \rtriCtx{\oplus\Delta_1, \pairQ x{A\oplus B}}{z}{D}, \Delta_3}
                          }{
                          \syncseq {P_2'} 
                          {\Gamma, \rtriCtx{\oplus\Delta_1}{z}{D}, \pairQ x{B}, \Delta_3}
                          }
                    \end{array}
		\end{smallequation*}
		We can take $\phl{P' =}\Case z{\inl {y_1}{}\ldots\inl {y_n}{\inl x{P_1'}}}{\inr {y_1}{}\ldots\inr {y_n}{\inr x{P_2'}}}$ as
		\begin{displaymath}\small
			\infer[\oplus\with^\perp_p]{
				\syncseqbis {P'}
				{\Gamma, \oplus\Delta_1, \pairQ x{A\oplus B}, \Delta_2\oplus\Delta_3, \pairQ z{C \with D}}
			}{ 
				\syncseqbis{P_1'}
				{\Gamma, \ltriCtx{\oplus\Delta_1}zC,  \pairQ xA, \Delta_2}
				& 
				\syncseqbis{P_2'}
				{\Gamma, \rtriCtx{\oplus\Delta_1}zD, \pairQ xB, \Delta_3}
			}
		\end{displaymath}
		Recursively, 
		we repeat this process until all elements of $\oplus\Delta_1$
		are exhausted and $\oplus\with^\perp_p$ becomes full. Then, the
		result follows by induction on the premisses, as they become free of exponential/additive boxes (and so is $\Gamma$).

		If the last applied rule in $\DD$ is $\bang\query_p^\perp$,  then
		$\phl{P = }\srv uv{\client {x_1}{y_1}{}\ldots\client{x_n}{y_n}{P_1}}$ with:
		\begin{displaymath}\small
			\infer[\query\bang^\perp_p]{
				\syncseqbis
				{P} 
				{\query\Delta_1, \pairQ x{\query B}, \query\Delta_2, \pairQ u{\bang A}} 
			}{
				\syncseqbis {P_1}
				{\triCtx{\query\Delta_1,\pairQ x{\query B}}vA,\Delta_2} 
			}
		\end{displaymath}
		By Lemma~\ref{lem:querypermute} (applicable for the same reason), there exists $\phl {P_1'}$ such that:
		\begin{displaymath}\small
			\infer[\query]{
				\syncseq {\client xy{P_1'}}
				{\triCtx{\query\Delta_1,\pairQ x{\query B}}vA,\Delta_2}
			}{
				\syncseq{P_1'}
				{\triCtx{\query\Delta_1}vA, \pairQ yB,\Delta_2}
			}
		\end{displaymath}
		which allows us to define $\phl{P' =} \srv uv{\client {x_1}{y_1}{}\ldots\client{x_n}{y_n}{\client xy{P_1'}}}$ justified by 
		\begin{displaymath}\small
			\infer[\query\bang^\perp_p]{
				\syncseqbis{P'}
				{\query\Delta_1, \pairQ x{\query B}, \query\Delta_2, \pairQ u{\bang A} }
			}{
				\syncseqbis {P_1'}
				{\triCtx{\query\Delta_1}vA, \pairQ yB,\Delta_2}
			}
		\end{displaymath}
		Recursively, we repeat this procedure until all of
                $\query\Delta_1$ is exhausted, obtaining a full
                rule. The rest follows by induction. \qed
\end{proof}

\mypar{Multiplicatives and Units.}  We are now ready to undertake the
last step of our transformation, i.e., dealing with multiplicatives
and units. In such cases, given that the nature of the communication
is gathering, we need to push all $\parr$s up to their corresponding
$\tensor$ and all $\perp$s up to their corresponding $\one$. For this
purpose, we change $\syncseqbis{}{}$ to a new jugement, dubbed
$\syncseqtris{}{}$, where we remove rules $\with$, $\oplus_l$,
$\oplus_r$, $?$ and $!$, replace them by $\with\oplus^\perp$ and
$\query\bang^\perp$, and add the rules:
\begin{smallequation*}
	\infer[\tensor\parr^\perp_p]{
		\syncseqtris {\recv {x_1}{u_1}{}\ldots\recv{x_n}{u_n}{\send yvPQ}}
		{\Gamma, [\Delta_1]\Delta_2, \{\phl{x_i} : \thl{A_i \parr B_i}\}_i, \pairQ y{C \tensor D}}
	}{
		\syncseqtris {P} {\Delta_1, \{\pairQ{u_i}{A_i}\}_i,  \pairQ vC}
		&
		\syncseqtris {Q} 
		{\thl {\Gamma}, \Delta_2, \{\phl{x_i}:\thl{B_i}\}_i,  \phl y:\thl D}
	}
\end{smallequation*}
and
\begin{smallequation*}
	\infer[\one\bot^\perp_p]
	{
		\syncseqtris {\wait {x_1}{}\ldots\wait{x_n}{\close y}}
		{
			\{\phl {x_i}:\thl{\perp}\}_i, \phl{y}:\thl{\one}, \Star
		}
	}
	{
	}
\end{smallequation*}
Rules $\tensor\parr^\perp_p$ and $\one\bot^\perp_p$ simulate the
interplay of rules $\tensor/\parr$ and $\one/\bot$ respectively.
Ultimately, we will use the {\em full} versions of these, written
$\tensor\parr^\perp$ and $\one\bot^\perp$, that designate whenever
$\br{\Delta_1}\Delta_2$ is empty in the former, and whenever $\Star$
is absent in the latter.
Similarly to what we did for additives and exponentials, we show now
that we can replace the remaning rules from $\syncseq{}{}$ by compound
ones. 

\begin{lemma}[$\parr$/$\perp$ Admissibility]\label{lem:parradmissibility}\label{lem:perpadmissibility}
	\begin{enumerate}
	\item \label{i}
	Let $\DD::\syncseqtris P{\Gamma, \quadQ yAxB}$
	such that $\mathcal D$ is $\parr$-free. Then, there exist a forwarder $\phl Q$ and a
	$\parr$-free proof $\EE$ such that
	$\EE::\syncseqtris {Q}{\Gamma, \pairQ x{A\parr
			B}}$.
		
	\item\label{ii}
	Let $\DD::\syncseqtris P{\Gamma, \Star}$ such
	that $\DD$ is $\perp$-free. Then, there exists a $\perp$-free
	proof $\EE$ and $\phl Q$ such that
	$\EE::\syncseqtris {Q}{\Gamma, \pairQ x\perp}$, for
	some $\phl x$.
	\end{enumerate}
\end{lemma}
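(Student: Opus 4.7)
The plan is to prove both parts by induction on the height of $\DD$, with case analysis on the last rule applied. I focus on (1); part (2) follows an entirely parallel argument, tracking the flag $\Star$ in place of the box $\quadQ{y}{A}{x}{B}$.

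The key observation for (1) is that, since $\DD$ is $\parr$-free, no multiplicative box is ever \emph{created} inside $\DD$ — the $\parr$ rule is the only rule that introduces one. Hence the box $\quadQ{y}{A}{x}{B}$ in the conclusion of $\DD$ must be propagated upward through $\DD$ from the point where it is eventually consumed; the only rules in $\syncseqtris{}{}$ that can consume multiplicative boxes are the gathering rules $\tensor$ and $\tensor\parr^\perp_p$, which package boxes into their $[\Delta_1]\Delta_2$ component.

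Case analysis on the last rule $R$ of $\DD$ then splits into two shapes. In the \emph{passive case}, the box sits in the $\Gamma$-part of $R$'s conclusion, outside any $[\Delta_1]\Delta_2$. Then $R$ forwards the box unchanged into one or more premises; I apply the inductive hypothesis to each such premise to obtain a derivation with $\pairQ{x}{A \parr B}$ in place of the box, and reapply $R$ to combine them. Since $R$ is not itself $\parr$, the reconstruction is $\parr$-free. In the \emph{consumed case}, $R$ is $\tensor$ or $\tensor\parr^\perp_p$ with $(y,x)$ among the pairs gathered in $[\Delta_1]\Delta_2$. Then the premises of $R$ already hold the unboxed data: $\pairQ{y}{A}$ in the ``send'' premise and $\pairQ{x}{B}$ in the ``continuation'' premise. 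I simply replace $R$ by an application of $\tensor\parr^\perp_p$ whose $[\Delta_1]\Delta_2$ drops the pair $(y,x)$ and whose list of $\parr$-factors is augmented by the new entry $\pairQ{x}{A \parr B}$; the premises match verbatim, and the resulting derivation is manifestly $\parr$-free.

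Part (2) follows the same template: since $\DD$ is $\bot$-free and $\Star$ is produced only by $\bot$ (forbidden here) or appears in the conclusion of the axiomatic leaf rules $\one$ and $\one\bot^\perp_p$, the $\Star$ in the conclusion traces down to such a leaf. At that leaf I replace the rule by an instance of $\one\bot^\perp_p$ whose $\{x_i:\perp\}$ component is augmented by the new $\pairQ{x}{\perp}$; all intermediate rules are handled passively as in part (1). The main obstacle I anticipate is the bookkeeping when the context is shared between several premises (as in $\oplus\with^\perp$): the inductive hypothesis is then applied independently to each branch and may dissolve the box, or consume $\Star$, at different heights. This is harmless because $\oplus\with^\perp$ tolerates arbitrary independent sub-derivations, but one must also verify that the freshness side conditions on $\tensor\parr^\perp_p$ remain satisfied when a new $\parr$-factor is injected, which follows directly from the freshness already guaranteed within $\DD$.
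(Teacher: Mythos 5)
Your proposal is correct and follows essentially the same route as the paper's own proof: an induction on the derivation with a passive, commuting case for every rule that merely propagates the box (resp.\ the flag $\Star$) into its premises, and a base case at the gathering rule $\tensor\parr^\perp_p$ (resp.\ $\one\bot^\perp_p$) where the boxed pair is absorbed as a new $\parr$-factor $\pairQ{x}{A\parr B}$ (resp.\ as a new $\pairQ{x}{\perp}$) of that rule instance. The bookkeeping concerns you flag at the end (context sharing across branches and freshness of the injected factor) are handled implicitly, exactly as in the paper.
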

\begin{proof}
	We proceed by induction on the size of $\DD$ and a case analysis on the
	last applied rule. 
	We only report on case in the proof of item~\ref{i} when the last applied rule in $\DD$ is $\tensor\parr^\perp_p$,
	that is,
	$\phl{P =} \recv {x_1}{u_1}{}\ldots\recv{x_n}{u_n}{\send zv{P_1}{P_2}}$ and we have two possible subcases. (The other ones are simpler.) 

		If $\quadQ yAxB$ is not touched by $\tensor\parr^\perp_p$, we have:
		\begin{smallequation*}
			\infer[\tensor\parr^\perp_p]{
				\syncseqtris {P}
				{\Gamma, \quadQ yAxB, [\Delta_1]\Delta_2, \{\phl{x_i} : \thl{A_i \parr B_i}\}_i, \pairQ z{C \tensor D}}
			}{
				\syncseqtris {P_1} {\Delta_1, \{\pairQ{u_i}{A_i}\}_i,  \pairQ vC}
				& 
				\syncseqtris {P_2} {\thl {\Gamma}, \quadQ yAxB, \Delta_2, \{\phl{x_i}:\thl{B_i}\}_i, \phl z:\thl D}
			}
		\end{smallequation*}
		By induction hypothesis, since our proof is $\parr$-free, 
		there exists $\phl{Q_2}$ such that 
		
		\noindent $\syncseqtris {Q_2} {\thl {\Gamma}, \pairQ{x}{A\parr B}, \Delta_2, \{\phl{x_i}:\thl{B_i}\}_i, \phl z:\thl D}$.
		By applying $\tensor\parr^\perp_p$ again, we obtain $\phl{Q =} \recv {x_1}{u_1}{}\ldots\recv{x_n}{u_n}{\send zv{P_1}{Q_2}}$ and:
		\begin{smallequation*}
			\infer[\tensor\parr^\perp_p]{
				\syncseqtris {Q}
				{\Gamma, \pairQ x{A\parr B}, [\Delta_1]\Delta_2, \{\phl{x_i} : \thl{A_i \parr B_i}\}_i, \pairQ z{C \tensor D}}
			}{
				\syncseqtris {P_1} {\Delta_1, \{\pairQ{u_i}{A_i}\}_i,  \pairQ vC}
				& 
				\syncseqtris {Q_2} {\thl {\Gamma}, \pairQ x{A\parr B}, \Delta_2, \{\phl{x_i}:\thl{B_i}\}_i, \phl z:\thl D}
			}
		\end{smallequation*}
		
		If $\quadQ yAxB$ is indeed modified by
		$\tensor\parr^\perp_p$, then we have a base case:
		\begin{smallequation*}
			\infer[\tensor\parr^\perp_p]{
				\syncseqbis {P}
				{\Gamma, \quadQ yAxB, [\Delta_1]\Delta_2, \{\phl{x_i} : \thl{A_i \parr B_i}\}_i, \pairQ z{C \tensor D}}
			}{
				\syncseqbis {P_1} {\phl y: \thl A, \Delta_1, \{\pairQ{u_i}{A_i}\}_i,  \pairQ vC}
				& 
				\syncseqbis {P_2} {\thl {\Gamma}, \phl x: \thl B, \Delta_2, \{\phl{x_i}:\thl{B_i}\}_i, \phl z:\thl D}
			}
		\end{smallequation*}
		We can take $\phl{Q = \recv xyP}$ and simply change the rule $\tensor\parr^\perp_p$ obtaining: 
		\begin{smallequation*}
			\infer[\tensor\parr^\perp_p]{
				\syncseqbis {Q}
				{\Gamma, \pairQ x{A\parr B}, [\Delta_1]\Delta_2, \{\phl{x_i} : \thl{A_i \parr B_i}\}_i, \pairQ y{C \tensor D}}
			}{
				\syncseqbis {P_1} {\phl y: \thl A, \Delta_1, \{\pairQ{x_i}{A_i}\}_i,  \pairQ yC}
				& 
				\syncseqbis {P_2} {\thl {\Gamma}, \phl x: \thl B, \Delta_2, \{\phl{x_i}:\thl{B_i}\}_i, \phl y:\thl D}
			}
		\end{smallequation*}
	
	The proof for item~\ref{ii} is similar.
	Note that in the base case, we can have
\begin{smallequation*}
		\DD = \infer[\one\bot^\perp_p]{ \syncseqbis {\phl{P =}\wait {x_1}{}\ldots\wait{x_n}{\close y}} {\{\phl
			{x_i}:\thl{\perp}\}_i, \phl{y}:\thl{\one}, \Star} }{} 
\end{smallequation*}
	%
	%
	which gives us $\EE =
	\infer[\one\bot^\perp]{
		\syncseqbis {\phl{Q =} \wait{x}{\wait {x_1}{}\ldots\wait{x_n}{\close y}}}
		{\{\phl {x_i}:\thl{\perp}\}_i,  \phl{y}:\thl{\one},  \pairQ x\perp}
	}
	{}
	$.\qed
\end{proof}

As a corollary, we can always eliminate all $\tensor$ with their corresponding $\parr$s and all the $\one$ with their corresponding $\bot$s.
\begin{lemma}[$\tensor$/$\one$ Elimination]
	\label{lem:parrelimination}\label{lem:perpelimination}
	Let $\DD::\syncseqtris P{\Gamma}$. Then, there
        exist a forwarder $\phl Q$ with 
        $\mathcal E::\syncseqtris {Q}{\Gamma}$, and $\EE$ is free from $\parr$, $\tensor$,
        $\perp$, and $\one$.
\end{lemma}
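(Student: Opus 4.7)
The plan is to prove this by induction on the size of $\DD$ with a case analysis on the last rule applied. The key ingredients are the previous Lemma~\ref{lem:parradmissibility}, which absorbs an isolated $\parr$ or $\perp$ into the compound rules $\tensor\parr^\perp_p$ or $\one\bot^\perp_p$, together with the observation that each individual $\tensor$ (resp.\ $\one$) rule is already an instance of $\tensor\parr^\perp_p$ (resp.\ $\one\bot^\perp_p$) with $n = 0$: taking no $\parr$-receptions gives exactly the shape of a bare $\tensor$-send, and similarly for $\one$.

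The axiom case is immediate: take $\EE = \DD$. For the inductive step, if the last rule of $\DD$ is one of the compound rules $\tensor\parr^\perp_p$, $\one\bot^\perp_p$, $\oplus\with^\perp_p$, or $\query\bang^\perp_p$, I apply the IH to each premise to obtain $\parr/\tensor/\perp/\one$-free subderivations and then re-apply the same compound rule. If the last rule is $\tensor$ (resp.\ $\one$), I apply the IH to the premises and re-assemble with $\tensor\parr^\perp_p$ (resp.\ $\one\bot^\perp_p$) instantiated at $n=0$, which yields the exact same conclusion sequent and process term.

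The nontrivial cases are $\parr$ and $\perp$. For $\parr$, the derivation ends with $\syncseqtris{\recv xyP}{\Gamma, \pairQ x{A\parr B}}$ obtained from $\syncseqtris{P}{\Gamma, \quadQ yAxB}$. Applying the IH to the premise yields a $\parr/\tensor/\perp/\one$-free derivation $\syncseqtris{P'}{\Gamma, \quadQ yAxB}$, at which point the first part of Lemma~\ref{lem:parradmissibility} (which is applicable precisely because its input is $\parr$-free) produces a $\parr$-free derivation of $\syncseqtris{Q}{\Gamma, \pairQ x{A\parr B}}$. The case of $\perp$ is symmetric, using the second part of the previous lemma to absorb the $\perp$ into a suitable $\one\bot^\perp_p$.

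The main obstacle is to verify that the transformation provided by Lemma~\ref{lem:parradmissibility} preserves not only $\parr$-freeness (resp.\ $\perp$-freeness) but also freeness from the other three individual rules $\tensor$, $\perp$, $\one$ (resp.\ $\tensor$, $\parr$, $\one$). Inspecting the proof of that lemma, the rewriting only inspects or modifies pre-existing instances of $\tensor\parr^\perp_p$ and $\one\bot^\perp_p$ (either extending them with an additional receive or reapplying them over an updated context) and never introduces a fresh $\parr$, $\tensor$, $\perp$, or $\one$ inference. Hence the stronger preservation property holds essentially for free from the same construction, and with that strengthening in hand the induction closes.
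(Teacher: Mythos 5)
Your proof is correct and follows essentially the same route as the paper's: replace each bare $\tensor$/$\one$ by the compound rule $\tensor\parr^\perp_p$/$\one\bot^\perp_p$ with empty $\Delta_1$ and $\Delta_2$, and absorb each $\parr$/$\perp$ via Lemma~\ref{lem:parradmissibility}, handling premises first so that the lemma's freeness hypothesis is met (your induction is the same ordering as the paper's ``top-most instances first''). Your explicit check that Lemma~\ref{lem:parradmissibility} also preserves freeness from the other three rules makes precise a point the paper's one-line proof leaves implicit.
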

\begin{proof}
	It follows from replacing any instance of $\tensor$ and $\one$ with $\tensor\parr^\perp_p$ (with empty $\Delta_1$ and $\Delta_2$) 
	and $\one\bot^\perp_p$, respectively;
	then, applying the previous Lemma repeatedly  to 
	the top-most instances of $\parr$ or $\bot$ first.
\end{proof}
\begin{theorem}[Completeness]
	If $\syncseq{P}{\Delta}$, then there exists a global type $\phl G$,
	s.t. $\ghl G \gseq {\Delta^\perp}$.
\end{theorem}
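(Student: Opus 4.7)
The plan is to use the chain of elimination lemmas established in Section~\ref{subsec:completeness} to transform any synchronous forwarder derivation into one whose rules correspond bijectively to the coherence rules of Figure~\ref{fig:coherence}. Concretely, starting from $\syncseq P \Delta$, I first apply Lemma (\textsc{$\with/\bang$ Elimination}) to obtain a derivation $\DD_1 :: \syncseqbis{P_1}{\Delta}$ that is free from $\with$, $\oplus_l$, $\oplus_r$, $!$ and $?$, so the additive and exponential content is fully regrouped into instances of $\with\oplus^\perp_p$ and $\query\bang^\perp_p$. As the proof of that lemma shows, the recursive construction exhausts the $\oplus\Delta_1$ and $\query\Delta_1$ components, so in fact only the full variants $\with\oplus^\perp$ and $\query\bang^\perp$ appear.

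Since $\Delta$ contains no boxes, $\DD_1$ is also a valid $\syncseqtris{}{}$ derivation, so I can apply Lemma (\textsc{$\tensor/\one$ Elimination}) to obtain $\EE :: \syncseqtris{Q}{\Delta}$ that is additionally free from $\parr$, $\tensor$, $\perp$ and $\one$; again, the iterative push-up of $\parr$ and $\perp$ all the way to their matching $\tensor$ and $\one$ produces only the full forms $\tensor\parr^\perp$ and $\one\bot^\perp$. At this point $\EE$ uses only the rules $\textsc{Ax}$, $\one\bot^\perp$, $\tensor\parr^\perp$, $\with\oplus^\perp$, and $\query\bang^\perp$.

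A direct comparison with Figure~\ref{fig:coherence} shows that, read on the dual context $\Delta^\perp$, these five rule schemes are precisely the five coherence rules. Moreover, inspection of the process terms introduced by each compound rule shows they match exactly the clauses of the arbiter encoding $\coepp{\cdot}$ in Figure~\ref{fig:coepp}: $\one\bot^\perp$ yields $\coepp{\botone{\til x}{y}}$, $\tensor\parr^\perp$ yields $\coepp{\parrtensor{\til x}{y}{G}{H}}$, $\with\oplus^\perp$ yields $\coepp{\withplus{x}{\til y}{G}{H}}$, and $\query\bang^\perp$ yields $\coepp{\bangquery{x}{\til y}{G}}$. Extracting the corresponding global-type constructor at each rule application therefore produces, by induction on the height of $\EE$, a global type $\phl G$ with $\ghl G \gseq \Delta^\perp$ (and in fact $\phl Q = \coepp{G}$, strengthening the result to say that every synchronous forwarder is, up to reorganisation, an arbiter).

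The main obstacle is the invariant that the recursive steps inside the two elimination lemmas really do exhaust the partial contexts $\oplus\Delta_1$, $\query\Delta_1$, and $[\Delta_1]\Delta_2$ so that all compound rules in the final derivation are full --- otherwise the one-to-one correspondence with coherence rules would break down. This relies on the fact that every boxed endpoint must eventually be discharged higher up in the derivation by the matching $\oplus_\#$, $\query$, or $\parr$ rule, which is guaranteed by the typing invariants of $\syncseq{}{}$ and is precisely what the invertibility lemmas (Lemma~\ref{lem:opluspermute} and the buffer-clearing arguments behind Lemma~\ref{lem:parradmissibility}) allow us to exploit when building the compound rules.
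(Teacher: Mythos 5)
Your proposal is correct and follows essentially the same route as the paper: apply the $\with/\bang$ and $\tensor/\one$ elimination lemmas, argue that the basic end-context $\Delta$ forces every compound rule to be full, and read the resulting derivation off as a coherence proof rule by rule. The only detail the paper makes explicit that you elide is the name substitution $\phl{P\substtwo{y}{v}{\til x}{\til u}}$ needed in the multiplicative case to align the freshly received names with the coherence endpoints.
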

\begin{proof}
  It follows from the previous results, noting that in order to get
  coherence, all rules from $\syncseqtris{}{}$ must be full, which is
  the case since our context is a basic $\Delta$.  We observe that in
  the case of multiplicatives, we need to perform a name substitution
  in order to obtain a valid coherence proof. Below, let
  $\coextract{P}$ be the function that transforms a process term
  $\phl P$ corresponding to a proof in $\syncseqtris{}{}$ with only
  full rules to a global type. Then,
  \begin{smallequation*}
    \infer[\tensor\parr^\perp]{
      \syncseqtris {\recv {x_1}{u_1}{}\ldots\recv{x_n}{u_n}{\send yvPQ}}
      {\Delta, \{\phl{x_i} : \thl{A_i \parr B_i}\}_i, \pairQ y{C \tensor D}}
    }{
      \syncseqtris {P} { \{\pairQ{u_i}{A_i}\}_i,  \pairQ vC}
      &
      \syncseqtris {Q} 
      {\thl {\Delta},  \{\phl{x_i}:\thl{B_i}\}_i,  \phl y:\thl D}
    }
  \end{smallequation*}
  is transformed into
  \begin{smallequation*}
    \infer[\tensor\parr] {\phl{\parrtensor{\til x}{y}{\coextract
          {P'}}{\coextract{Q}}} \gseq \thl \Delta, \{\phl{x_i} :
      \thl{A_i \tensor B_i}\}_i, \phl{y}:\thl{C \parr D}} { \phl
      {\coextract{ P'}} \gseq \thl \{\phl{x_i}:\thl{A_i}\}_i, \phl
      y:\thl C & \phl {\coextract{Q}} \gseq \thl {\Delta},
      \{\phl{x_i}:\thl{B_i}\}_i, \phl y:\thl D}
  \end{smallequation*}
  such that $\phl{P'} = \phl{P\substtwo {y}{v}{\til x}{\til u}}$. \
  \qed
\end{proof}

\begin{example}
  Let us consider a variation of process $\phl{P_1}$ from
  Example~\ref{ex:2buf}: 
    \begin{smallequation*}
    \begin{array}{lllll}
      \begin{array}[t]{@{}l}
        \text{\phl{(as $P_1$)\ldots}}\quad
        \recv {b_1'}{y}{}\
        \recv {s'}{x_2}{}\
        {\send{b_2'}{x'_2}{\fwd{x_2}{x'_2}}{}} \
        {\send{b_2'}{y'}{\fwd{y}{y'}}{}}\ \phl{\ldots\text{(as $P_1$)}} 
      \end{array} 
    \end{array}
  \end{smallequation*}
  The process above still enforces the same protocol, we can transform
  it into $\phl{P_1}$ by permuting some actions, and then into the
  coherence proof in Example~\ref{ex:2bu}. \qed
\end{example}

%
%

\section{Related Work}\label{sec:related}
%
Our work takes~\cite{CLMSW16} as a starting point.
Guided by CLL, we set out to explore if coherence can be broken down
into more elementary logical rules which led to the discovery of
synchronous forwarders.
Caires and Perez~\cite{CP16} also study multiparty session types in
the context of intuitionistic linear logic by translating global types
to processes, called \emph{mediums}.
Their work does not start from a logical account of global types
(their global types are just syntactic terms). But, as in this paper
and previous work~\cite{CLMSW16}, they do generate arbiters as linear
logic proofs. In this work, we also achieve the converse: from a forwarder
process, we provide a procedure for generating a global type
(coherence proof).

%

Sangiorgi~\cite{S96}, probably the first to treat forwarders for the
$\pi$-calculus, uses binary forwarders, i.e., processes that only
forward between two channels, which is equivalent to our $\fwd xy$. We
attribute our result to the line of work that originated in 2010 by
Caires and Pfenning~\cite{CP10},
where forwarders \`a la Sangiorgi were introduced as processes to be
typed by the axiom rule in linear logic. Van den Heuvel and
Perez~\cite{Heuvel2020SessionTS} have recently developed a version of
linear logic that encompasses both classical and intuitionistic logic,
presenting a unified view on binary forwarders in both logics.

Gardner et al.~\cite{GLW07} study the expressivity of the linear
forwarder calculus, by encoding the asynchronous $\pi$-calculus (since
it can encode distributed choice).  The linear forwarder calculus is a
variant of the (asynchronous) $\pi$-calculus that has binary
forwarders and a restriction on the input $x(y).P$ such that $y$
cannot be used for communicating (but only forwarded). Such a
restriction is similar to the intuition behind synchronous forwarders,
with the key difference that it would not work for some of our
session-based primitives. 

Barbanera and Dezani~\cite{BD19} study multiparty session types as
gateways which are basically forwarders that work as a medium among
many interacting parties, forwarding communications between two
multiparty sessions. Such mechanism reminds us of our forwarder
composition: indeed, in their related work discussion they do mention
that their gateways could be modeled by 
a ``connection-cut''.

Recent work~\cite{JGP16,GJP18} proposes an extension of linear logic
that models {\em identity providers}, a sort of monitoring mechanisms
that are basically forwarders between two channels in the sense of
Sangiorgi, but asynchronous, i.e., they allow unbounded buffering of
messages before forwarding.

Our forwarder mechanism may be confused with that of
locality~\cite{MS04}, which is discussed from a logical point of view
by Caires et al.~\cite{CPT16}. Locality only requires that received
channels cannot be used for inputs (that can only be done at the
location where the channel was created). 
In our case instead, we do not allow received channels to be used at
all until a new forwarder is created.


The transformations between coherence and synchronous forwarders are
related to those of projection and extraction for choreographies. A
choreography is basically a global description of a the sequence of
interactions (communications) that must happen in a distributed system
(like a global type). Carbone et al.~\cite{CMS14} give a
characterisation of this in intuitionistic linear logic, by using
hypersequents to represent both choreographies and the processes
corresponding to those choreographies: through proof transformations
they show how to go from choreographies to processes and vice
versa. Although we
also 
transform choreographies (coherence) intro processes, our forwarder is
a single point of control while they deal with a distributed
implementation.

\section{Discussion and Future Work}\label{sec:discussion}

\mypar{Coherence Compositionality.}  The results of this paper
give us compositionality (cut) and cut elimination also for coherence
proofs. In fact, we can always transform two coherence proofs into
synchronous forwarders, compose and normalise them, and finally
translate them back to coherence.


\mypar{Process Language.} Our process language is based on that
of~\cite{W14} with some omissions. For the sake of presentation, we
have left out polymorphic communications. We believe that these
communication primitives, together with polymorphic types
$\exists X.A$ and $\forall X.A$, can be added to synchronous
forwarders. Moreover, our process language does not support recursion
for coherence nor for synchronous
forwarders.  
We leave these points as future work.

\mypar{Classical vs Intuitionistic Linear Logic.} In this paper, we
have chosen to base our theory on CLL for two main reasons.  Coherence
is indeed defined by Carbone et al.~\cite{CLMSW16} in terms of CLL and
therefore our results can immediately be related to theirs without
further investigations. 
%
We would like to remark that an earlier version of synchronous
forwarders was based on intuitionistic linear logic, but moving to CLL
required many fewer rules and greatly improved the presentation of our
results.  Nevertheless, our results can be easily reproduced in
intuitionistic linear logic, including a straightforward adaptation of
coherence.

\mypar{Exponentials, Weakening and Contraction.} In our work, we
do not allow synchronous forwarders to harness the full power of
exponentials, because we disallow the use of weakening and
contraction. In some sense exponentials are used linearly. Weakening
allows us to extend the context by fresh channels provided that they
are $?$-quantified.  Weakening is also useful when composing a process
offering some service of type $!A$ with some process that does not
wish to use such service. Contraction, on the other hand, models
server duplication, i.e., creating a copy of a server for every
possible client.  Guided by the given definition of coherence, we are
sure that neither weakening nor contraction reflect our intuition of
synchronous forwarders.  In fact, we would like to remark that adding
these rules to synchronous forwarders would invalidate the
completeness result. This is because coherence is apparently
incompatible with weakening and contraction of assumptions in the
context.  We leave a further investigation of how and if to add
weakening and contraction to future work.

\mypar{Unlimited-Size Buffers.} Synchronous forwarders guarantee that
the order of messages between two endpoints is preserved. This is
achieved by preventing the sending endpoint from sending further
messages until the previous message has been forwarded. 
As future work, we wish to consider buffers of any size, i.e., a
sender can keep on sending messages that can be stored in the
forwarder and then forwarded at a later time. Our idea is to
generalise, e.g., a boxing $\quadQ yAxB$ to allow for more messages to
be stored as in $\quadQ {y_1}{A_1\ldots\pairQ {y_n}{A_n}}xB$, and at
the same time allowing $\pairQ xB$ to be used. At first this may seem
a simple extension of synchronous forwarders. However, it has major
implications in the proof of cut elimination that we must better
understand. Note that the system of forwarders by Gommerstadt et
al.~\cite{JGP16,GJP18} does have unlimited-size buffers, but it is
restricted to binary forwarders.  In this case, the proof of cut
elimination is standard. Unfortunately, adding just a third endpoint
having unbounded queues breaks the standard structural proof
($\tensor$ and $\Did{Cut}$ do not commute as in
page~\pageref{page:tensorcurcommute}).

\mypar{Complete Interleaving of Synchronous Forwarders.} In
synchronous forwarders, the rule for $\with$ requires that a non-empty
set of formulas $\Delta = \{A_i \oplus B_i\}$ is selected from the
context and boxed, effectively forcing the processes to interact with
the choice.  This design decision was necessary to achieve cut
elimination as well as our completeness theorem for mapping
synchronous forwarders to coherence. But it comes at a price: it
restricts the number of proofs, and consequently, there are processes
that are still forwarders, implement a 1-size buffer, their CLL type
is coherent, but are not typable in $\syncseq{}{}$. This is because
the order of communications interferes with the $\with$ rule. For
example, the process $\recv zy{}\Case x {\ \inl zP} {\ \inr zQ}$ (for
some adequate $\phl P$ and $\phl Q$) is a synchronous
forwarder. However, the process
$\Case x {\ \recv zy{}\inl zP} {\ \recv zy{}\inr zQ}$ is typable in
CLL but is not a synchronous forwarder. Note that the input on
$\phl z$ is totally unrelated to the branching and, moreover, the
typing context of the process is coherent. Unfortunately, attempts to
include such cases have broken cut elimination or completeness. We
leave a further investigation 
to future work.

\mypar{Variants of Coherence.} Our results show that synchronous
forwarders are also coherent. As a follow-up, we would like to
investigate in future work, whether generalised variants of forwarders
also induce interesting generalised notions of coherence, and, as a
consequence, generalisations of global types.




%

\section{Conclusions}\label{sec:conclusions}
To our knowledge, this work is the first to give characterisation of
coherence in terms of forwarders which generalise the concept of
arbiter.
%
%
%
We have developed a proof system based on linear logic that models a
class of forwarders, called synchronous forwarders, that preserve
message order. Well typed-forwarders are shown to be compositional. We
show that synchronous forwarders provide a sound and complete
characterisation of coherence and therefore provide a logic of global
types and a protocol language for describing distributed protocols.
%
%


\newpage
\bibliographystyle{plain}
\bibliography{biblio}

\newpage
\appendix
\section{CLL~\cite{W12} with simple exponentials}\label{app:cll}
  \begin{displaymath}
    \small
    \begin{array}{c}
      \infer[\textsc{Axiom}]
      {\phl{\cpaxiom{x}{A^\perp}{y}{A}} \seq \phl x:\thl {A^\bot}, \phl y:\thl{A}}
      { }
		\qquad
		\infer[\bot]
		{\phl {\wait xP} \seq \thl \Gamma, \phl x:\thl\bot}
		{\phl P \seq \thl\Gamma}
		\qquad
		\infer[\one]
		{\phl{\close x} \seq \phl x: \thl\one}
		{ }
      \\[1ex]
      \infer[\tensor]
      {
        \phl{\send xyPQ} 
        \seq 
        \thl{\Gamma}, \thl{\Delta}, \phl x:\thl{A \tensor B}
      }
      {
        \phl P 
        \seq 
        \thl{\Gamma}, \phl y:\thl A
        & \phl Q 
        \seq 
        \thl \Delta, \phl x: \thl B
      }
      \qquad\qquad\qquad
      \infer[\parr]
      {\phl{\recv xyP} 
        \seq \thl{\Gamma}, \phl x: \thl {A \parr B}
      }       
      {\phl P \seq \thl{\Gamma}, \phl y:\thl A, \phl x:\thl B}
      \\[1ex]
      \infer[\oplus_l]
      {\phl{\inl xP} \seq \thl\Gamma, \phl x:\thl{A \oplus B}}
      {\phl P \seq \thl\Gamma, \phl x:\thl A}
      \qquad
      \infer[\oplus_r]
      {\phl{\inr xP} \seq \thl\Gamma, \phl x:\thl{A \oplus B}}
      {\phl P \seq \thl\Gamma, \phl x:\thl B}
      \qquad
      \infer[\with]
      {\phl{\Case xPQ} \seq \thl \Gamma, \phl x:\thl{A \with B}}
      {\phl P \seq \thl\Gamma, \phl x:\thl A & \phl Q \seq
        \thl\Gamma, \phl
        x:\thl B}
      \\[1ex]
      \infer[?]
      {\phl{\client xyP} \seq \thl \Gamma, \phl x:\thl {\query A}}
      {\phl P \seq \thl \Gamma, \phl y:\thl A}
      \qquad\qquad
      \infer[!]
      {\phl{\srv xyP} \seq \,\thl{\query \Gamma}, \phl x:\thl{\bang A}}
      {\phl P \seq \,\thl{\query \Gamma}, \phl y:\thl A}
         \\[1ex]
         \qquad\qquad
         \infer[\textsc{Cut}]
         {\phl{\cpres{x}A{y}{A^\perp}(P \pp Q)} \seq \thl\Gamma,\thl\Delta}
         {\phl P \seq \thl\Gamma, \phl x:\thl A & \phl Q \seq \thl\Delta,
         	\phl y:\thl{\dual{A}}}
      \end{array}
  \end{displaymath}

\newpage
\section{Cut Elimination Proof for Synchronous
  Forwarders}\label{app:synccutelim}
\newcommand{\ax}[0]{\textsc{Ax}}
\newcommand{\Lbr}[1]{\Left\dbr{#1}}
\newcommand{\Rbr}[1]{\Right\dbr{#1}}
\newcommand{\Qbr}[1]{\Query\dbr{#1}}
\newcommand{\red}[1]{\textcolor{red}{#1}}

%
%
%


\begin{theorem}[Cut-admissibility] \ \par
  \begin{enumerate}  
  \item If $\DD:: \Gamma_1, A$
    and $\EE:: \Gamma_2,\dual{A}$
    then $ \Gamma_1,\Gamma_2$.\label{cuta}
  \item If $\DD:: \Delta_1, A$
    and $\EE:: \Delta_2, \Gamma_2, B$
    and $\FF:: \Gamma_3,[A^\perp]B^\perp$
    then $ [\Delta_1]\Delta_2,\Gamma_2, \Gamma_3$.\label{cutb}

  \item If $\DD:: \Gamma_1, \Lbr{\Delta_1, A \oplus B} C$
     and $ \EE :: \Gamma_2, \Lbr{\Delta_2} A^\perp$
     and $ \FF :: \Gamma_2, \Rbr{\Delta_2} B^\perp$
    then $ \Gamma_1,  \Gamma_2, \Lbr{\Delta_1, \Delta_2} C$.\label{cs1}
  \item If $\DD ::\Gamma_1, \Lbr{\Delta_1} C, A$
    and $\EE :: \Gamma_2, \Lbr{\Delta_2} A^\perp$
    then $ \Gamma_1,  \Gamma_2, \Lbr{\Delta_1, \Delta_2}C$.\label{cs11}
  \item If $\DD:: \Gamma_1, \Rbr{\Delta_1, A \oplus B} C$
     and $ \EE :: \Gamma_2, \Lbr{\Delta_2} A^\perp$
     and $ \FF :: \Gamma_2, \Rbr{\Delta_2} B^\perp$
    then $ \Gamma_1,  \Gamma_2, \Rbr{\Delta_1,  \Delta_2} C$.\label{cs2}
  \item If $\DD :: \Gamma_1, \Rbr{\Delta_1} C, B$
    and $\EE :: \Gamma_2, \Rbr{\Delta_2}B^\perp$
    then $ \Gamma_1,  \Gamma_2, \Rbr{\Delta_1, \Delta_2}C $.\label{cs22}


  \item If  $\DD :: \Gamma_1, \Qbr{\Delta_1,\query A}C$
    and $\EE :: \Qbr{\Delta_2} A^\perp$
    then $\Gamma_1, \Qbr {\Delta_1,\Delta_2}C$.\label{cs3}
  \item If $\DD :: \Gamma_1, \Qbr{\Delta_1}C, A$
    and $\EE :: \Gamma_2, \Qbr{\Delta_2} A^\perp$
    then $\Gamma_1,\Gamma_2, \Qbr{\Delta_1,\Delta_2} C$.\label{cs33}
  \end{enumerate}
\end{theorem}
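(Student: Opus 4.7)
The plan is to prove all eight statements simultaneously by a lexicographic induction. The outer measure is the size of the cut formula $A$ (equivalently $B$ in statement~\ref{cutb}, etc.), and the inner measure is the sum of the heights of the premiss derivations $\DD$, $\EE$, $\FF$. The runtime cuts $\Did{Cut}_{\tensor\parr}$, $\Did{Cut}_{\oplus\with_{1\#}}$, $\Did{Cut}_{\oplus\with_{2\#}}$, $\Did{Cut}_{!?_1}$, $\Did{Cut}_{!?_2}$ decompose a principal $\Did{Cut}$ on a compound connective into smaller cuts on its sub-formulas, so mutual appeals across statements are justified by the strict decrease of the outer measure, while permutations of a cut past a non-principal last rule are justified by the strict decrease of the inner measure.

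First I would dispatch the principal cases using exactly the key $\beta$-reductions already listed in Fig.~\ref{fig:syncsemantics1}. For item~\ref{cuta}, when $\DD$ and $\EE$ both end with the rule principal on $A$: the axiom case is a substitution; the $\one/\bot$ case is immediate; the $\tensor/\parr$ case produces a $\Did{Cut}_{\tensor\parr}$ and so appeals to item~\ref{cutb} with the strictly smaller formulas $A$ and $B$; the $\oplus/\with$ case gives a $\Did{Cut}_{\oplus\with_{1\#}}$ handled by item~\ref{cs1} or~\ref{cs2}; the $!/?$ case gives a $\Did{Cut}_{!?_1}$ handled by item~\ref{cs3}. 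For item~\ref{cutb}, the principal reduction occurs when the third premiss ends with $\tensor$ freeing $\phl y$; this splits into two ordinary $\Did{Cut}$s on $A$ and $B$, both strictly smaller, covered by item~\ref{cuta}. For items~\ref{cs1}--\ref{cs2}, the principal trigger is an $\oplus_\#$-rule on the boxed $A\oplus B$; it reduces to $\Did{Cut}_{\oplus\with_{2\#}}$, i.e., items~\ref{cs11}/\ref{cs22}, on the strictly smaller $A$ or $B$. For items~\ref{cs11}/\ref{cs22}, the principal trigger is the $\oplus_\#$-rule that empties the additive box on the right premiss; this reduces to an ordinary $\Did{Cut}$ (item~\ref{cuta}) on the same formula but with a strictly smaller inner measure, because the $\Query$/$\Select$ context decreases. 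The exponential items~\ref{cs3}/\ref{cs33} are handled symmetrically.

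Next I would handle the commutative cases by appealing to the $\kappa$-permutations of Fig.~\ref{fig:syncsemantics2}. If the last rule of $\DD$ (or $\EE$, or $\FF$) acts on a formula distinct from the cut formula and from the boxes involved in the cut, one permutes the cut upwards, applies the induction hypothesis on the premiss of that rule (smaller inner measure), and reapplies the rule below. The bookkeeping here is heaviest for $\Did{Cut}_{\tensor\parr}$, because one must track which of the three premisses the commuting rule belongs to and ensure the side conditions on free names in Fig.~\ref{fig:syncsemantics1} (the structural equivalences for rearranging nested parallel compositions) are respected. Similarly, for $\Did{Cut}_{\oplus\with_{1\#}}$ one must verify that permutations on the left premiss correctly preserve the $\Select[\#]\dbr{\cdot}$ structure containing the not-yet-selected formula $A\oplus B$, invoking Lemma~\ref{lem:opluspermute}-style reasoning when needed (though here it is encoded directly in the runtime cut).

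The main obstacle I anticipate is the $\tensor\parr$ case. Because $\Did{Cut}_{\tensor\parr}$ has three premisses and the blocked endpoint $\phl y$ may be buried arbitrarily deep in $\FF$, the termination argument for commutative permutations must carefully account for the fact that the rule being commuted past might itself involve another runtime cut, and one must verify that the nested-parallel structural equivalences always realign the term into a form where the inductive hypothesis applies on strictly smaller inner measure. A secondary subtlety is the interaction between $\Did{Cut}_{\oplus\with_{2\#}}$ and any further $\oplus_\#$ rules on other formulas inside the $\Select[\#]$-box: these must be shown to permute correctly so that the box eventually empties, enabling the collapse back to an ordinary $\Did{Cut}$. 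Once these permutation lemmas are in place, the termination of the inner induction is guaranteed, and the outer induction on cut-formula size closes the argument.
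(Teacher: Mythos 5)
Your plan is correct and matches the paper's own proof in structure: the appendix proves all eight statements by a mutual induction on the cut formula and the premiss derivations, with exactly the dispatch pattern you describe (item~\ref{cuta} invokes items~\ref{cutb}, \ref{cs1}, \ref{cs2}, \ref{cs3} on strictly smaller formulas in the principal cases; item~\ref{cutb} splits back into two instances of item~\ref{cuta} on $A$ and $B$ when the third premiss's $\tensor$ frees the buffer; items~\ref{cs1}/\ref{cs2} reduce to \ref{cs11}/\ref{cs22}, which collapse to item~\ref{cuta} once the $\oplus_\#$ on the right premiss unboxes $A^\perp$; exponentials are symmetric), and all remaining cases are commutations handled by permuting the designated derivation's last rule past the cut. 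The only cosmetic difference is that you phrase the argument through the runtime cut rules and the $\beta$/$\kappa$ reductions of \S~\ref{sec:cut-elim}, whereas the appendix works directly on sequents; also note that in the \ref{cs11}/\ref{cs22}-to-\ref{cuta} step the inner measure decreases because the right premiss derivation shrinks, not because the $\Select$-box shrinks, but this does not affect the validity of your termination argument.
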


\begin{proof}   by induction over the cut formula and the left and right derivation.
  \begin{enumerate}
  \item  By induction on $\DD$
    \begin{description}
    \item [Impossible Cases:] 
    \item [Axiom Case:]
      \[
        \DD =
        \infer[\ax]
        {A^\perp, A}
        {}
      \]
      \begin{tabbing}
        $A^\perp, \Gamma_2$ \` by $\EE$ 
      \end{tabbing}
    \item [Key Case:]
      \[
        \DD = \infer[\one]
        {\one, [\ast]^n}
        {}
      \]
      and
      \[
        \EE =
        \infer[\bot]
        {\Gamma_2, \bot}
        {\EE_1 :: \Gamma_2, [\ast]}
      \]
      \begin{tabbing}
        $\Gamma_2, [\ast]$ \` by $\EE_1$
      \end{tabbing}
    \item [Key Case:] 
      \[\DD =
        \infer[\tensor]
        {\Gamma_1, [\Delta_1]\Delta_2, A \tensor B}
        { \DD_1 :: \Delta_1, A
          \quad 
          \DD_2 :: \Gamma_1, \Delta_2, B}
      \] 
      and 
      \[
        \EE = 
        \infer[\parr]
        {\Gamma_2, A^\perp \parr B^\perp}
        {\EE_1 :: \Gamma_2, [A^\perp] B^\perp} 
      \]
        \begin{tabbing}
          $[\Delta_1] \Delta_2, \Gamma_1, \Gamma_2$ \` by i.h.~\ref{cutb} on $\DD_1$, $\DD_2$, and $\EE_1$
      \end{tabbing}
    \item [Key Case:] \[\DD =     \infer[\with]
        {\Gamma_1,A \oplus B, \oplus\Delta_1,C \with D}
        {\DD_1 :: \Gamma_1, \Lbr{A \oplus B,\oplus\Delta_1}C
          \quad
          \DD_2 :: \Gamma_1, \Rbr{A \oplus B,\oplus\Delta_1} D}
      \] and \[ \EE =     \infer[\with]
        {\Gamma_2,\oplus\Delta_2,A^\perp \with B^\perp}
        {\EE_1 :: \Gamma_2, \Lbr{\oplus\Delta_2}A^\perp
          \quad
          \EE_2 :: \Gamma_2, \Rbr{\oplus\Delta_2} B^\perp} \]
      \begin{tabbing}
        $ \GG_1 :: \Gamma_1,  \Gamma_2, \Lbr{\oplus \Delta_1, \oplus \Delta_2} C$ \` by i.h.~\ref{cs1} on $\DD_1$ and $\EE_1$ and $\EE_2$ \\
        $ \GG_2 :: \Gamma_1,  \Gamma_2, \Rbr{\oplus \Delta_1, \oplus \Delta_2} D$ \` by i.h.~\ref{cs2} on $\DD_2$ and $\EE_1$ and $\EE_2$ \\
        $ \Gamma_1,  \Gamma_2, \oplus \Delta_1, \oplus \Delta_2, C \with D$ \` by $\with$ on $\GG_1$ and $\GG_2$
      \end{tabbing}
    \item [Key Case:]
      \[\DD =   \infer[\bang]
      {\query \Delta_1, \query A, \bang C} 
      {\DD_1
        :: \Qbr{\query\Delta_1, \query A} C}
    \]
    and
    \[\EE =
      \infer[\bang]
      {\query {\Delta_2}, \bang A^\perp} 
      { \EE_1 :: \Qbr{\query\Delta_2} A^\perp}
    \]

    \begin{tabbing}
      $\GG :: \Qbr {\query\Delta_1,\query\Delta_2}C$  \` by i.h.~\ref{cs3} on $\DD_1$ and $\EE_1$\\
      $?\Delta_1,\query\Delta_2, \bang C$  \` by $\bang$ on $\GG$\\
    \end{tabbing}

  \item [Left-Commutative Case:]
    \[\DD  =
      \infer[\bot]
      {\Gamma_1,A, \bot}
      {\DD_1 :: \Gamma_1,A, [\ast]}
    \]
    \begin{tabbing}
      $\GG ::\Gamma_1, \Gamma_2, [\ast]$ \` by i.h.~\ref{cuta} on $\DD_1$ and $\EE$ \\
      $\Gamma_1, \Gamma_2, \bot$ \` by $\bot$ on $\GG$
    \end{tabbing}
 \item [Left-Commutative Case:]
   \[\DD  =
     \infer[\tensor]
     {\Gamma_1, A, [\Delta_1]\Delta_2, D \tensor E}
     { \DD_1 :: \Delta_1, D
       \quad
       \DD_2 :: \Gamma_1, A, \Delta_2, E
     }
    \]
    \begin{tabbing}
      $\GG ::\Gamma_1, \Gamma_2, \Delta_2, E$ \` by i.h.~\ref{cuta} on $\DD_2$ and $\EE$ \\
      $\Gamma_1, \Gamma_2, [\Delta_1]\Delta_2,D \tensor E,$ \` by $\tensor$ on $\DD_1$ and $\GG$ \\
    \end{tabbing}

 \item [Left-Commutative Case:]
   \[\DD  =
     \infer[\parr]
     {\Gamma_1, A, D \parr E}
     {\DD_1::\Gamma_1, A, [D] E}     
    \]
    \begin{tabbing}
      $\GG ::\Gamma_1, \Gamma_2, [D]E $ \` by i.h.~\ref{cuta} on $\DD_1$ and $\EE$ \\
      $\Gamma_1, \Gamma_2, D \parr E$ \` by $\parr$ on $\GG$
    \end{tabbing}

 \item [Left-Commutative Case:]
    \[\DD  =
      \infer[\with]
      {\Gamma_1,A,\oplus\Delta,D \with E}
      {\DD_1 :: \Gamma_1,A, \Lbr{\oplus\Delta}D
        \quad
        \DD_2 :: \Gamma_1,A, \Rbr{\oplus\Delta} E}
    \]
    \begin{tabbing}
      $\GG_1 ::\Gamma_1, \Gamma_2, \Lbr{\oplus\Delta}D $ \` by i.h.~\ref{cuta} on $\DD_1$ and $\EE$ \\
      $\GG_2 ::\Gamma_1, \Gamma_2,  \Rbr{\oplus\Delta} E$ \` by i.h.~\ref{cuta} on $\DD_2$ and $\EE$ \\
      $\Gamma_1, \Gamma_2, \oplus\Delta,D \with E$ \` by $\with$ on $\GG_1$ and $\GG_2$
    \end{tabbing}

%

 \item [Left-Commutative Case:]
   \[\DD  =
     \infer[\oplus_1]
     {\Gamma_1, A,  \Lbr{\Delta, D\oplus E} C}
    {\DD_1::\Gamma_1, A,  \Lbr{\Delta} C, D}    
  \]
    \begin{tabbing}
      $\GG ::\Gamma_1, \Gamma_2,  \Lbr{\Delta} C, D$ \` by i.h.~\ref{cuta} on $\DD_1$ and $\EE$ \\
      $\Gamma_1, \Gamma_2,\Lbr{\Delta, D\oplus E} C$ \` by $\oplus_1$ on $\GG$
    \end{tabbing}

 \item [Left-Commutative Case:]
   \[\DD  =  \infer[\oplus_2]
     {\Gamma_1, A, \Rbr{\Delta, D\oplus E}C}
     {\DD_1 :: \Gamma_1, A,\Rbr{\Delta} C, E}
   \]
   \begin{tabbing}
      $\GG ::\Gamma_1, \Gamma_2,  \Rbr{\Delta} C, E$ \` by i.h.~\ref{cuta} on $\DD_1$ and $\EE$ \\
      $\Gamma_1, \Gamma_2,\Rbr{\Delta, D\oplus E} C$ \` by $\oplus_2$ on $\GG$
   \end{tabbing}

  \item [Left-Commutative Case:]
    \[\DD  =     \infer[\query]
      {\Gamma_1,A, \Qbr{\Delta, \query D}C}
      {\DD_1 :: \Gamma_1,A, \Qbr{\Delta}C, D}
    \]
    \begin{tabbing}
      $\GG ::\Gamma_1, \Gamma_2,  \Qbr{\Delta}C, D$ \` by i.h.~\ref{cuta} on $\DD_1$ and $\EE$ \\
      $\Gamma_1, \Gamma_2,\Qbr{\Delta, \query D}C$ \` by $\query$ on $\GG$
    \end{tabbing}

  \end{description}
\bigskip
  \item By induction on $\FF$. 
    \begin{description}
    \item [Impossible Cases:] $\ax$, $\one$, $\bang$. 
    \item [Key Case:] $$\FF = \infer[\tensor]
      {\Gamma_3, C \tensor D, [A^\perp]B^\perp}
      {\FF_1 :: A^\perp, C \quad \FF_2 :: \Gamma_3 , B^\perp, D}$$
      \begin{tabbing}
        $\GG_1 :: \Delta_1, C$ \` by i.h.~\ref{cuta} on $\DD$ and $\FF_1$ \\
        $\GG_2 :: \Delta_2, \Gamma_2, \Gamma_3, D$ \` by i.h.~\ref{cuta} on $\EE$ and $\FF_2$ \\
        $[\Delta_1] \Delta_2, \Gamma_2, \Gamma_3, C \tensor D$ \` by  $\tensor$ on $\GG_1$ and $\GG_2$
      \end{tabbing}
    \item [Commutative Case:] $$\FF = \infer[\tensor]
      {\Gamma_3, [\Delta'_1]\Delta'_2, C \tensor D, [A^\perp]B^\perp}
      {\FF_1 :: \Delta'_1, C \quad \FF_2 :: \Gamma_3 , [A^\perp]B^\perp, \Delta'_2, D}$$
      \begin{tabbing}
        $\GG :: [\Delta_1] \Delta_2, \Gamma_2, \Gamma_3, \Delta_2', D$ \` by i.h.~\ref{cutb} on $\DD$, $\EE$, and $\FF_2$ \\
        $[\Delta'_1] \Delta'_2, C \tensor D,  [\Delta_1] \Delta_2, \Gamma_2, \Gamma_3$ \` by  $\tensor$ on $\FF_1$ and $\GG$
      \end{tabbing}
    \item [Commutative Case:]  $$\FF = \infer[\bot]
      {\Gamma_3,  [A^\perp]B^\perp, \bot}
      {\FF_1:: \Gamma_3,  [A^\perp]B^\perp, [\ast]}$$
      \begin{tabbing}
        $\GG :: [\Delta_1] \Delta_2, \Gamma_2, \Gamma_3, [\ast] $ \` by i.h.~\ref{cutb} on $\DD$, $\EE$, and $\FF_1$ \\
        $[\Delta_1] \Delta_2, \Gamma_2, \Gamma_3, \bot $ \` by $\bot$ on $\GG$
      \end{tabbing}
    \item [Commutative Case:] $$\FF =  \infer[\parr]
      {\Gamma_3, [A^\perp]B^\perp, C \parr D}
      {\FF_1 :: \Gamma_3, [A^\perp]B^\perp, [C] D}$$
      \begin{tabbing}
        $\GG :: [\Delta_1] \Delta_2, \Gamma_2, \Gamma_3, [C] D  $ \` by i.h.~\ref{cutb} on $\DD$, $\EE$, and $\FF_1$ \\
        $[\Delta_1] \Delta_2, \Gamma_2, \Gamma_3, C \parr D $ \` by $\parr$ on $\GG$
      \end{tabbing}
    \item [Commutative Case:] $$\FF =  \infer[\with]
      {\Gamma_3, [A^\perp]B^\perp,\oplus\Delta,C \with D}
      {\FF_1 :: \Gamma_3,  [A^\perp]B^\perp,\Lbr{\oplus\Delta}C
        \quad
        \FF_2 :: \Gamma_3,  [A^\perp]B^\perp,\Rbr{\oplus\Delta} D}$$
      \begin{tabbing}
        $\GG_1 :: [\Delta_1] \Delta_2, \Gamma_2, \Gamma_3, \Lbr{\oplus\Delta}C $ \` by i.h.~\ref{cutb} on $\DD$, $\EE$, and $\FF_1$ \\
        $\GG_2 :: [\Delta_1] \Delta_2, \Gamma_2, \Gamma_3,\Rbr{\oplus\Delta} D$ \` by i.h.~\ref{cutb} on $\DD$, $\EE$, and $\FF_2$ \\
        $[\Delta_1] \Delta_2, \Gamma_2, \Gamma_3, \oplus \Delta, C \with D$ \` by $\with$ on $\GG_1$ and $\GG_2$
      \end{tabbing}
%
    \item [Commutative Case:] $$\FF =  \infer[\oplus_1]
      {\Gamma_3,  [A^\perp]B^\perp, \Lbr{\Delta, D\oplus E} C}
      {\FF_1::\Gamma_3,  [A^\perp]B^\perp, \Lbr{\Delta} C, D}$$
      \begin{tabbing}
        $\GG :: [\Delta_1] \Delta_2, \Gamma_2, \Gamma_3, \Lbr{\Delta} C,D $ \` by i.h.~\ref{cutb} on $\DD$, $\EE$, and $\FF_1$ \\
        $ [\Delta_1] \Delta_2, \Gamma_2, \Gamma_3, \Lbr{\Delta, D \oplus E} C $ \` by $\oplus_1$ on  $\GG$ \\
      \end{tabbing}
      
    \item [Commutative Case:] $$\FF =  \infer[\oplus_2]
      {\Gamma_3,  [A^\perp]B^\perp, \Lbr{\Delta, D\oplus E} C}
      {\FF_1::\Gamma_3,  [A^\perp]B^\perp, \Lbr{\Delta} C, E}$$
      \begin{tabbing}
        $\GG :: [\Delta_1] \Delta_2, \Gamma_2, \Gamma_3, \Lbr{\Delta} C,E $ \` by i.h.~\ref{cutb} on $\DD$, $\EE$, and $\FF_1$ \\
        $ [\Delta_1] \Delta_2, \Gamma_2, \Gamma_3, \Lbr{\Delta, D \oplus E} C $ \` by $\oplus_2$ on  $\GG$ \\
      \end{tabbing}
      
    \item [Commutative Case:] $$\FF =   \infer[\query]
      {\Gamma_3, [A^\perp]B^\perp, \Qbr{\Delta, \query D}C}
      {\FF_1 :: \Gamma_3, [A^\perp]B^\perp, \Qbr{\Delta}C, D}$$
      \begin{tabbing}
        $\GG :: [\Delta_1] \Delta_2, \Gamma_2, \Gamma_3, \Qbr{\Delta}C, D $ \` by i.h.~\ref{cutb} on $\DD$, $\EE$, and $\FF_1$ \\
        $ [\Delta_1] \Delta_2, \Gamma_2, \Gamma_3, \Qbr{\Delta, \query D}C $ \` by $\query$ on  $\GG$ \\
      \end{tabbing}
      
      
    \end{description}
  \item By induction on $\DD$
    \begin{description}
    \item [Impossible Cases:]  $\ax$, $\one$, $\bang$
    \item [Key Case:]
      \[\DD  =
        \infer[\oplus_1]
        {\Gamma_1,  \Lbr{\Delta_1, A\oplus B} C}
        {\DD_1 :: \Gamma_1,  \Lbr{\Delta_1} C, A}\]
      \begin{tabbing}
        $\Gamma_1, \Gamma_2, \Lbr{\Delta_1, \Delta_2}C$ \` by i.h. \ref{cs11} on $\DD_1$ and $\EE$
      \end{tabbing}
    \item [Left-Commutative Case:]
      \[\DD  =
        \infer[\bot]
        {\Gamma_1, \Lbr{\Delta_1, A \oplus B} C, \bot}
        {\DD_1 :: \Gamma_1, \Lbr{\Delta_1, A \oplus B} C, [\ast]} \]
      \begin{tabbing}
        $\GG :: \Gamma_1, \Gamma_2,[\ast], \Lbr{\Delta_1,  \Delta_2} C $ \` by i.h.~\ref{cs1} on $\DD_1$, $\EE$, and $\FF$\\
        $\Gamma_1, \Gamma_2, \bot, \Lbr{\Delta_1,  \Delta_2} C $ \` by $\bot$ on $\GG$
      \end{tabbing}
    \item [Left-Commutative Case:]
      \[\DD  =  \infer[\tensor]
        {\Gamma_1, \Lbr{\Delta_1, A \oplus B} C, [\Delta_3]\Delta_4, D \tensor E}
        {\DD_1::\Delta_3, D \quad \DD_2::\Gamma_1, \Lbr{\Delta_1, A \oplus B} C, \Delta_4, E} \]
      \begin{tabbing}
        $\GG :: \Gamma_1, \Gamma_2, \Delta_4, E, \Lbr{\Delta_1,  \Delta_2} C $ \` by i.h.~\ref{cs1} on $\DD_2$, $\EE$, and $\FF$\\
        $ \Gamma_1, \Gamma_2, [\Delta_3]\Delta_4, D \tensor E, \Lbr{\Delta_1,  \Delta_2} C $ \` by $\tensor$ on $\DD_1$ and $\GG$
      \end{tabbing}
    \item [Left-Commutative Case:]
      \[\DD  =
       \infer[\parr]
       {\Gamma_1, \Lbr{\Delta_1, A \oplus B} C, D \parr E}
       {\DD_1::\Gamma_1, \Lbr{\Delta_1, A \oplus B} C, [D] E} \]
     \begin{tabbing}
       $\GG :: \Gamma_1, \Gamma_2, [D]E, \Lbr{\Delta_1,  \Delta_2} C $ \` by i.h.~\ref{cs1} on $\DD_1$, $\EE$, and $\FF$\\
       $ \Gamma_1, \Gamma_2, D \parr E, \Lbr{\Delta_1,  \Delta_2} C $ \` by $\parr$ on $\GG$ 
      \end{tabbing}

    \item [Left-Commutative Case:]
      \[\DD  =
        \infer[\with]
        {\Gamma_1,\Lbr{\Delta_1, A \oplus B} C,\oplus\Delta,D \with E}
        {\DD_1 :: \Gamma_1, \Lbr{\Delta_1, A \oplus B} C,\Lbr{\oplus\Delta}D
          \quad
          \DD_2 ::\Gamma_1, \Lbr{\Delta_1, A \oplus B} C,\Rbr{\oplus\Delta} E}
      \]
      \begin{tabbing}
      $\GG_1 :: \Gamma_1, \Gamma_2, \Lbr{\oplus\Delta}D , \Lbr{\Delta_1,  \Delta_2} C $ \` by i.h.~\ref{cs1} on $\DD_1$, $\EE$, and $\FF$\\
      $\GG_2 :: \Gamma_1, \Gamma_2, \Rbr{\oplus\Delta} E, \Lbr{\Delta_1,  \Delta_2} C $ \` by i.h.~\ref{cs1} on $\DD_2$, $\EE$, and $\FF$\\
       $ \Gamma_1, \Gamma_2, D \with E, \Lbr{\Delta_1,  \Delta_2} C $ \` by $\with$ on $\GG_1$ and $\GG_2$
     \end{tabbing}

%

    \item [Left-Commutative Case:]
      \[\DD  =
        \infer[\oplus_1]
        {\Gamma_1,  \Lbr{\Delta_1, A \oplus B} C, \Rbr{\Delta_3, D\oplus E}F}
        {\DD_1 :: \Gamma_1,  \Lbr{\Delta_1, A \oplus B} C, \Rbr{\Delta_3} F, D}
      \]
      \begin{tabbing}
        $\GG :: \Gamma_1, \Gamma_2, \Rbr{\Delta_3} F, D, \Lbr{\Delta_1,  \Delta_2} C $ \` by i.h.~\ref{cs1} on $\DD_1$, $\EE$, and $\FF$\\
        $\Gamma_1, \Gamma_2, \Rbr{\Delta_3, D\oplus E} F, \Lbr{\Delta_1,  \Delta_2} C $ \` by $\oplus_1$ on $\GG$      
      \end{tabbing}
    \item [Left-Commutative Case:]
      \[\DD  =
        \infer[\oplus_2]
        {\Gamma_1,  \Lbr{\Delta_1, A \oplus B} C, \Rbr{\Delta_3, D\oplus E}F}
        {\DD_1 :: \Gamma_1,  \Lbr{\Delta_1, A \oplus B} C, \Rbr{\Delta_3} F, E}
      \]
      \begin{tabbing}
        $\GG :: \Gamma_1, \Gamma_2, \Rbr{\Delta_3} F, E, \Lbr{\Delta_1,  \Delta_2} C $ \` by i.h.~\ref{cs1} on $\DD_1$, $\EE$, and $\FF$\\
        $\Gamma_1, \Gamma_2, \Rbr{\Delta_3, D\oplus E} F, \Lbr{\Delta_1,  \Delta_2} C $ \` by $\oplus_2$ on $\GG$      
      \end{tabbing}
    \item [Left-Commutative Case:]
      \[\DD  = \infer[\query]
        {\Gamma_1,  \Lbr{\Delta_1, A \oplus B} C, \Qbr{\Delta, \query D}E}
        {\DD_1:: \Gamma_1,  \Lbr{\Delta_1, A \oplus B} C, \Qbr{\Delta}E, D}
      \]
      \begin{tabbing}
        $\GG :: \Gamma_1, \Gamma_2, \Qbr{\Delta}E, D, \Lbr{\Delta_1,  \Delta_2} C $ \` by i.h.~\ref{cs1} on $\DD_1$, $\EE$, and $\FF$\\
        $\Gamma_1, \Gamma_2, \Qbr{\Delta, \query D}E, \Lbr{\Delta_1,  \Delta_2} C $ \` by $\query$ on $\GG$      
      \end{tabbing}


    \end{description}
  \item By induction on $\EE$ 
    \begin{description}
    \item [Impossible Cases:]  $\ax$, $\one$, $\bang$
    \item [Key Case:]
      \[\EE =
        \infer[\oplus_1]
        {\Gamma_2, \Lbr{B \oplus D}A^\perp}
        {\EE_1::\Gamma_2, B, A^\perp}
      \]
      \begin{tabbing}
        $\GG :: \Gamma_1, \Gamma_2, \Lbr{\Delta_1}C, B$ \` by i.h. \ref{cuta} on $\DD$ and $\EE_1$\\
        $\Gamma_1, \Gamma_2, \Lbr{\Delta_1,B \oplus D}C$ \` by $\oplus_1$ on $\GG$.
      \end{tabbing}
    \item [Right-Commutative Case:]
      \[\EE  =
        \infer[\bot]
        {\Gamma_2, \Lbr{\Delta_2}A^\perp, \bot}
        {\EE_1 :: \Gamma_2, \Lbr{\Delta_2}A^\perp, [\ast]} \]
      \begin{tabbing}
        $\GG :: \Gamma_1, \Gamma_2,[\ast], \Lbr{\Delta_1,  \Delta_2} C $ \` by i.h.~\ref{cs11} on $\DD$ and $\EE_1$\\
        $\Gamma_1, \Gamma_2, \bot, \Lbr{\Delta_1,  \Delta_2} C $ \` by $\bot$ on $\GG$
      \end{tabbing}
    \item [Right-Commutative Case:]
      \[\EE  =  \infer[\tensor]
        {\Gamma_2, \Lbr{\Delta_2}A^\perp, [\Delta_3]\Delta_4, D \tensor E}
        {\EE_1::\Delta_3, D \quad \EE_2::\Gamma_2, \Lbr{\Delta_2}A^\perp, \Delta_4, E} \]
      \begin{tabbing}
        $\GG :: \Gamma_1, \Gamma_2, \Delta_4, E, \Lbr{\Delta_1,  \Delta_2} C $ \` by i.h.~\ref{cs11} on $\DD$ and $\EE_2$\\
        $ \Gamma_1, \Gamma_2, [\Delta_3]\Delta_4, D \tensor E, \Lbr{\Delta_1,  \Delta_2} C $ \` by $\tensor$ on $\EE_1$ and $\GG$
      \end{tabbing}
    \item [Right-Commutative Case:]
      \[\EE  =
       \infer[\parr]
       {\Gamma_2, \Lbr{\Delta_2}A^\perp, D \parr E}
       {\EE_1::\Gamma_2, \Lbr{\Delta_2}A^\perp, [D] E} \]
     \begin{tabbing}
       $\GG :: \Gamma_1, \Gamma_2, [D]E, \Lbr{\Delta_1,  \Delta_2} C $ \` by i.h.~\ref{cs11} on $\DD$ and $\EE_1$\\
       $ \Gamma_1, \Gamma_2, D \parr E, \Lbr{\Delta_1,  \Delta_2} C $ \` by $\parr$ on $\GG$ 
      \end{tabbing}
    \item [Right-Commutative Case:]
      \[\EE  =
        \infer[\with]
        {\Gamma_2,\Lbr{\Delta_2}A^\perp,\oplus\Delta,D \with E}
        {\EE_1 :: \Gamma_2, \Lbr{\Delta_2}A^\perp,\Lbr{\oplus\Delta}D
          \quad
          \EE_2 ::\Gamma_2, \Lbr{\Delta_2}A^\perp,\Rbr{\oplus\Delta} E}
      \]
      \begin{tabbing}
      $\GG_1 :: \Gamma_1, \Gamma_2, \Lbr{\oplus\Delta}D , \Lbr{\Delta_1,  \Delta_2} C $ \` by i.h.~\ref{cs11} on $\DD$ and $\EE_1$\\
      $\GG_2 :: \Gamma_1, \Gamma_2, \Rbr{\oplus\Delta} E, \Lbr{\Delta_1,  \Delta_2} C $ \` by i.h.~\ref{cs11} on $\DD$ and $\EE_2$\\
       $ \Gamma_1, \Gamma_2, D \with E, \Lbr{\Delta_1,  \Delta_2} C $ \` by $\with$ on $\GG_1$ and $\GG_2$
     \end{tabbing}

%

    \item [Right-Commutative Case:]
      \[\EE  =
        \infer[\oplus_1]
        {\Gamma_2,  \Lbr{\Delta_2}A^\perp, \Rbr{\Delta_3, D\oplus E}F}
        {\EE_1 :: \Gamma_2,  \Lbr{\Delta_2}A^\perp, \Rbr{\Delta_3} F, D}
      \]
      \begin{tabbing}
        $\GG :: \Gamma_1, \Gamma_2, \Rbr{\Delta_3} F, D, \Lbr{\Delta_1,  \Delta_2} C $ \` by i.h.~\ref{cs11} on $\DD$ and $\EE_1$\\
        $\Gamma_1, \Gamma_2, \Rbr{\Delta_3, D\oplus E} F, \Lbr{\Delta_1,  \Delta_2} C $ \` by $\oplus_1$ on $\GG$      
      \end{tabbing}

    \item [Right-Commutative Case:]
      \[\EE  =
        \infer[\oplus_2]
        {\Gamma_2,  \Lbr{\Delta_2}A^\perp, \Rbr{\Delta_3, D\oplus E}F}
        {\EE_1 :: \Gamma_2,  \Lbr{\Delta_2}A^\perp, \Rbr{\Delta_3} F, E}
      \]
      \begin{tabbing}
        $\GG :: \Gamma_1, \Gamma_2, \Rbr{\Delta_3} F, E, \Lbr{\Delta_1,  \Delta_2} C $ \` by i.h.~\ref{cs11} on $\DD$ and $\EE_1$\\
        $\Gamma_1, \Gamma_2, \Rbr{\Delta_3, D\oplus E} F, \Lbr{\Delta_1,  \Delta_2} C $ \` by $\oplus_2$ on $\GG$      
      \end{tabbing}
  
    \item [Right-Commutative Case:]
      \[\EE  = \infer[\query]
        {\Gamma_2,  \Lbr{\Delta_2}A^\perp, \Qbr{\Delta, \query D}E}
        {\EE_1:: \Gamma_2,  \Lbr{\Delta_2}A^\perp, \Qbr{\Delta}E, D}
      \]
      \begin{tabbing}
        $\GG :: \Gamma_1, \Gamma_2, \Qbr{\Delta}E, D, \Lbr{\Delta_1,  \Delta_2} C $ \` by i.h.~\ref{cs11} on $\DD$ and $\EE_1$\\
        $\Gamma_1, \Gamma_2, \Qbr{\Delta, \query D}E, \Lbr{\Delta_1,  \Delta_2} C $ \` by $\query$ on $\GG$      
      \end{tabbing}
    \end{description}
  \item The proof is analogous to (\ref{cs1}),  simply by replacing $\Left$ by $\Right$. 
  \item The proof is analogous to (\ref{cs11}),  simply by replacing $\Left$ by $\Right$. 

  \item By induction on $\DD$
    \begin{description}
    \item [Impossible Cases:]  $\ax$, $\one$, $\bang$
    \item [Key Case:]
      \[\DD  =
        \infer[\query]
        {\Gamma_1, \Qbr{\Delta_1, \query A}C}
        {\DD_1::\Gamma_1, \Qbr{\Delta_1}C, A}
      \]
      \begin{tabbing}
        $\Gamma_1, \Qbr {\Delta_1, \Delta_2}C$ \` by i.h. \ref{cs33} on $\DD_1$ and $\EE$
      \end{tabbing}
    \item [Left-Commutative Case:]
      \[\DD  =
        \infer[\bot]
        {\Gamma_1,\Qbr{\Delta_1,\query A} C, \bot}
        {\DD_1 :: \Gamma_1,\Qbr{\Delta_1,\query A} C, [\ast]} \]
      \begin{tabbing}
        $\GG :: \Gamma_1, [\ast], \Qbr{\Delta_1,  \Delta_2} C $ \` by i.h.~\ref{cs3} on $\DD_1$ and $\EE$\\
        $\Gamma_1,  \bot, \Qbr{\Delta_1,  \Delta_2} C $ \` by $\bot$ on $\GG$
      \end{tabbing}
    \item [Left-Commutative Case:]
      \[\DD  =  \infer[\tensor]
        {\Gamma_1,\Qbr{\Delta_1,\query A} C, [\Delta_3]\Delta_4, D \tensor E}
        {\DD_1::\Delta_3, D \quad \DD_2::\Gamma_1,\Qbr{\Delta_1,\query A} C, \Delta_4, E} \]
      \begin{tabbing}
        $\GG :: \Gamma_1,  \Delta_4, E, \Qbr{\Delta_1,  \Delta_2} C $ \` by i.h.~\ref{cs3} on $\DD_2$ and $\EE$\\
        $ \Gamma_1,  [\Delta_3]\Delta_4, D \tensor E, \Qbr{\Delta_1,  \Delta_2} C $ \` by $\tensor$ on $\DD_1$ and $\GG$
      \end{tabbing}
    \item [Left-Commutative Case:]
      \[\DD  =
       \infer[\parr]
       {\Gamma_1,\Qbr{\Delta_1,\query A} C, D \parr E}
       {\DD_1::\Gamma_1,\Qbr{\Delta_1,\query A} C, [D] E} \]
     \begin{tabbing}
       $\GG :: \Gamma_1,  [D]E, \Qbr{\Delta_1,  \Delta_2} C $ \` by i.h.~\ref{cs3} on $\DD_1$ and $\EE$\\
       $ \Gamma_1,  D \parr E, \Qbr{\Delta_1,  \Delta_2} C $ \` by $\parr$ on $\GG$ 
      \end{tabbing}

    \item [Left-Commutative Case:]
      \[\DD  =
        \infer[\with]
        {\Gamma_1,\Qbr{\Delta_1,\query A} C,\oplus\Delta,D \with E}
        {\DD_1 :: \Gamma_1,\Qbr{\Delta_1,\query A} C,\Lbr{\oplus\Delta}D
          \quad
          \DD_2 ::\Gamma_1,\Qbr{\Delta_1,\query A} C,\Rbr{\oplus\Delta} E}
      \]
      \begin{tabbing}
      $\GG_1 :: \Gamma_1,  \Lbr{\oplus\Delta}D , \Qbr{\Delta_1,  \Delta_2} C $ \` by i.h.~\ref{cs3} on $\DD_1$ and $\EE$\\
      $\GG_2 :: \Gamma_1,  \Rbr{\oplus\Delta} E, \Qbr{\Delta_1,  \Delta_2} C $ \` by i.h.~\ref{cs3} on $\DD_2$ and $\EE$\\
       $ \Gamma_1,  D \with E, \Qbr{\Delta_1,  \Delta_2} C $ \` by $\with$ on $\GG_1$ and $\GG_2$
     \end{tabbing}



    \item [Left-Commutative Case:]
      \[\DD  =
        \infer[\oplus_1]
        {\Gamma_1, \Qbr{\Delta_1,\query A} C, \Rbr{\Delta_3, D\oplus E}F}
        {\DD_1 :: \Gamma_1, \Qbr{\Delta_1,\query A} C, \Rbr{\Delta_3} F, D}
      \]
      \begin{tabbing}
        $\GG :: \Gamma_1,  \Rbr{\Delta_3} F, D, \Qbr{\Delta_1,  \Delta_2} C $ \` by i.h.~\ref{cs3} on $\DD_1$ and $\EE$\\
        $\Gamma_1,  \Rbr{\Delta_3, D\oplus E} F, \Qbr{\Delta_1,  \Delta_2} C $ \` by $\oplus_1$ on $\GG$      
      \end{tabbing}
    \item [Left-Commutative Case:]
      \[\DD  =
        \infer[\oplus_2]
        {\Gamma_1, \Qbr{\Delta_1,\query A} C, \Rbr{\Delta_3, D\oplus E}F}
        {\DD_1 :: \Gamma_1, \Qbr{\Delta_1,\query A} C, \Rbr{\Delta_3} F, E}
      \]
      \begin{tabbing}
        $\GG :: \Gamma_1,  \Rbr{\Delta_3} F, E, \Qbr{\Delta_1,  \Delta_2} C $ \` by i.h.~\ref{cs3} on $\DD_1$ and $\EE$\\
        $\Gamma_1,  \Rbr{\Delta_3, D\oplus E} F, \Qbr{\Delta_1,  \Delta_2} C $ \` by $\oplus_2$ on $\GG$      
      \end{tabbing}
  
    \item [Left-Commutative Case:]
      \[\DD  = \infer[\query]
        {\Gamma_1, \Qbr{\Delta_1,\query A} C, \Qbr{\Delta, \query D}E}
        {\DD_1:: \Gamma_1, \Qbr{\Delta_1,\query A} C, \Qbr{\Delta}E, D}
      \]
      \begin{tabbing}
        $\GG :: \Gamma_1,  \Qbr{\Delta}E, D, \Qbr{\Delta_1,  \Delta_2} C $ \` by i.h.~\ref{cs3} on $\DD_1$ and $\EE$\\
        $\Gamma_1,  \Qbr{\Delta, \query D}E, \Qbr{\Delta_1,  \Delta_2} C $ \` by $\query$ on $\GG$      
      \end{tabbing}

    \end{description}

  \item By induction on $\EE$. 
    \begin{description}
    \item [Impossible Cases:]  $\ax$, $\one$, $\bang$
    \item [Key Case:]
      \[\EE =
        \infer[\query]
        {\Gamma_2, \Qbr{\query B}A^\perp}
        {\EE_1::\Gamma_2, B, A^\perp}
      \]
      \begin{tabbing}
        $\GG:: \Gamma_1, \Gamma_2,  \Qbr {\Delta_1}C, B$ \` by i.h. \ref{cuta} on $\DD$ and $\EE_1$\\
        $\Gamma_1, \Gamma_2,  \Qbr {\Delta_1, \query B}C$ \` by $\query$ on $\GG$
      \end{tabbing}
    \item [Right-Commutative Case:]
      \[\EE  =
        \infer[\bot]
        {\Gamma_2, \Qbr{\Delta_2}A^\perp, \bot}
        {\EE_1 :: \Gamma_2, \Qbr{\Delta_2}A^\perp, [\ast]} \]
      \begin{tabbing}
        $\GG :: \Gamma_1, \Gamma_2,[\ast], \Qbr{\Delta_1,  \Delta_2} C $ \` by i.h.~\ref{cs33} on $\DD$ and $\EE_1$\\
        $\Gamma_1, \Gamma_2, \bot, \Qbr{\Delta_1,  \Delta_2} C $ \` by $\bot$ on $\GG$
      \end{tabbing}
    \item [Right-Commutative Case:]
      \[\EE  =  \infer[\tensor]
        {\Gamma_2, \Qbr{\Delta_2}A^\perp, [\Delta_3]\Delta_4, D \tensor E}
        {\EE_1::\Delta_3, D \quad \EE_2::\Gamma_2, \Qbr{\Delta_2}A^\perp, \Delta_4, E} \]
      \begin{tabbing}
        $\GG :: \Gamma_1, \Gamma_2, \Delta_4, E, \Qbr{\Delta_1,  \Delta_2} C $ \` by i.h.~\ref{cs33} on $\DD$ and $\EE_2$\\
        $ \Gamma_1, \Gamma_2, [\Delta_3]\Delta_4, D \tensor E, \Qbr{\Delta_1,  \Delta_2} C $ \` by $\tensor$ on $\EE_1$ and $\GG$
      \end{tabbing}
    \item [Right-Commutative Case:]
      \[\EE  =
       \infer[\parr]
       {\Gamma_2, \Qbr{\Delta_2}A^\perp, D \parr E}
       {\EE_1::\Gamma_2, \Qbr{\Delta_2}A^\perp, [D] E} \]
     \begin{tabbing}
       $\GG :: \Gamma_1, \Gamma_2, [D]E, \Qbr{\Delta_1,  \Delta_2} C $ \` by i.h.~\ref{cs33} on $\DD$ and $\EE_1$\\
       $ \Gamma_1, \Gamma_2, D \parr E, \Qbr{\Delta_1,  \Delta_2} C $ \` by $\parr$ on $\GG$ 
      \end{tabbing}
    \item [Right-Commutative Case:]
      \[\EE  =
        \infer[\with]
        {\Gamma_2,\Qbr{\Delta_2}A^\perp,\oplus\Delta,D \with E}
        {\EE_1 :: \Gamma_2, \Qbr{\Delta_2}A^\perp,\Lbr{\oplus\Delta}D
          \quad
          \EE_2 ::\Gamma_2, \Qbr{\Delta_2}A^\perp,\Rbr{\oplus\Delta} E}
      \]
      \begin{tabbing}
      $\GG_1 :: \Gamma_1, \Gamma_2, \Lbr{\oplus\Delta}D , \Qbr{\Delta_1,  \Delta_2} C $ \` by i.h.~\ref{cs33} on $\DD$ and $\EE_1$\\
      $\GG_2 :: \Gamma_1, \Gamma_2, \Rbr{\oplus\Delta} E, \Qbr{\Delta_1,  \Delta_2} C $ \` by i.h.~\ref{cs33} on $\DD$ and $\EE_2$\\
       $ \Gamma_1, \Gamma_2, D \with E, \Qbr{\Delta_1,  \Delta_2} C $ \` by $\with$ on $\GG_1$ and $\GG_2$
     \end{tabbing}

%

    \item [Right-Commutative Case:]
      \[\EE  =
        \infer[\oplus_1]
        {\Gamma_2,  \Qbr{\Delta_2}A^\perp, \Rbr{\Delta_3, D\oplus E}F}
        {\EE_1 :: \Gamma_2,  \Qbr{\Delta_2}A^\perp, \Rbr{\Delta_3} F, D}
      \]
      \begin{tabbing}
        $\GG :: \Gamma_1, \Gamma_2, \Rbr{\Delta_3} F, D, \Qbr{\Delta_1,  \Delta_2} C $ \` by i.h.~\ref{cs33} on $\DD$ and $\EE_1$\\
        $\Gamma_1, \Gamma_2, \Rbr{\Delta_3, D\oplus E} F, \Qbr{\Delta_1,  \Delta_2} C $ \` by $\oplus_1$ on $\GG$      
      \end{tabbing}

    \item [Right-Commutative Case:]
      \[\EE  =
        \infer[\oplus_2]
        {\Gamma_2,  \Qbr{\Delta_2}A^\perp, \Rbr{\Delta_3, D\oplus E}F}
        {\EE_1 :: \Gamma_2,  \Qbr{\Delta_2}A^\perp, \Rbr{\Delta_3} F, E}
      \]
      \begin{tabbing}
        $\GG :: \Gamma_1, \Gamma_2, \Rbr{\Delta_3} F, E, \Qbr{\Delta_1,  \Delta_2} C $ \` by i.h.~\ref{cs33} on $\DD$ and $\EE_1$\\
        $\Gamma_1, \Gamma_2, \Rbr{\Delta_3, D\oplus E} F, \Qbr{\Delta_1,  \Delta_2} C $ \` by $\oplus_2$ on $\GG$      
      \end{tabbing}
  
    \item [Right-Commutative Case:]
      \[\EE  = \infer[\query]
        {\Gamma_2,  \Qbr{\Delta_2}A^\perp, \Qbr{\Delta, \query D}E}
        {\EE_1:: \Gamma_2,  \Qbr{\Delta_2}A^\perp, \Qbr{\Delta}E, D}
      \]
      \begin{tabbing}
        $\GG :: \Gamma_1, \Gamma_2, \Qbr{\Delta}E, D, \Qbr{\Delta_1,  \Delta_2} C $ \` by i.h.~\ref{cs33} on $\DD$ and $\EE_1$\\
        $\Gamma_1, \Gamma_2, \Qbr{\Delta, \query D}E, \Qbr{\Delta_1,  \Delta_2} C $ \` by $\query$ on $\GG$      
      \end{tabbing}
  
    \item [Right-Commutative Case:]
      \[\EE  = \infer[\query]
        {\Gamma_2,  \Qbr{\Delta_2,\query D}A^\perp}
        {\EE_1:: \Gamma_2,  \Qbr{\Delta_2}A^\perp, D}
      \]
      \begin{tabbing}
        $\GG :: \Gamma_1, \Gamma_2, D, \Qbr{\Delta_1,  \Delta_2} C $ \` by i.h.~\ref{cs33} on $\DD$ and $\EE_1$\\
        $\Gamma_1, \Gamma_2, \Qbr{\Delta_1,  \Delta_2, \query D} C $ \` by $\query$ on $\GG$      
      \end{tabbing}
    \end{description}
    
  \end{enumerate}
\end{proof}

\renewcommand{\Lbr}{\Select[l]}
\renewcommand{\Rbr}{\Select[r]}
\renewcommand{\Qbr}{{\color{blue}{\mathcal Q}}}

\newpage
\section{Proof of
  Lemma~\ref{lem:opluspermute}}\label{app:opluspermute}
{\bf Lemma~\ref{lem:opluspermute}.}
Let
$\syncseq P{\Gamma, \ltriCtx{\Delta, \phl{x}:\thl{A\oplus
      B}}{z}{C}}$. Then, there exists $\phl{P'}$ such that
\begin{displaymath}
  \infer[\oplus_l]
  {\syncseq {\inl x{P'}} {\Gamma,  \ltriCtx{\Delta, \phl{x}:\thl{A\oplus B}}{z}{C} }}
  {\syncseq {P'} {\Gamma,  \ltriCtx{{\Delta}}{z}{C}, \pairQ{x}{A}}}
\end{displaymath}
(similar result for the right case, with $\oplus_r$).

\bigskip
\noindent Let
$\syncseq P{\Gamma, \triCtx{\Delta, \pairQ x{?A}}{z}{C}}$. Then, there
exists $\phl {P'}$ such that
\begin{smallequation*}
  \infer[?]{
    \syncseq {\client xy{P'} }{\Gamma, \Query[\![\Delta, \phl x:\thl{\query A}]\!]\phl z:\thl{C}} 
  }{
    \syncseq {P'}{\Gamma, \Query[\![\Delta]\!]\phl z:\thl{C}, \phl y:\thl A}
  }
\end{smallequation*}

\begin{proof}
  We look at the case of $\oplus_l$ in detail. The other cases are
  very similar. We show that, besides the rule introducing
  $A\oplus B$, the proof does not change its structure at all. This is
  done by induction on the size of the proof by showing that
  $\oplus_l$ can permute down with any of the other rules.
  \begin{itemize}

  \item $\one$/$Ax$. Not applicable.
    
  \item $\perp$. If the last applied rule is $\perp$, then it must be
    such that:
    \[
      \infer[\perp]
      {
        \syncseq{\wait wP} {\Gamma, \ltriCtx{\Delta,
            \phl{x}:\thl{A\oplus B}}{z}{C},  \pairQ
          w{\bot}}
      }
      {
        \syncseq{P}{\Gamma, \ltriCtx{\Delta,
            \phl{x}:\thl{A\oplus B}}{z}{C}, [\ast]}
      }
    \]
    By induction hypothesis, there exists $\phl{P'}$ such that
    \begin{smallequation*}
      \infer[\perp]
      {
        \syncseq{\wait w{\inl x{P'}}} {\Gamma, \ltriCtx{\Delta,
            \phl{x}:\thl{A\oplus B}}{z}{C}, \pairQ
          w{\bot}}
      }
      {
        \infer[\oplus_l]
        {\syncseq {\inl x{P'}} {\Gamma,  \ltriCtx{\Delta,
              \phl{x}:\thl{A\oplus B}}{z}{C}, [\ast] }}
        {
        \syncseq{P'}{\Gamma, \ltriCtx{\Delta}{z}{C},
          \phl{x}:\thl{A}, [\ast]}
        }
      }
    \end{smallequation*}
    Clearly, we can make the two rules commute, obtaining: 
    \begin{displaymath}
      \infer[\oplus_l]
      {
        \syncseq{\inl x{\wait w{P'}}}
          {\Gamma, \ltriCtx{\Delta, \pairQ x{A\oplus B}}{z}{C},   \pairQ
            w{\perp}}
      }
      {
        \infer[\perp]
        {
          \syncseq{\wait w{P'}}
          {\Gamma, \ltriCtx{\Delta}{z}{C}, \pairQ xA,  \pairQ
            w{\bot}}
        }
        {
          {\syncseq {P'} {\Gamma,  \ltriCtx{{\Delta}}{z}{C}, \pairQ{x}{A},
               [\ast]}}
        }
      }
    \end{displaymath}

  \item $\parr$. If the last applied rule is $\parr$, then it must be
    such that:
    \[
      \infer[\parr]
      {
        \syncseq{\recv wyP} {\Gamma, \ltriCtx{\Delta,
            \phl{x}:\thl{A\oplus B}}{z}{C},  \pairQ
          w{D \parr E}}
      }
      {
        \syncseq{P}{\Gamma, \ltriCtx{\Delta,
            \phl{x}:\thl{A\oplus B}}{z}{C}, \quadQ yDwE}
      }
    \]
    By induction hypothesis, there exists $\phl{P'}$ such that
    \begin{displaymath}
      \infer[\parr]
      {
        \syncseq{\recv wy{\inl x{P'}}} {\Gamma, \ltriCtx{\Delta,
            \phl{x}:\thl{A\oplus B}}{z}{C}, \pairQ
          w{D \parr E}}
      }
      {
        \infer[\oplus_l]
        {\syncseq {\inl x{P'}} {\Gamma,  \ltriCtx{\Delta,
              \phl{x}:\thl{A\oplus B}}{z}{C}, \quadQ yDwE }}
        {
        \syncseq{P'}{\Gamma, \ltriCtx{\Delta}{z}{C},
          \phl{x}:\thl{A}, \quadQ yDwE}
        }
      }
    \end{displaymath}
    Clearly, we can make the two rules commute, obtaining: 
    \begin{displaymath}
      \infer[\oplus_l]
      {
        \syncseq{\inl x{\recv wy{P'}}}
          {\Gamma, \ltriCtx{\Delta, \pairQ x{A\oplus B}}{z}{C},   \pairQ
            w{D \parr E}}
      }
      {
        \infer[\parr]
        {
          \syncseq{\recv wy{P'}}
          {\Gamma, \ltriCtx{\Delta}{z}{C}, \pairQ xA,  \pairQ
            w{D \parr E}}
        }
        {
          {\syncseq {P'} {\Gamma,  \ltriCtx{{\Delta}}{z}{C}, \pairQ{x}{A},
               \quadQ yDwE}}
        }
      }
    \end{displaymath}

  \item $\tensor$. If the last applied rule is $\tensor$, then it must
    be such that:
    \[ \infer[\tensor]{ \syncseq{\send wyPQ} {\Gamma, \ltriCtx{\Delta,
            \phl{x}:\thl{A\oplus B}}{z}{C}, [\Delta_1]\Delta_2, \pairQ
          w{D \tensor E}} } { \syncseq{P}{\Delta_1, \pairQ yD} &
        \syncseq{Q}{\Gamma, \ltriCtx{\Delta,
            \phl{x}:\thl{A\oplus B}}{z}{C}, \Delta_2, \pairQ wE} }
    \]
    By induction hypothesis, there exists $\phl{Q'}$ such that
    \begin{displaymath}
      \infer[\tensor]{ \syncseq{\send wyP{\inl x{Q'}}} {\Gamma, \ltriCtx{\Delta,
            \phl{x}:\thl{A\oplus B}}{z}{C}, [\Delta_1]\Delta_2, \pairQ
          w{D \tensor E}} } { \syncseq{P}{\Delta_1, \pairQ yD} &
        \infer[\oplus_l]
        {\syncseq {\inl x{Q'}} {\Gamma,  \ltriCtx{\Delta,
              \phl{x}:\thl{A\oplus B}}{z}{C}, \Delta_2, \pairQ wE }}
        {\syncseq {Q'} {\Gamma,  \ltriCtx{{\Delta}}{z}{C}, \pairQ{x}{A},
            \Delta_2, \pairQ wE}}
      }
    \end{displaymath}
    Clearly, we can make the two rules commute, obtaining: 
    \begin{displaymath}
      \infer[\oplus_l]
      {
        \syncseq{\inl x{\send wyP{Q'}}}
          {\Gamma, \ltriCtx{\Delta, \pairQ x{A\oplus B}}{z}{C},  [\Delta_1]\Delta_2, \pairQ
            w{D \tensor E}}
      }
      {
        \infer[\tensor]
        {
          \syncseq{\send wyP{Q'}}
          {\Gamma, \ltriCtx{\Delta}{z}{C}, \pairQ xA, [\Delta_1]\Delta_2, \pairQ
            w{D \tensor E}}
        }
        {
          \syncseq{P}{\Delta_1, \pairQ yD} &
          {\syncseq {Q'} {\Gamma,  \ltriCtx{{\Delta}}{z}{C}, \pairQ{x}{A},
              \Delta_2, \pairQ wE}}
        }
      }
    \end{displaymath}

  \item $\oplus_l$/$\oplus_r$. Here we can have three subcases. If the
    last applied rule is $\oplus_l$ and it is indeed working on
    endpoint $x$ then we are done (this is the base case). Otherwise,
    rule $\oplus_l$ may be working either on a formulas inside the box
    $\ltriCtx{{\Delta}}{z}{C}$ (in $\Delta$) or in some other box. In
    both cases, we proceed as usual. Below, we look at the case where
    the formula is in another box.
    \[
      \infer[\oplus_l]
      {
        \syncseq{\inl wP} {\Gamma, \ltriCtx{\Delta,
            \phl{x}:\thl{A\oplus B}}{z}{C},
          \ltriCtx{\Delta', \pairQ w{D\oplus E}}yF
        }
      }
      {
        \syncseq{P}{\Gamma, \ltriCtx{\Delta,
            \phl{x}:\thl{A\oplus B}}{z}{C},
          \ltriCtx{\Delta'}yF, \pairQ w{D}
        }
      }
    \]
    By induction hypothesis, there exists $\phl{P'}$ such that
    \begin{displaymath}
      \infer[\oplus_l]
      {
        \syncseq{\inl w{\inl x{P'}}} {\Gamma, \ltriCtx{\Delta,
            \phl{x}:\thl{A\oplus B}}{z}{C},
          \ltriCtx{\Delta', \pairQ w{D\oplus E}}yF
        }
      }
      {
        \infer[\oplus_l]
        {
          \syncseq{\inl x{P'}}{\Gamma, \ltriCtx{\Delta,
              \phl{x}:\thl{A\oplus B}}{z}{C},
            \ltriCtx{\Delta'}yF, \pairQ w{D}
          }
        }
        {
          \syncseq{P'}{\Gamma, \ltriCtx{\Delta}{z}{C},
            \pairQ xA,
            \ltriCtx{\Delta'}yF, \pairQ w{D}
          }
        }
      }
    \end{displaymath}
    Clearly, we can make the two rules commute, obtaining: 
    \begin{displaymath}
      \infer[\oplus_l]
      {
        \syncseq{\inl w{\inl x{P'}}} {\Gamma, \ltriCtx{\Delta,
            \phl{x}:\thl{A\oplus B}}{z}{C},
          \ltriCtx{\Delta', \pairQ w{D\oplus E}}yF
        }
      }
      {
        \infer[\oplus_l]
        {
          \syncseq{\inl x{P'}}{\Gamma, \ltriCtx{\Delta}{z}{C},
            \pairQ xA,
            \ltriCtx{\Delta',  \pairQ w{D\oplus E}}yF
          }
        }
        {
          \syncseq{P'}{\Gamma, \ltriCtx{\Delta}{z}{C},
            \pairQ xA,
            \ltriCtx{\Delta'}yF, \pairQ w{D}
          }
        }
      }
    \end{displaymath}

  \item $\with$. In this case, we need to apply the induction
    hypothesis to both branches of the rule $\with$. If that is the
    last applied rule, then it must have the following format:
    \begin{displaymath}
      \infer[\with]
      {\syncseq{\Case wPQ} {\Gamma,\ltriCtx{\Delta, \pairQ x{A\oplus
              B}}{z}{C}, \oplus\Delta,\pairQ w{C \with D}}
      }{
    	\syncseq P {\Gamma, \ltriCtx{\Delta, \pairQ x{A\oplus
              B}}{z}{C}, \ltriCtx {\oplus\Delta}w{C}}
    	&
    	\syncseq Q {\Gamma, \ltriCtx{\Delta, \pairQ x{A\oplus B}}{z}{C}, \rtriCtx {\oplus\Delta}w{D}}
      }
    \end{displaymath}
    By induction hypothesis, there exist $\phl{P'}$ and $\phl{Q'}$
    such that: 
    \begin{displaymath}
      \infer[\with]
      {\syncseq{\Case w{\inl x{P'}}{\inl x{Q'}}} {\Gamma,\ltriCtx{\Delta, \pairQ x{A\oplus
              B}}{z}{C}, \oplus\Delta,\pairQ w{C \with D}}
      }
      {
        \infer[\oplus_l]
        {
          \syncseq {\inl x{P'}} {\Gamma, \ltriCtx{\Delta, \pairQ x{A\oplus
                B}}{z}{C}, \ltriCtx {\oplus\Delta}w{C}}
        }
        {
          \syncseq {P'} {\Gamma, \ltriCtx{\Delta}{z}{C}, \pairQ xA,
            \ltriCtx {\oplus\Delta}w{C}}
        }
    	&
    	\infer[\oplus_l]
        {
          \syncseq {\inl x{Q'}} {\Gamma, \ltriCtx{\Delta, \pairQ x{A\oplus
                B}}{z}{C}, \rtriCtx {\oplus\Delta}w{D}}
        }
        {
          \syncseq {Q'} {\Gamma, \ltriCtx{\Delta}{z}{C}, \pairQ xA,
            \rtriCtx {\oplus\Delta}w{D}}
        }
      }
    \end{displaymath}
    As in the previous cases, we can now commute the two rules:
    \begin{displaymath}
      \infer[\oplus_l]
      {
          \syncseq{\inl x{\Case w{P'}{Q'}}} {\Gamma,\ltriCtx{\Delta, \pairQ x{A\oplus
                B}}{z}{C}, \oplus\Delta,\pairQ w{C \with D}}
      }
      {
        \infer[\with]
        {
          \syncseq{\Case w{P'}{Q'}} {\Gamma,\ltriCtx{\Delta}{z}{C},
            \pair xA, \oplus\Delta,\pairQ w{C \with D}}
        }
        {
          \syncseq {P'} {\Gamma, \ltriCtx{\Delta}{z}{C}, \pairQ xA,
            \ltriCtx {\oplus\Delta}w{C}}
          &
          \syncseq {Q'} {\Gamma, \ltriCtx{\Delta}{z}{C}, \pairQ xA,
            \rtriCtx {\oplus\Delta}w{D}}
        }
      }
    \end{displaymath}

  \item $\query$. In the case of $\query$, it must be the case that:
    \begin{displaymath}
      \infer[\query]
      {
        \syncseq
        {\client xyP }
        {\Gamma, \ltriCtx{\Delta,
            \phl{x}:\thl{A\oplus B}}{z}{C}, \Query[\![\Delta, \phl
          x:\thl{\query A}]\!]\phl z:\thl{C}}
      }
      {
        \syncseq{P}
        {\Gamma, \ltriCtx{\Delta,
            \phl{x}:\thl{A\oplus B}}{z}{C}, \Query[\![\Delta]\!]\phl
          z:\thl{C}, \pairQ xA
        }
      }
    \end{displaymath}
    By induction hypothesis, there exists $\phl{P'}$ such that
    \begin{displaymath}
      \infer[\query]
      {
        \syncseq
        {\client wy{\inl x{P'}} }
        {\Gamma, \ltriCtx{\Delta,
            \phl{x}:\thl{A\oplus B}}{z}{C}, \Query[\![\Delta, \phl
          w:\thl{\query A}]\!]\phl t:\thl{C}}
      }
      {
        \infer[\oplus_l]
        {
          \syncseq{\inl x{P'}}
          {\Gamma, \ltriCtx{\Delta,
              \phl{x}:\thl{A\oplus B}}{z}{C}, \Query[\![\Delta]\!]\phl
            t:\thl{C}, \pairQ yA
          }
        }
        {
          \syncseq{P'}
          {\Gamma, \ltriCtx{\Delta}zC, \pair xA, 
            \Query[\![\Delta]\!]\phl
            t:\thl{C}, \pairQ yA
          }
        }
      }
    \end{displaymath}
    Finally, we can swap the two rules and obtain: 
    \begin{displaymath}
      \infer[\oplus_l]
      {
        \syncseq
        {\inl x{\client wy{P'}}}
        {\Gamma, \ltriCtx{\Delta,
            \phl{x}:\thl{A\oplus B}}{z}{C},
          \Query[\![\Delta, \phl
          w:\thl{\query A}]\!]\phl t:\thl{C}}
      }
      {
        \infer[\query]
        {
          \syncseq{\client wy{P'}}
          {\Gamma, \ltriCtx{\Delta}zC, \pair xA,
            \Query[\![\Delta, \phl
          w:\thl{\query A}]\!]\phl t:\thl{C}
          }
        }
        {
          \syncseq{P'}
          {\Gamma, \ltriCtx{\Delta}zC, \pair xA, 
            \Query[\![\Delta]\!]\phl
            t:\thl{C}, \pairQ yA
          }
        }
      }
    \end{displaymath}

  \item $\bang$. Not applicable.

  \end{itemize}
\end{proof}
  
\newpage
\section{Proof of
  Lemma~\ref{lem:parradmissibility} (part 1)}\label{app:parradmissibility}
{\bf Lemma~\ref{lem:parradmissibility} (part 1).}  Let
$\mathcal D::\syncseqtris P{\Gamma, \quadQ yAxB}$ such
that $\mathcal D$ is $\parr$-free. Then, there exists a $\parr$-free
proof $\mathcal E$ and $\phl Q$ such that
$\mathcal E::\syncseqtris {Q}{\Gamma, \pairQ x{A\parr
    B}}$.

\begin{proof}
  We proceed by induction on the size of the proof, and look at the
  last applied rule. Below, we only report the key case. The full
  proof is in Appendix~\ref{app:parradmissibility}

  \begin{itemize}
  \item $\one/\Did{Ax}/\parr/\bang$. Not applicable.
  \item $\perp$. In this case, we have:
    \begin{displaymath}
      \infer[\bot]
      {\syncseq {\wait z{P'}} {\Gamma, \quadQ yAxB, \pairQ{z}{\bot}}}
      {\syncseq {P'} {\Gamma, \quadQ yAxB, [\ast]}}
    \end{displaymath}
    By induction hypothesis, since $\mathcal D$ in
    $\mathcal D:: \phl {P'}$ is $\parr$-free, by
    applying $\bot$ again, we obtain:
    \begin{displaymath}
      \infer[\bot]
      {\syncseq {\wait z{Q'}} {\Gamma, \pairQ x{A\parr B}, \pairQ{z}{\bot}}}
      {\syncseq {Q'} {\Gamma, \pairQ x{A\parr B}, [\ast]}}
    \end{displaymath}
    
  \item $\tensor\parr^\perp_p$. If the last applied rule is
    $\tensor\parr^\perp_p$, we have two cases:
    \begin{itemize}
    \item $\quadQ yAxB$ is not principle for $\tensor\parr^\perp_p$:
      \begin{displaymath}
        \infer[\tensor\parr^\perp_p]
        {
          \syncseqtris \cdot
          {
            \Gamma, \quadQ yAxB, [\Delta_1]\Delta_2, \{\phl{x_i} : \thl{A_i \parr B_i}\}_i,
            \pairQ y{C
              \tensor D}
          }
        }
        {
          \syncseqtris {\cdot} {\Delta_1, \{\pairQ{x_i}{A_i}\}_i,  \pairQ yC}
          & \syncseqtris {\cdot} {\thl {\Gamma}, \quadQ yAxB, \Delta_2, \{\phl{x_i}:\thl{B_i}\}_i,
            \phl
            y:\thl D}
        }
      \end{displaymath}
      By induction hypothesis, since our proof is $\parr$-free, by
      applying $\tensor\parr^\perp_p$ again, we obtain:
      \begin{displaymath}
        \infer[\tensor\parr^\perp_p]
        {
          \syncseqtris \cdot
          {
            \Gamma, \pairQ x{A\parr B}, [\Delta_1]\Delta_2, \{\phl{x_i} : \thl{A_i \parr B_i}\}_i,
            \pairQ y{C
              \tensor D}
          }
        }
        {
          \syncseqtris {\cdot} {\Delta_1, \{\pairQ{x_i}{A_i}\}_i,  \pairQ yC}
          & \syncseqtris {\cdot} {\thl {\Gamma}, \pairQ x{A\parr B}, \Delta_2, \{\phl{x_i}:\thl{B_i}\}_i,
            \phl
            y:\thl D}
        }
      \end{displaymath}

    \item If $\quadQ yAxB$ is indeed modified by by
      $\tensor\parr^\perp_p$, then we have a base case:
      \begin{displaymath}
        \infer[\tensor\parr^\perp_p]
        {
          \syncseqbis \cdot
          {
            \Gamma, \quadQ yAxB, [\Delta_1]\Delta_2, \{\phl{x_i} : \thl{A_i \parr B_i}\}_i,
            \pairQ y{C
              \tensor D}
          }
        }
        {
          \syncseqbis {\cdot} {\Delta_1, A, \{\pairQ{x_i}{A_i}\}_i,  \pairQ yC}
          & \syncseqbis {\cdot} {\thl {\Gamma}, B, \Delta_2, \{\phl{x_i}:\thl{B_i}\}_i,
            \phl
            y:\thl D}
        }
      \end{displaymath}
      Clearly, we can change the rule $\tensor\parr^\perp_p$
      obtaining: 
      \begin{displaymath}
        \infer[\tensor\parr^\perp_p]
        {
          \syncseqbis \cdot
          {
            \Gamma, \pairQ x{A\parr B}, [\Delta_1]\Delta_2, \{\phl{x_i} : \thl{A_i \parr B_i}\}_i,
            \pairQ y{C
              \tensor D}
          }
        }
        {
          \syncseqbis {\cdot} {\Delta_1, A, \{\pairQ{x_i}{A_i}\}_i,  \pairQ yC}
          & \syncseqbis {\cdot} {\thl {\Gamma}, B, \Delta_2, \{\phl{x_i}:\thl{B_i}\}_i,
            \phl
            y:\thl D}
        }
      \end{displaymath}
    \end{itemize}
    
  \item $\with$. In this case, we have:
    \begin{displaymath}
      \infer[\with]
      {\syncseq{\Case x{P'}{Q'}} {\Gamma, \quadQ yAxB, \oplus\Delta,\pairQ x{A \with B}}
      }{
    	\syncseq {P'} {\Gamma, \quadQ yAxB, \ltriCtx {\oplus\Delta}x{A}}
    	&
    	\syncseq {Q'} {\Gamma, \quadQ yAxB, \rtriCtx {\oplus\Delta}x{B}}
      }
    \end{displaymath}
    By induction hypothesis, since both proofs in the premise of
    $\with$ are $\parr$-free, by applying $\with$ again, we obtain:
    \begin{displaymath}
      \infer[\with]
      {\syncseq{\Case x{P'}{Q'}} {\Gamma, \pairQ x{A\parr B}, \oplus\Delta,\pairQ x{A \with B}}
      }{
    	\syncseq {P'} {\Gamma, \pairQ x{A\parr B}, \ltriCtx {\oplus\Delta}x{A}}
    	&
    	\syncseq {Q'} {\Gamma, \pairQ x{A\parr B}, \rtriCtx {\oplus\Delta}x{B}}
      }
    \end{displaymath}

  \item $\oplus_l$ (similar for $\oplus_r$). In this case, we have:
    \begin{displaymath}
      \infer[\oplus_l]
      {\syncseq {\inl w{P'}} {\Gamma,  \quadQ yAxB, \ltriCtx{\Delta, \phl{w}:\thl{A\oplus B}}{z}{C} }}
      {\syncseq {P'} {\Gamma,  \quadQ yAxB, \ltriCtx{{\Delta}}{z}{C}, \pairQ{w}{A}}}
    \end{displaymath}
    By induction hypothesis, since $\mathcal D$ in
    $\mathcal D:: \phl {P'}$ is $\parr$-free, by
    applying $\oplus_l$ again, we obtain:
    \begin{displaymath}
      \infer[\oplus_l]
      {\syncseq {\inl w{Q'}} {\Gamma,  \pairQ x{A\parr B}, \ltriCtx{\Delta, \phl{w}:\thl{A\oplus B}}{z}{C} }}
      {\syncseq {Q'} {\Gamma,  \pairQ x{A\parr B}, \ltriCtx{{\Delta}}{z}{C}, \pairQ{w}{A}}}
    \end{displaymath}

  \item $\query$. In this case, we have:
    \begin{displaymath}
      \infer[?]{
    	\syncseq {\client xyP }{\Gamma, \quadQ yAxB, \Query[\![\Delta, \phl x:\thl{\query A}]\!]\phl z:\thl{C}} 
      }{
    	\syncseq {P}{\Gamma, \quadQ yAxB, \Query[\![\Delta]\!]\phl z:\thl{C}, \phl y:\thl A}
      }
    \end{displaymath}
    By induction hypothesis, since $\mathcal D$ in
    $\mathcal D:: \phl {P'}$ is $\parr$-free, by
    applying $\query$ again, we obtain:
    \begin{displaymath}
      \infer[?]{
    	\syncseq {\client xyP }{\Gamma, \pairQ x{A\parr B}, \Query[\![\Delta, \phl x:\thl{\query A}]\!]\phl z:\thl{C}} 
      }{
    	\syncseq {P}{\Gamma, \pairQ x{A\parr B}, \Query[\![\Delta]\!]\phl z:\thl{C}, \phl y:\thl A}
      }
    \end{displaymath}

  \end{itemize}
\end{proof}

\newpage
\section{Proof of
  Lemma~\ref{lem:perpadmissibility} (part 2)}\label{app:perpadmissibility}
{\bf Lemma~\ref{lem:perpadmissibility} (part2).}  Let
$\mathcal D::\syncseqtris P{\Gamma, \Star}$ such that
$\mathcal D$ is $\perp$-free. Then, there exists a $\perp$-free proof
$\mathcal E$ and $\phl Q$ such that
$\mathcal E::\syncseqtris {Q}{\Gamma, \pairQ x\perp}$.

\begin{proof}
  Similar to that of previous Lemma. 

  \begin{itemize}
  \item $\Did{Ax}/\perp/\bang$. Not applicable.

  \item $\one\bot^\perp$. This is the base case. If $\Gamma$ contains
    $\Star$ then we do nothing. Otherwise,
    \begin{displaymath}
      \infer[\one\bot^\perp]
      {
        \syncseqbis \cdot
        {
          \{\phl {x_i}:\thl{\perp}\}_i, 
          \phl{y}:\thl{\one}, \Star
        }
      }
      {
      }
    \end{displaymath}
    which can be replaced by
    \begin{displaymath}
      \infer[\one\bot^\perp]
      {
        \syncseqbis \cdot
        {
          \{\phl {x_i}:\thl{\perp}\}_i,  \phl{y}:\thl{\one},  \pairQ x\perp
        }
      }
      {
      }
    \end{displaymath}
    
  \item $\tensor\parr^\perp_p$. In this case, we have:
      \begin{displaymath}
        \infer[\tensor\parr^\perp_p]
        {
          \syncseqbis \cdot
          {
            \Gamma, \quadQ yAxB, [\Delta_1]\Delta_2, \{\phl{x_i} : \thl{A_i \parr B_i}\}_i,
            \pairQ y{C
              \tensor D}
          }
        }
        {
          \syncseqbis {\cdot} {\Delta_1, \{\pairQ{x_i}{A_i}\}_i,  \pairQ yC}
          & \syncseqbis {\cdot} {\thl {\Gamma}, [\ast], \Delta_2, \{\phl{x_i}:\thl{B_i}\}_i,
            \phl
            y:\thl D}
        }
      \end{displaymath}
      By induction hypothesis, since our proof is $\perp$-free, by
      applying $\tensor\parr^\perp_p$ again, we obtain:
      \begin{displaymath}
        \infer[\tensor\parr^\perp_p]
        {
          \syncseqbis \cdot
          {
            \Gamma, \pairQ x{\perp}, [\Delta_1]\Delta_2, \{\phl{x_i} : \thl{A_i \parr B_i}\}_i,
            \pairQ y{C
              \tensor D}
          }
        }
        {
          \syncseqbis {\cdot} {\Delta_1, \{\pairQ{x_i}{A_i}\}_i,  \pairQ yC}
          & \syncseqbis {\cdot} {\thl {\Gamma}, \pairQ x{\perp}, \Delta_2, \{\phl{x_i}:\thl{B_i}\}_i,
            \phl
            y:\thl D}
        }
      \end{displaymath}

  \item $\parr$. In this case, we have:
    \begin{displaymath}
      \infer[\parr]{
        \syncseq{\recv xyP} {\Gamma, [\ast], \pairQ x {A \parr B}}
      }{
        \syncseq P {\Gamma, [\ast], \quadQ yAxB}
      }     
    \end{displaymath}
    And, by induction hypothesis, 
    \begin{displaymath}
      \infer[\parr]{
        \syncseq{\recv xyP} {\Gamma, \pairQ x\perp, \pairQ x {A \parr B}}
      }{
        \syncseq P {\Gamma, \pairQ x\perp, \quadQ yAxB}
      }     
    \end{displaymath}
    
  \item $\with$. In this case, we have:
    \begin{displaymath}
      \infer[\with]
      {\syncseq{\Case x{P'}{Q'}} {\Gamma, \quadQ yAxB, \oplus\Delta,\pairQ x{A \with B}}
      }{
    	\syncseq {P'} {\Gamma, [\ast], \ltriCtx {\oplus\Delta}x{A}}
    	&
    	\syncseq {Q'} {\Gamma, [\ast], \rtriCtx {\oplus\Delta}x{B}}
      }
    \end{displaymath}
    By induction hypothesis, since both proofs in the premise of
    $\with$ are $\perp$-free, by applying $\with$ again, we obtain:
    \begin{displaymath}
      \infer[\with]
      {\syncseq{\Case x{P'}{Q'}} {\Gamma, \pairQ x\perp, \oplus\Delta,\pairQ x{A \with B}}
      }{
    	\syncseq {P'} {\Gamma, \pairQ x\perp, \ltriCtx {\oplus\Delta}x{A}}
    	&
    	\syncseq {Q'} {\Gamma, \pairQ x\perp, \rtriCtx {\oplus\Delta}x{B}}
      }
    \end{displaymath}

  \item $\oplus_l$ (similar for $\oplus_r$). In this case, we have:
    \begin{displaymath}
      \infer[\oplus_l]
      {\syncseq {\inl w{P'}} {\Gamma,  [\ast], \ltriCtx{\Delta, \phl{w}:\thl{A\oplus B}}{z}{C} }}
      {\syncseq {P'} {\Gamma,  [\ast], \ltriCtx{{\Delta}}{z}{C}, \pairQ{w}{A}}}
    \end{displaymath}
    By induction hypothesis, since $\mathcal D$ in
    $\mathcal D:: \phl {P'}$ is $\perp$-free, by
    applying $\oplus_l$ again, we obtain:
    \begin{displaymath}
      \infer[\oplus_l]
      {\syncseq {\inl w{Q'}} {\Gamma,  \pairQ x{\perp}, \ltriCtx{\Delta, \phl{w}:\thl{A\oplus B}}{z}{C} }}
      {\syncseq {Q'} {\Gamma,  \pairQ x{\perp}, \ltriCtx{{\Delta}}{z}{C}, \pairQ{w}{A}}}
    \end{displaymath}

  \item $\query$. In this case, we have:
    \begin{displaymath}
      \infer[?]{
    	\syncseq {\client xyP }{\Gamma, [\ast], \Query[\![\Delta, \phl x:\thl{\query A}]\!]\phl z:\thl{C}} 
      }{
    	\syncseq {P}{\Gamma, [\ast], \Query[\![\Delta]\!]\phl z:\thl{C}, \phl y:\thl A}
      }
    \end{displaymath}
    By induction hypothesis, since $\mathcal D$ in
    $\mathcal D:: \phl {P'}$ is $\perp$-free, by
    applying $\query$ again, we obtain:
    \begin{displaymath}
      \infer[?]{
    	\syncseq {\client xyP }{\Gamma, \pairQ x{\perp},
          \Query[\![\Delta, \phl x:\thl{\query A}]\!]\phl z:\thl{C}}
      }{
    	\syncseq {P}{\Gamma, \pairQ x{\perp}, \Query[\![\Delta]\!]\phl z:\thl{C}, \phl y:\thl A}
      }
    \end{displaymath}

  \end{itemize}
\end{proof}

%


\end{document}